\newif\iftr\trtrue
\newif\ifPACM\PACMtrue

\documentclass[acmsmall,nonacm]{acmart}

\settopmatter{printfolios=true,printccs=false,printacmref=false}

\acmJournal{PACMPL}
\acmVolume{MPI-SWS Tech Report}
\acmNumber{}
\acmArticle{}
\acmYear{2021}
\acmMonth{11}
\startPage{1}

\setcopyright{ccby}

\bibliographystyle{ACM-Reference-Format}
\citestyle{acmauthoryear}

\usepackage{amsmath}
\usepackage{amsthm}
\usepackage{bbm}
\usepackage{stmaryrd}
\usepackage{mathtools}
\usepackage{mathpartir}
\usepackage{pl-syntax}
\usepackage{doi}
\usepackage{ifthen}
\usepackage{lo_chor}

\frenchspacing

\newcommand{\newcommenter}[3]{%
  \newcommand{#1}[1]{%
    \textcolor{#2}{\small\textsf{[#3: {##1}]}}%
  }%
}
\newcommenter{\akh}{purple}{AKH}
\newcommenter{\dg}{blue}{Deepak}

\theoremstyle{theorem}
\newtheorem{thm}{Theorem}

\newtheorem{cor}{Corollary}
\theoremstyle{definition}

\newtheorem{ex}{Example}


\title[Pirouette: Higher-Order Typed Functional Choreographies]{{\small Technical Report MPI-SWS-2021-004 \hfill November 4, 2021}\\Pirouette: Higher-Order Typed Functional Choreographies}
\date{}

\author{Andrew K. Hirsch}
\orcid{0000-0003-2518-614X}
\affiliation{
  \institution{Max Planck Institute for Software Systems}
  \city{Saarland Informatics Campus, Saarbr\"ucken}
  \country{Germany}
}
\email{akhirsch@mpi-sws.org}

\author{Deepak Garg}
\affiliation{
  \institution{Max Planck Institute for Software Systems}
  \city{Saarland Informatics Campus, Saarbr\"ucken}
  \country{Germany}
}
\email{dg@mpi-sws.org}

\begin{document}

\begin{abstract}
  We present Pirouette, a language for typed higher-order functional choreographic programming.
Pirouette offers programmers the ability to write a centralized functional program and compile it via \emph{endpoint projection} into programs for each node in a distributed system.
Moreover, Pirouette is defined generically over a (local) language of messages, and lifts guarantees about the message type system to its own.
Message type soundness also guarantees deadlock freedom.
All of our results are verified in Coq.


\end{abstract}

\maketitle

\section{Introduction}
\label{sec:introduction}

Higher-order typed functional programming has proven to be a powerful technique for writing single-machine programs.
It allows for strong guarantees through types along with code reuse through higher-order programming.
However, currently, writing \emph{distributed} programs using functional programming requires writing separate code for each node in the distributed system, then using \textsf{send} and \textsf{receive} expressions to transmit data between nodes.
This makes it easy to write code that \emph{deadlocks}, or gets stuck because patterns of \textsf{send}s and \textsf{receive}s do not match.

\emph{Session types}~\citep{ScalasY19,Wadler12,DeYoungCPT12,DardhaGS12,ToninhoCP12,CarboneHY07} offer one solution.
Session types describe the pattern of \textsf{send}s and \textsf{receive}s in a program, allowing a compiler to catch the possibility of deadlock.
However, session types are complicated to work with, and the programmer is still left trying to match up \textsf{send} and \textsf{receive} patterns by hand.

\emph{Choreographic programming}~\citep{Montesi21,Cruz-FilipeM17,Cruz-FilipeM17c,Montesi13,LaneseMZ13,DallaPredaGGLM15} offers another solution.
This is a programming paradigm that writes the distributed program as a single program, ensuring that \textsf{send}s and \textsf{receive}s match by combining \textsf{send} and \textsf{receive} into one construct.
Choreographic languages guarantee \emph{deadlock freedom by design}, so the programmer gets strong guarantees on the communication patterns in their program.
Until now, however, choreographic programming forced the user into a lower-order, imperative, and un(i)typed universe.
This paper presents Pirouette, the first language for choreographic programming which is also higher-order, functional, and typed.

Consider a standard example: the bookseller protocol.
In this protocol, a buyer is looking to buy a book, so they send the book's title to a bookseller.
The bookseller returns its price, and the buyer checks if that price is within their budget.
If they can afford the book, they inform the bookseller who tells them a delivery date for their book; otherwise, they tell the seller they will not buy the book.
After this, the protocol ends.

To write this program with session types, we first create a type describing the interactions from both points of view.
That is, we create a type for the buyer that includes ``send the title to the seller, and then receive back a price'' along with a (dual) type for the seller that includes ``receive the title from the buyer, then send them the price.''
First, we must ensure that the two types match: when the buyer sends to the seller, the seller receives from the buyer.
These types do not describe details like ``if the book is within budget, tell the seller yes.''
Instead, it says ``either send the seller `yes' and receive a delivery date, or send them `no' and end the protocol''.
We then write a program for the buyer and another for the seller, which contain all the nasty details, and a type-checker ensures that these programs have the correct communication patterns.

In contrast, a choreographic programmer would write a single program, and expect the language and compiler to ensure that the resulting programs for the buyer and bookseller are deadlock free.
For instance, such a programmer might write the following:
$$
\Send*{\Buyer}{\textsf{book\_title}}{\Seller}{b}{
  \Send*{\Seller}{\textsf{prices}[b]}{\Buyer}{p}{
    \ChorIf*{\Buyer}{(p < \textsf{budget})}{
      \Sync*{\Buyer}{\LChoice}{\Seller}{
        \Send*{\Seller}{\textsf{get\_delivery\_date}(b)}{\Buyer}{\textsf{d\_date}}{
          \Own{\Buyer}{\textsf{(Some d\_date)}}
        }
      }
    }{
      \Sync*{\Buyer}{\RChoice}{\Seller}{
        \Own{\Buyer}{\textsf{None}}
      }
    }
  }
}
$$
Note that the choreography is a program, not merely the type of another program.
In this program, \Buyer sends a string \textsf{book\_title} to \Seller, who binds that string to the variable~$b$.
\Seller then sends a message to \Buyer, which is the result of looking up the price of the book in a map that \Seller holds.
Note that this message is written in a separate programming language, which defines the map-lookup syntax.
Once \Buyer has the price, they decide whether or not to buy based on their budget.
Either way, they inform \Seller whether they took the left~(\LChoice) or the right~(\RChoice) branch.
In the left branch, \Seller calls some function in the local language to get the delivery date.
(We use the syntax $f(x)$ for function calls in the local syntax to emphasize that these are function calls in the local language, whereas Pirouette uses the syntax $\AppGlobal{F}{X}$ for its function calls.)
\Seller transmits the result of this function call to \Buyer, who returns the date, wrapped so that the entire program returns a value on buyer of type $\ExprOptionType{\Date}$.
In the right branch, \Buyer simply returns \textsf{None}.

Note that communication is always matched up: sends and receives are combined, which guarantees deadlock freedom.
The return value on \Buyer makes this closer to functional programming than previous work on choreographies.
Note that choreographies returning values and having standard simple types (i.e., not session types) are both innovations of this work.

Choreographies excel when more than two parties must interact.
For instance, \citet{CarboneM13} suggest a change to the bookseller protocol similar to the following: \Seller sends the price to two buyers who want to share the purchase of a book, \BI and \BII.
\BII then tells \BI how much they are willing to contribute to the purchase, who then responds to \Seller as before.
We can write a modified choreography as follows:
$$
\Send*{\BI}{\textsf{book\_title}}{\Seller}{b}{
  \Send*{\Seller}{\textsf{prices}[b]}{\BI}{p}{
    \Send*{\Seller}{\textsf{prices}[b]}{\BII}{p}{
      \Send*{\BII}{(p / 2)}{\BI}{\textsf{contrib}}{
        \ChorIf*{\BI}{(p - \textsf{contrib} < \textsf{budget})}{
          \Sync*{\BI}{\LChoice}{\Seller}{
            \Send*{\Seller}{\textsf{get\_delivery\_date}(b)}{\BI}{\textsf{d\_date}}{
              \Own{\BI}{\textsf{(Some d\_date)}}
            }
          }
        }{
          \Sync*{\BI}{\RChoice}{\Seller}{
            \Own{\BI}{\textsf{None}}
          }
        }
      }
    }
  }
}
$$
In order to use session types to describe this communication pattern, we must use \emph{multiparty session types}, a major increase in complexity~\citep{ScalasY19,CarboneHY07}.
However, choreographies handle this without difficulty.

The choreographies explained so far can be expressed in prior work.
This paper introduces \emph{functional choreographies}.
To understand the need for these, notice how we can abstract out a pattern from our two protocols.
A buyer (say~\Buyer) sends a book title to \Seller, who looks up its price.
There is then some process, possibly involving communications, which results in a decision by \Buyer, who then informs \Seller of that decision.
If \Buyer decides to buy the book, they get back a delivery date, otherwise they get nothing.
We can modify the choreography so that the decision is its own function as follows:
$$
\makeatletter \addtocounter{@lochor@numlevels}{1}
\FunGlobal{\textsf{Bookseller}}{F}{\Send*{\Buyer}{\textsf{book\_title}}{\Seller}{b}{
    \DefLocal*{\Buyer}{\textsf{decision}}{
      \AppLocal{\Seller}{F}{(\textsf{prices}[b])}
    }{
      \ChorIf*{\Buyer}{\textsf{decision}}{
        \Sync*{\Buyer}{\LChoice}{\Seller}{
          \Send*{\Seller}{\textsf{get\_delivery\_date}(b)}{\Buyer}{\textsf{the\_date}}{
            \Own{\Buyer}{\textsf{(Some the\_date)}}
          }
        }
      }{
        \Sync*{\Buyer}{\RChoice}{\Seller}{
          \Own{\Buyer}{\textsf{None}}
        }
      }
    }
  }
}
\addtocounter{@lochor@numlevels}{-1}\makeatother
$$
Here, $F$ is a function with a choreographic body which takes a price on \Seller as input and outputs a Boolean on \Buyer.
We can implement either of the previous protocols by changing~$F$:
\begin{mathpar}
\makeatletter \addtocounter{@lochor@numlevels}{1}
  \infer{}{\FunLocal{\Seller}{F}{p}{
    \Send*{\Seller}{p}{\Buyer}{p}{
      \Own{\Buyer}{(p < \textsf{budget})}
    }
  }}\and
\infer{}{\FunLocal{\Seller}{F}{p}{
  \Send*{\Seller}{p}{\Buyer}{p}{
    \Send*{\Seller}{p}{\BII}{p}{
      \Send*{\BII}{(p / 2)}{\Buyer}{\textsf{contrib}}{
        \Own{\Buyer}{(p - \textsf{contrib} < \textsf{budget})}
      }
    }
  }
}}
\addtocounter{@lochor@numlevels}{-1}\makeatother
\end{mathpar}
Moreover, we can give $F$ a type which enforces that it takes a price located on \Seller to a Boolean value on \Buyer, as desired.
We write this type $\LocalFunType{\Seller}{\Price}{\Own{\Buyer}{\Bool}}$.
Then we can give \textsf{Bookseller} the type $\GlobalFunType{(\LocalFunType{\Seller}{\Price}{\Own{\Buyer}{\Bool}})}{\Own{\Buyer}{\Date}}$, indicating that it is a higher-order function which expects an input that abstracts the decision in the bookseller example.
This illustrates the programming convenience gained by combining higher-order functional programming with the choreographic-programming paradigm.
For instance, previous works on choreographies could not treat choreographic functions as data, preventing this form of abstraction~\cite{Cruz-FilipeM17c}.

\paragraph{Contributions}
As stated before, Pirouette is the first (higher-order) choreographic functional programming language.
Pirouette features simple located types of the form $\Own{\Buyer}{\Bool}$ and familiar type constructs like function spaces.
Following the choreography philosophy, Pirouette guarantees deadlock freedom by design, without the use of complicated session types.
Previously, higher-order choreographic programming was only supported through the informal development in Choral~\cite{GiallorenzoMP20}, which is useful despite offering no guarantees.
In the following, we point out some salient technical contributions of our design.

First, Pirouette is \emph{generic} in the language of messages.
Note that, sometimes, locations send non-atomic messages like $p/2$, which can be arbitrary expressions.
In previous work, messages were either from a particular (and very simple) language~\citep{Cruz-FilipeM17} or were assumed to compute to a value in finite time~\citep{CarboneM13}.
Pirouette is defined generically on top of \emph{any} message language (which we refer to as the \emph{local}~language), with very few syntactic constraints.
We further show that a sound type system for the expression language can be lifted to the level of choreographies.

Two key features of choreographies are \emph{out-of-order execution} and \emph{endpoint projection}.
We can think of a choreography as an arbitrary interleaving of communication between programs running at different locations.
However, programs with the same communication pattern can have other interleavings as well.
Choreographies, despite \emph{syntactically} being one arbitrary interleaving are able to \emph{semantically} represent all interleavings by allowing out-of-order execution.
We follow previous work by defining an equivalence relation $\Equiv$ on choreographies to reason about out-of-order execution.
However, unlike previous work, we are able to show that defining our operational semantics by appealing to $\Equiv$ gives a weaker-than-desired semantics.
Pirouette's semantics provides a stronger form of out-of-order execution via a novel combination of a labeled-transition system and a \emph{block set}, which guarantees that out-of-order execution does not violate causality.

Endpoint projection formalizes the intuition of a choreography as representing a collection of programs running at different locations by extracting a program for each location (later, we use the term \emph{control program} for the projected program at each location).
This justifies choreographies as a way of writing distributed programs, and allows us to state and prove that Pirouette programs are deadlock free.
We follow a new design principle for choreographies: \emph{equivalence begets equality}.
That is, equivalent choreographies always project to exactly the same program for each location.
In previous work, equivalence was used more liberally which prevented such a clean theorem.
For instance, \citet{Cruz-FilipeM17} use equivalence to reason about recursion unfolding, so equivalent programs may project to programs where unfolding has and has not been applied.

Moreover, somewhat surprisingly, we show that deadlock freedom is a corollary to the soundness of our type system as well as the soundness and completeness of endpoint projection.
Previous work was able to take advantage of the assumption that messages always produce a value along with the lower-order nature of their choreographies.
Because our choreographies are higher-order and our messages are not assumed to compute, we have to appeal to the soundness of our type system to ensure that our choreographies are always able to take a step.

We have formalized our entire development in Coq and mechanically verified proofs of all of our theorems.
As mentioned by \citet{Cruz-FilipeMP19}, there have been several instances of flaws found in proofs of major theorems in concurrency theory in recent years.
Therefore, in order to show that Pirouette's guarantees about deadlock freedom are trustworthy, we formalize our arguments.
In soon-to-be-published work, a small choreography language has been formalized along with its endpoint-projection operation~\citep{CruzFilipeMP21a,CruzFilipeMP21b}.
However, Pirouette is a much more substantial language than was formalized in that work.

To summarize, we make the following contributions:
\begin{itemize}
\item We introduce Pirouette, the first functional choreography language.
  We present its operational semantics and a (simple) type system.
  The operational semantics allow for out-of-order execution, mimicking the execution of a distributed program.
  These semantics are based on a novel idea of blocking sets -- locations that are blocked on other operations and cannot reduce (Section~\ref{sec:funct-chor}).
\item We describe a general set of constraints on the local (message) language, allowing almost any expression-based language to be used as the local language (Section~\ref{sec:system-model}).
  Type soundness for functional choreographies lifts from type soundness for the local~language (Section~\ref{sec:type-system}).
\item We study equivalence for functional choreographies, which allows for reasoning about out-of-order execution.
  We show that defining out-of-order execution based on this equivalence leads to a weaker-than-desired operational semantics when local reduction is allowed (Section~\ref{sec:equational-reasoning}).
\item We show how endpoint projection extracts programs with explicit send and receive constructs from functional choreographies.
  This translation is sound and complete.
  Moreover, we show that well-typed, projected systems are deadlock free by design, and that this deadlock freedom follows from the soundness of the type system and the soundness and completeness of endpoint projection (Section~\ref{sec:comp-chor}).
\item All our metatheory has been formalized in Coq and all our theorems have been mechanically verified.
  We discuss the particularities of our Coq implementation in Section~\ref{sec:notes-coq-code}.
\end{itemize}

Concurrent, independent work~\citep{CruzFilipeGLMP21} also explores higher-order functional choreographies.
However, they have a very different design philosophy, which leads to a very different technical setup.
For more comparison, see the discussion of related work in Section~\ref{sec:related-work}.

\section{System Model}
\label{sec:system-model}

We begin by discussing the assumptions we make about the system that Pirouette programs run on.
We have made these assumptions as lightweight as possible.
In particular, we assume that the system consists of a collection of nodes, each of which can run local programs and which can send and receive messages from other nodes.
We also allow local programs to be typed, so that we can lift the type system to the choreography level.

\subsection{Locations}
\label{sec:locations}

We assume a set \Locations of \emph{locations}, which we write as \GenericLoc, \GenericLocI, \GenericLocII, and so on.
These names are treated atomically, so we do not assume any additional operations on locations.
However, we do assume that location equality is decidable, so we can distinguish different locations.

Intuitively, locations refer to nodes in a distributed system.
However, it's worth noting that there is nothing that prevents a node from being a thread, or a process, or any other entity that can run Turing-complete programs and send and receive messages.

\subsection{Communication}
\label{sec:communication}

We assume that every location can communicate with every other location synchronously.
That is, if $\GenericLocI$ sends a message to $\GenericLocII$, then $\GenericLocI$ does not continue until $\GenericLocII$ has received the message, and then $\GenericLocI$ may continue.
Message sending is instantaneous and certain: messages do not get ``lost in the air.''

Nodes should be able to send and receive two kinds of messages: values of local programs (described below) and two special \emph{synchronization messages}, written $\LChoice$ and $\RChoice$.
These will be used in the choreography language to ensure that different locations stay in lock-step with each other.

We also require that each node be able to run a functional \emph{control} program which can send and receive messages, while also running programs in the local language.
We describe the precise requirements in Section~\ref{sec:control-language}, when we have the necessary background.

\subsection{Local Programs}
\label{sec:local-programs}

We assume that every node runs programs in a \emph{local} expression-based language.
Our design treats this language generically, requiring only that it allows certain operations and equations.

Our first requirement is that expressions include \emph{variables}.
We model messages as values, and receipt as binding a value to a variable.
We implement variable binding via substitution, which we write $\Subst{e_1}{\SingleSubst{x}{e_2}}$.
Substitution must satisfy three standard equations:
\begin{itemize}
\item $\Subst{x}{\SingleSubst{x}{e}} = e$
\item $\Subst{e}{\SingleSubst{x}{x}} = e$
\item $\Subst{\Subst{e_1}{\SingleSubst{x}{e_2}}}{\SingleSubst{y}{e_3}} = \Subst{\Subst{e_1}{\SingleSubst{y}{e_3}}}{\SingleSubst{x}{(\Subst{e_2}{\SingleSubst{y}{e_3}})}}$ whenever $x \not\in \FV(e_3)$
\end{itemize}

We require a function $FV(e)$ which returns the set of free variables in $e$.
We require that if $x \not\in \FV(e_1)$, then $\Subst{e_1}{\SingleSubst{x}{e_2}} = e_1$.

We only send values as messages, so we assume a predicate $\ExprValue{e}$ which determines whether $e$ is a value.
We require two special values, $\True$ and $\False$, which we use for branching in choreographies.
We additionally require that all values are closed, in order to allow them to be sent.
To see why, imagine that we send some open expression $e$ from $\GenericLocI$ to $\GenericLocII$.
Since $e$ is open, $e$ contains some free variable $x$, which refers to some data on $\GenericLocI$.
However, when we send $e$ to $\GenericLocII$, this information is lost and $x$ might be captured by a binder in $\GenericLocII$'s program.

Finally, we require that an operational semantics be defined relationally for local expressions.
We write $\ExprStepsTo{e_1}{e_2}$ to denote that $e_1$ steps to $e_2$ in the operational semantics.
The only requirement on this semantics is that values do not take steps: if $\ExprValue{v}$, then $\NotExprStepsTo{v}{e}$ for any $e$.

\subsubsection{Examples}
\label{sec:loc-lang-examples}

Simply-typed functional languages easily satisfy the requirements above.
However, any expression-based language can be used, not only ones which define functions.
We discuss two examples here.

\begin{ex}[Call-by-Value $\lambda$-Calculus]
  \label{ex:cbv-lambda}
 The call-by-value $\lambda$-calculus, extended with recursive functions, Boolean values and if-then-else expression, almost fits our requirements.
  However, we must restrict values to be closed.
\end{ex}

\begin{ex}[A Natural-Number Language]
  \label{ex:nl}
  We provide a language with the following syntax:
  \begin{syntax}
    \category[Expressions]{e} \alternative{x} \alternative{0} \alternative{S\,e} \alternative{\True} \alternative{\False}
  \end{syntax}
  Intuitively, $0$ stands for $0$ as a natural number, $S$ is the successor operation on natural numbers, and \True and \False stand for Boolean truth and falsity, respectively.
  Since there are no binders, substitution is easy to define.
  Any closed term is a value.

  This is similar to the language of messages in~\citet{Cruz-FilipeM17}, but modified to fit our requirements.
  In particular, we (a) allow more than one variable, which is treated via substitution instead of as a reference to state, and (b) add the \True and \False terms.

  We include this language to demonstrate that our requirements do not force the choice of $\lambda$-calculus as an expression language.
  Indeed, we will see later that we can equip this language with a type system which results in a sound choreographic type system.
\end{ex}

\subsection{Typed Local Programs}
\label{sec:typed-local-program}

The guarantees we provide for Pirouette depend on the guarantees provided by the local~language's type system.
If we do not consider the local type~system, then we are able to provide a sound and complete translation to a language with explicit send and receive constructs.
If the local type~system guarantees preservation, but not progress (as in a unityped system), then we are also able to prove preservation of the choreographic type system.
Finally, soundness of the local type~system implies not only type soundness for Pirouette, but also deadlock freedom.

We allow the types for the local language to be any language of simple types.
We require that \True and \False have the same type, which we refer to as \Bool.
Note, however, that this type may not be called \Bool, as we will see in Example~\ref{ex:unityped-lambda}.

We assume that the type system can be presented in terms of a judgment $\ExprTyping{\Gamma}{e}{\ExprTypeTau}$, where $\Gamma$ is a sequence of variable-type pairs written $\ExprTypingPair{x}{\ExprTypeTau}$.
We additionally require that typing be unique: if $\ExprTyping{\Gamma}{e}{\ExprTypeTauI}$ and $\ExprTyping{\Gamma}{e}{\ExprTypeTauII}$, then $\ExprTypeTauI = \ExprTypeTauII$.
We only use this requirement to show that types play well with equational reasoning (Theorem~\ref{thm:equiv-types}).
While this requirement is unusual, we note that it is usually satisfied by simple-type systems.
We conjecture that this requirement could be removed with a small change to our choreographic language; we return to this point in Section~\ref{sec:equational-reasoning}.

The other requirements can all be framed as admissible rules; those rules can be found in Figure~\ref{fig:local-type-rules}.
Since these rules should be admissible, we do \emph{not} require that these are actual rules in the type system; merely that we can use them to build typing proofs.
This will be important when building typing proofs for choreographies on top of typing proofs for the local language.

\begin{figure}
  \begin{mathpar}
    \infer[Var]{ }{\ExprTyping{\Gamma, \ExprTypingPair{x}{\ExprTypeTau}}{x}{\ExprTypeTau}} \and
    \infer[True]{ }{\ExprTyping{\Gamma}{\True}{\Bool}} \and \infer[False]{ }{\ExprTyping{\Gamma}{\False}{\Bool}} \\
    \infer[Exchange]{\ExprTyping{\Gamma, \ExprTypingPair{x}{\ExprTypeTauI}, \ExprTypingPair{y}{\ExprTypeTauII}}{e}{\ExprTypeTauIII}}
    {\ExprTyping{\Gamma, \ExprTypingPair{y}{\ExprTypeTauII}, \ExprTypingPair{x}{\ExprTypeTauI}}{e}{\ExprTypeTauIII}} \and
    \infer[Weakening]{\ExprTyping{\Gamma}{e}{\ExprTypeTauI}}{\ExprTyping{\Gamma, \ExprTypingPair{y}{\ExprTypeTauII}}{e}{\ExprTypeTauI}}\\
    \infer[Strengthening]{\ExprTyping{\Gamma, \ExprTypingPair{x}{\ExprTypeTau}}{e}{\ExprTypeTau}\\ x \notin \FV(e)}{\ExprTyping{\Gamma}{e}{\ExprTypeTau}} \and
    \infer[Substitution]{\ExprTyping{\Gamma}{e_1}{\ExprTypeTauI}\\ \ExprTyping{\Gamma, \ExprTypingPair{x}{\ExprTypeTauI}}{e_2}{\ExprTypeTauII}}
    {\ExprTyping{\Gamma}{\Subst{e_2}{\SingleSubst{x}{e_1}}}{\ExprTypeTauII}}
  \end{mathpar}
  \caption[Required-Admissible Rules]{Required-Admissible Rules for Local Type Systems}
  \label{fig:local-type-rules}
\end{figure}

First, we require that variables be typed according to the context, and that \True and \False be typed in the \Bool type.
We also require that the standard structural rules of \textsc{Exchange} and \textsc{Weakening} be allowed.
The \textsc{Strengthening} rule is unusually-presented, but is common in most type systems: non-free variables can be safely removed from the typing context.
Finally, we require the standard property of \textsc{Substitution}: substituting a well-typed variable into a well-typed expression yields a well-typed expression.

\paragraph{Sound Type Systems}
\label{sec:sound-type-systems}
We say that a local type system is \emph{sound} if it additionally satisfies the following three requirements:
\begin{itemize}
\item (\textsc{Boolean Invertibility}) The type~\Bool is invertible for values: if $v$ is a value and $\ExprTyping{ }{v}{\Bool}$ then either $v = \True$ or $v = \False$.
\item (\textsc{Preservation}) If $\ExprStepsTo{e_1}{e_2}$ and $\ExprTyping{\Gamma}{e_1}{\ExprTypeTau}$, then $\ExprTyping{\Gamma}{e_2}{\ExprTypeTau}$.
\item (\textsc{Progress}) If $e_1$ is closed and $\ExprTyping{ }{e_1}{\ExprTypeTau}$, then either $e_1$ is a value or there is an $e_2$ such that $\ExprStepsTo{e_1}{e_2}$.
\end{itemize}

\subsubsection{Examples}
\label{sec:typed-local-examples}
Each of the languages that served as examples above can be given type systems that satisfy the requirements above.
In fact, we can give the $\lambda$-calculus two type systems, though one is not sound.

\begin{ex}[Simply-Typed $\lambda$-Calculus]
  \label{ex:stlc}
  This is the paradigmatic example of a typed local language.
  None of the rules in Figure~\ref{fig:local-type-rules} are difficult; most of the proofs are completely standard.
  Moreover, it is sound: the progress and preservation proofs are standard, and the invertability proof is easy.
\end{ex}

\begin{ex}[Typed Natural-Numbers]
  \label{ex:typed-nats}
  We use two types: \Int and \Bool.
  Then, we have the following rules:
  \begin{mathpar}
    \infer{\ExprTypingPair{x}{\ExprTypeTau} \in \Gamma}{\ExprTyping{\Gamma}{x}{\tau}} \and
    \infer{ }{\ExprTyping{\Gamma}{0}{\Int}} \and \infer{\ExprTyping{\Gamma}{t}{\Int}}{\ExprTyping{\Gamma}{S\,t}{\Int}} \and
    \infer{ }{\ExprTyping{\Gamma}{\True}{\Bool}} \and \infer{ }{\ExprTyping{\Gamma}{\False}{\Bool}}
  \end{mathpar}
  This makes the rules in Figure~\ref{fig:local-type-rules} easy to prove, and soundness is trivial.
\end{ex}

\begin{ex}[Unityped $\lambda$-Calculus]
  \label{ex:unityped-lambda}
  Using the idea of ``untyped is unityped'', we develop a type system for $\lambda$-calculus with only one type, $\ast$, and only one typing rule: $$\infer[Trivial]{ }{\ExprTyping{\Gamma}{e}{\ast}}$$
  Here, we satisfy the requirement to have a type for Boolean values by setting $\Bool = \ast$.
  The rules in Figure~\ref{fig:local-type-rules} are trivial, as is \textsc{Preservation}.
  However, this system is \emph{not} sound: $\Bool$ is not invertible, and uni-typed $\lambda$-calculus does not satisfy progress (programs may get stuck).
\end{ex}


\section{Functional Choreographies}
\label{sec:funct-chor}

We introduce Pirouette, our language for writing distributed programs in a functional, choreographic style.
We guarantee deadlock freedom along with the standard static guarantees of simple-type systems.
Moreover, we allow higher-order computations by allowing choreographies to be passed to other choreographies as inputs.

\begin{figure}
  \begin{syntax}
    \abstractCategory[Locations]{\GenericLoc \in \Locations}
    \category[Synchronization Labels]{\GenChoice} \alternative{\LChoice} \alternative{\RChoice}
    \category[Choreographies]{C} \alternative{X} \alternative{\Own{\GenericLoc}{e}} \alternative{\Send{\GenericLocI}{e}{\GenericLocII}{x}{C}}
    \alternativeLine{\ChorIf{\GenericLoc}{e}{C_1}{C_2}} \alternative{\Sync{\GenericLocI}{\GenChoice}{\GenericLocII}{C}}
    \alternativeLine{\DefLocal{\GenericLoc}{x}{C_1}{C_2}} \alternative{\FunLocal{\GenericLoc}{F}{x}{C}} \alternative{\FunGlobal{F}{X}{C}}
    \alternativeLine{\AppLocal{\GenericLoc}{C}{e}} \alternative{\AppGlobal{C_1}{C_2}}
  \end{syntax}
  \caption{Functional Choreographies Syntax}
  \label{fig:chor-syn}
\end{figure}

The syntax of Pirouette can be found in Figure~\ref{fig:chor-syn}.
Note that choreographies contain two types of variables: choreography variables, which stand for the result of a distributed computation (i.e., a choreography), and local variables which are the variables of the local language seen in Section~\ref{sec:system-model}.
We write choreography variables in upper case ($X$, $Y$, etc.) and choreography variables in lower case ($x$, $y$, etc.).

Every location has its own namespace of local variables, so $\Own{\GenericLocI}{x} \neq \Own{\GenericLocII}{x}$.
This is reflected in the defintion of substitution: substitution of choreography variables has the standard form $\Subst{C_1}{\SingleSubst{X}{C_2}}$,
while substitution of local variables has the form $\Subst{C}{\SingleChorExprSubst{\GenericLoc}{x}{e}}$.
The first notion of substitution is standard.
The second walks through a term looking for a subterm of the form $\Own{\GenericLoc}{e'}$, and then replaces $e'$ with $\Subst{e'}{\SingleSubst{x}{e}}$.
(Note that the location name~$\GenericLoc$ must be the same as the one in the substitution!)

We write $\Own{\GenericLoc}{e}$ to represent the choreography that returns the result of running $e$ on $\GenericLoc$.
To use this value in a future choreography, we use the syntax $\DefLocal{\GenericLoc}{x}{C_1}{C_2}$.
This runs $C_1$ until it returns a value on \GenericLoc, and then binds that value to $x$ under \GenericLoc in $C_2$.

We write $\Send{\GenericLocI}{e}{\GenericLocII}{x}{C}$ to represent $\GenericLocI$ evaluating $e$ and then sending the resulting value~$v$ to $\GenericLocII$.
The variable~$\Own{\GenericLocII}{x}$ is bound to $v$ in the continuation choreography~$C$, representing $\GenericLocII$'s receipt of the message.

The distributed program can branch based on the result of a test on a local machine.
We write this $\ChorIf{\GenericLoc}{e}{C_1}{C_2}$.
However, this can quickly break causality.
To see why, consider the program
$
\ChorIf{\GenericLocI}{e}
{
  \Own{\GenericLocII}{3}
}
{
  \Own{\GenericLocII}{4}
}
$.
In this program, $\GenericLocII$ behaves differently depending on the behavior of a test performed on $\GenericLocI$, but $\GenericLocII$ has never been told about the result of that test!
In order to fix this problem, we require that $\GenericLocI$ inform $\GenericLocII$ of which branch was taken before $\GenericLocII$ can behave differently in the two branches.
We do this using the syntax $\Sync{\GenericLocI}{\GenChoice}{\GenericLocII}{C}$.
Here \GenChoice can either be \LChoice or \RChoice, where \LChoice represents taking the then branch, while \RChoice represents taking the else branch.
Thus, we can safely write the following program:
$$
\ChorIf*{\GenericLocI}{e}
{
  \Sync{\GenericLocI}{\LChoice}{\GenericLocII}{\Own{\GenericLocII}{3}}
}
{
  \Sync{\GenericLocI}{\RChoice}{\GenericLocII}{\Own{\GenericLocII}{4}}
}
$$

Note that this is only required because \emph{\GenericLocII's behavior differs} in the two branches.
If \GenericLocII behaves the same, no synchronization is required.
Thus, the following program is okay:
$$\ChorIf*{\GenericLocI}{e}
{
  \Send*{\GenericLocIII}{e'}{\GenericLocII}{x}
  {\Sync{\GenericLocI}{\LChoice}{\GenericLocII}{\Own{\GenericLocII}{x + 2}}}
}
{
  \Send*{\GenericLocIII}{e'}{\GenericLocII}{x}{
    \Sync{\GenericLocI}{\RChoice}{\GenericLocII}{\Own{\GenericLocII}{x + 3}}}
}
$$
Because it behaves the same in both branches, \GenericLocIII never needs to be informed about which branch is taken, even though it appears in the branches.
Moreover, \GenericLocII only needs to be informed about which branch is taken after it receives its message from \GenericLocIII.

There are two types of functions available in Pirouette, local and global functions.
Local functions expect a local value as input, stored on some particular node, whereas global functions expect a choreography as input.
Both types of functions may be recursive.
We write $\FunLocal{\GenericLoc}{F}{x}{C}$ for the (recursive) function named $F$ which expects a local value on \GenericLoc as input, and has body $C$.
We write $\AppLocal{\GenericLoc}{C}{e}$ for the application of function $C$ to input $e$ which is stored on \GenericLoc.
For the global function named $F$ which takes an input named $X$ and has body $C$, we write $\FunGlobal{F}{X}{C}$.
As is traditional, we write $\AppGlobal{C_1}{C_2}$ for the application of function $C_1$ to $C_2$.

We adopt a call-by-value semantics, so we evaluate inputs to values before applying functions.
As described in Section~\ref{sec:system-model}, local values are defined by the local language and are always closed.
Choreography values are programs of any of the forms: (a)~$\Own{\GenericLoc}{v}$ (where $v$ is a local value), (b)~$\FunLocal{\GenericLoc}{F}{x}{C}$ (where the only free variables in $C$ are $F$ and $\Own{\GenericLoc}{x}$), or (c)~$\FunGlobal{F}{X}{C}$ (where the only free variables in $C$ are $F$ and $X$).

We define an operation $\LocationNames(C)$ which collects all of the location names in $C$ in a set.
We write $\FCV(C)$ for the free choreography variables in $C$, and $\FEV(C)$ for the collection of free expression variables in $C$, tagged with the locations that own them.
We write $\FEV[\GenericLoc](C)$ for the collection of expression variables free under \GenericLoc in C.

\subsection{Operational Semantics}
\label{sec:oper-semant}

Intuitively, if two locations~\GenericLocI and \GenericLocII both take actions, they should be able to do this in either order.
For instance, consider the Pirouette program
$
\Send{\GenericLocI}{2+3}{\GenericLocII}{x}{
  \Send{\GenericLocIII}{3*4}{\GenericLocII}{y}{C}}
$.
Here, \GenericLocI and \GenericLocIII are both working on computations whose results they expect to send to \GenericLocII.
Since these are different locations, both should be able to work on their programs at the same time.
Thus in the semantics we should be able to reduce this program to either
$
\Send{\GenericLocI}{5}{\GenericLocII}{x}{\Send{\GenericLocIII}{3*4}{\GenericLocII}{y}{C}}$ or $\Send{\GenericLocI}{2+3}{\GenericLocII}{x}{\Send{\GenericLocIII}{12}{\GenericLocII}{y}{C}}$,
performing the local reductions in either order.
However, \GenericLocII is just one location, so it cannot listen for two messages at once.
Thus, even if we reduce the above program to
$
\Send{\GenericLocI}{5}{\GenericLocII}{x}{
  \Send{\GenericLocIII}{12}{\GenericLocII}{y}{C}}
$
we are forced to reduce the send from \GenericLocI before the send from \GenericLocIII, since the second is waiting on \GenericLocII.

In order to allow for this behavior, we keep track of a set of \emph{blocked} locations in our operational semantics.
Intuitively, blocked locations cannot take a step.
By keeping track of what locations are blocked, we can allow out-of-order-execution among non-blocked locations.

However, this is not quite enough to get the behavior we want.
Consider this program, which represents a system where \GenericLocI branches on $e$ and then does nothing, while \GenericLocII returns the result of~$3 + 5$ independent of \GenericLocI's choice: \mbox{$\ChorIf{\GenericLocI}{e}{\Own{\GenericLocII}{3+5}}{\Own{\GenericLocII}{3+5}}$}.
Here, \GenericLocII ought to be able to make progress, reducing this program to $\ChorIf{\GenericLocI}{e}{\Own{\GenericLocII}{8}}{\Own{\GenericLocII}{8}}$.
Note that this progress was atomic.
Thus, it would be illegitimate to reduce different programs in the two branches.
For instance, reducing
$\ChorIf{\GenericLocI}{e}{\Own{\GenericLocII}{3+5}}{\Own{\GenericLocII}{3*4}}$ to $\ChorIf{\GenericLocI}{e}{\Own{\GenericLocII}{8}}{\Own{\GenericLocII}{12}}$
requires reducing different programs in the two branches, whereas, in reality, only one of those two programs should be reduced and that program should be chosen by \GenericLocI.

In order to prevent this type of reduction, we track \emph{what} reduction is happening, resulting in a labeled transition system.
We refer to labels in Pirouette as ``redices,'' and to a single label as a ``redex.''
In the second-to-last example above, we note that in each branch, we reduce a 3+5 to 8 on \GenericLocII.
Since we are doing the same reduction in each branch, we can reduce the whole program.

\begin{figure}
  \begin{mathpar}
    \infer[SendE]{\GenericLocI \notin B\\ \GenericLocI \neq \GenericLocII\\ \ExprStepsTo{e_1}{e_2}}{
      \ChorStepsTo{\RSendE{\GenericLocI}{e_1}{e_2}{\GenericLocII}}{B}{\Send{\GenericLocI}{e_1}{\GenericLocII}{x}{C}}{\Send{\GenericLocI}{e_2}{\GenericLocII}{x}{C}}} \and
    \infer[SendI]{\ChorStepsTo{R}{B \mathrel{\cup} \{\GenericLocI, \GenericLocII\}}{C_1}{C_2}}{
      \ChorStepsTo{R}{B}{\Send{\GenericLocI}{e}{\GenericLocII}{x}{C_1}}{\Send{\GenericLocI}{e}{\GenericLocII}{x}{C_2}}}\and
    \infer[SendV]{\GenericLocI \notin B\\ \GenericLocII \notin B\\ \ExprValue{v}\\ \GenericLocI \neq \GenericLocII}{
      \ChorStepsTo{\RSendV{\GenericLocI}{v}{\GenericLocII}}{B}{\Send{\GenericLocI}{v}{\GenericLocII}{x}{C}}{\Subst{C}{\SingleChorExprSubst{\GenericLocII}{x}{v}}}}\\
    \infer[IfI]{
      \ChorStepsTo{R}{B \cup \{\GenericLoc\}}{C_1}{C_1'}\\
      \ChorStepsTo{R}{B \cup \{\GenericLoc\}}{C_2}{C_2'}
    }
    {
      \ChorStepsTo{R}{B}{\ChorIf{\GenericLoc}{e}{C_1}{C_2}}{\ChorIf{\GenericLoc}{e}{C_1'}{C_2'}}
    } \\
    \infer[SyncI]{
      \ChorStepsTo{R}{B \cup \{\GenericLoc, \GenericLocII\}}{C_1}{C_2}
    }{
      \ChorStepsTo{R}{B}{\Sync{\GenericLocI}{\GenChoice}{\GenericLocII}{C_1}}{\Sync{\GenericLocI}{\GenChoice}{\GenericLocII}{C_2}}
    } \and
    \infer[Sync]{
      \GenericLocI \notin B\\ \GenericLocII \notin B\\
      \GenericLocI \neq \GenericLocII
    }{
      \ChorStepsTo{\RSync{\GenericLocI}{\GenChoice}{\GenericLocII}}{B}{\Sync{\GenericLocI}{\GenChoice}{\GenericLocII}{C}}{C}
    }\\
    \infer[AppGlobalFun]{
      \ChorStepsTo{R}{B}{C_1}{C_1'}
    }{
      \ChorStepsTo{\RFun(R)}{B}{\AppGlobal{C_1}{C_2}}{\AppGlobal{C_1'}{C_2}}
    }\and
    \infer[AppGlobalArg]{
      \ChorStepsTo{R}{B}{C_2}{C_2'}
    }{
      \ChorStepsTo{\RArg(R)}{B}{\AppGlobal{C_1}{C_2}}{\AppGlobal{C_1}{C_2'}}
    } \and
    \infer[AppGlobal]{
      \ChorValue(V)
    }{
      \ChorStepsTo{\RAppGlobal}{\emptyset}{\AppGlobal{(\FunGlobal{F}{X}{C})}{V}}{\Subst{C}{\SingleSubst{X}{V}, \SingleSubst{F}{\FunGlobal{F}{X}{C}}}}
    }
  \end{mathpar}
  \caption{Selected Choreography Operational Semantics}
  \label{fig:chor-sem}
\end{figure}

Selected rules from the operational semantics can be found in Figure~\ref{fig:chor-sem}.
The full rules, along with the syntax of redices, can be found in \iftr{}Appendix~\ref{sec:block-set-semantics}\else{}the accompanying technical report\fi{}.

Each rule has the form $\ChorStepsTo{R}{B}{C_1}{C_2}$, where $R$ is a redex, $B$ is a set of locations, and $C_1$ and $C_2$ are choreographies.
Intuitively, this says that $C_1$ can reduce to $C_2$ using redex $R$ even if every location in $B$ is blocked.
With this interpretation, the rule~\textsc{SendE} allows a location to reduce a message to a value before sending it.
The notation $\RSendE{\GenericLocI}{e_1}{e_2}{\GenericLocII}$ is the redex for this reduction rule.
Note that we check both that $\GenericLocI \notin B$ and that $\GenericLocI \neq \GenericLocII$ in the premise of this rule.
The first check ensures that $\GenericLocI$ is not blocked, since the reduction happens at that location.
The second check reflects the fact that sends from a node to itself is not meaningful.

The rule~\textsc{SendI} allows reductions under a send construct, but only when the reduction can take place with both the sender and the receiver blocked.
This formalizes the intuition that while some locations are waiting, other locations can take actions.
Sending is formalized with the rule~\textsc{SendV}.
This removes a send entirely.
Note that in the substitution, \GenericLocII's variable is substituted with the message.
This formalizes the intuition that sends are modeled by binding the message to a variable in the receiver's program.

The \textsc{IfI} rule is the only rule that makes use of redices.
By ensuring that the same redex is reduced on each branch of the if, we make sure that the only reductions that can be made are those which are invariant under which branch is taken.
Note that we again ensure that \GenericLocI is not taking an action in this reduction step, since it is currently occupied with the if itself.

The rules \textsc{SyncI} and \textsc{Sync} demonstrate the rather subtle effect that \textsc{Sync} has on the semantics.
The first rule demonstrates that synchronization induces blocking on the relevant locations.
However, \textsc{Sync} shows that this is the only effect that it has on the semantics of the choreography.

Finally, the rules \textsc{AppGlobalFun}, \textsc{AppGlobalArg}, and \textsc{AppGlobal} demonstrate how functions are treated.
Both the function and its argument can be evaluated in any order, but the redex is tagged with which choice is made.
This ensures that if statements don't reduce a function in one branch and its argument in the other.
Finally, the \textsc{AppGlobal} rule shows that the semantics of a function is given as standard, with a parallel substitution.
However, we require that there be no blockers.
This comes from the fact that all of the locations work together to reduce functions at the same time.

\subsection{Type System}
\label{sec:type-system}

\begin{figure}
  \begin{syntax}
    \category[Choreography Types]{\ChorTypeTau}
    \alternative{\OwnedType{\GenericLocI}{\ExprTypeTau}}
    \alternative{\LocalFunType{\GenericLoc}{\ExprTypeTauI}{\ChorTypeTauII}}
    \alternative{\GlobalFunType{\ChorTypeTauI}{\ChorTypeTauII}}
    \category[Local Contexts]{\Gamma} \alternative{\cdot} \alternative{\Gamma, \ChorTypingTriple{\GenericLoc}{x}{\ExprTypeTau}}
    \category[Choreography Contexts]{\Delta} \alternative{\cdot} \alternative{\Delta, \ChorTypingPair{X}{\ChorTypeTau}}      
  \end{syntax}

  \begin{mathpar}
    \infer[Send]{\ExprTyping{\ProjectCtxt{\Gamma}{\GenericLocI}}{e}{\ExprTypeTauI}\\
      \ChorTyping{\Gamma, \ChorTypingTriple{\GenericLocII}{x}{\ExprTypeTauI}}{\Delta}{C}{\ChorTypeTauII}\\
      \GenericLoc \neq \GenericLocII
    }
    {
      \ChorTyping{\Gamma}{\Delta}{\Send{\GenericLocI}{e}{\GenericLocII}{x}{C}}{\ChorTypeTauII}
    } \and
  \infer[DefLocal]{
    \ChorTyping{\Gamma}{\Delta}{C_1}{\OwnedType{\GenericLoc}{\ExprTypeTauI}}\\
    \ChorTyping{\Gamma, \ChorTypingTriple{\GenericLoc}{x}{\ExprTypeTauI}}{\Delta}{C_2}{\ChorTypeTauII}}
  {\ChorTyping{\Gamma}{\Delta}{\DefLocal{\GenericLoc}{x}{C_1}{C_2}}{\ChorTypeTauII}}\and
  \infer[FunLocal]{
    \ChorTyping{\Gamma, \ChorTypingTriple{\GenericLoc}{x}{\ExprTypeTauI}}{\Delta, \ChorTypingPair{F}{\LocalFunType{\GenericLoc}{\ExprTypeTauI}{\ChorTypeTauII}}}{C}{\ChorTypeTauII}
  }{
    \ChorTyping{\Gamma}{\Delta}{\FunLocal{\GenericLoc}{F}{x}{C}}{\LocalFunType{\GenericLoc}{\ExprTypeTauI}{\ChorTypeTauII}}
  }\and
  \infer[FunGlobal]{
    \ChorTyping{\Gamma}{\Delta, \ChorTypingPair{F}{\GlobalFunType{\ChorTypeTauI}{\ChorTypeTauII}}, \ChorTypingPair{X}{\ChorTypeTauI}}{C}{\ChorTypeTauII}
  }{
    \ChorTyping{\Gamma}{\Delta}{\FunGlobal{F}{X}{C}}{\GlobalFunType{\ChorTypeTauI}{\ChorTypeTauII}}
  }
  \end{mathpar}
  
  \caption[Pirouette Types]{Pirouette Types (Syntax and Selected Rules)}
  \label{fig:chor-types}
\end{figure}

Assuming that the local expression language has a type system as described in Section~\ref{sec:typed-local-program}, we can develop a language of types for Pirouette.
The syntax of Pirouette types can be found in Figure~\ref{fig:chor-types}, along with selected typing rules.
The full typing rules are in \iftr{}Appendix~\ref{sec:full-chor-types}\else{}the accompanying technical report~\cite{HirschG21}\fi{}.

There are three categories of Pirouette types, corresponding to the three categories of choreographic values.
The first category is a local type at some location, which we write $\OwnedType{\GenericLoc}{\ExprTypeTau}$.
This is the type given to values of the form $\Own{\GenericLoc}{v}$.
Then there is the type of local functions, which we write $\LocalFunType{\GenericLoc}{\ExprTypeTauI}{\ChorTypeTauII}$.
Here, $\ExprTypeTauI$ is an expression type, corresponding to the type of the input to the function, which is expected to be located on $\GenericLoc$.
The function then returns a~$\ChorTypeTauII$, which is a choreography type.
Finally, there is the type of global functions which take an arbitrary choreographic type as an input.
Thus, we write $\GlobalFunType{\ChorTypeTauI}{\ChorTypeTauII}$, where $\ChorTypeTauI$ and $\ChorTypeTauII$ are both Pirouette types.

The choreographic type judgment is of the form $\ChorTyping{\Gamma}{\Delta}{C}{\ChorTypeTau}$ where $\Gamma$ is a local context, $\Delta$ is a global context, $C$ is a choreography, and $\ChorTypeTau$ is a choreographic type.
Global contexts are normal typing contexts relating choreography variables to choreography types.
However, local contexts have to relate variables to types, but must also keep track of the location of the variable.
Since each location is its own namespace, we relate variables paired with locations to local types (see the syntax of $\Gamma$ in Figure~\ref{fig:chor-types}).

For any location~$\GenericLoc$, we can recursively \emph{project} the typing context of the location from a local context~$\Gamma$ as follows:
$$
\ProjectCtxt{\Gamma}{\GenericLocI} =
\left\{
  \begin{array}{ll}
    \cdot & \text{if}~\Gamma = \cdot\\
    \ProjectCtxt{\Gamma'}{\GenericLocI}, \ExprTypingPair{x}{\ExprTypeTau} & \text{if}~\Gamma = \Gamma,\ChorTypingTriple{\GenericLocI}{x}{\ExprTypeTau}\\
    \ProjectCtxt{\Gamma'}{\GenericLocI} & \text{if}~\Gamma = \Gamma,\ChorTypingTriple{\GenericLocII}{x}{\ExprTypeTau}~\text{where}~\GenericLocI \neq \GenericLocII\\
  \end{array}
\right.
$$
Intuitively, this gives the context for the namespace $\GenericLoc$.

We can see projection in action in the \textsc{Send} rule.
Here, we check that $e$ has the local type~$\ExprTypeTau$ at the location $\GenericLoc$.
We then check the remainder of the choreography under the assumption that $x$ has type $\ExprTypeTau$ \emph{at $\GenericLocII$}, since after the send $x$ will be bound to the result of evaluating $e$.
Note that this works because values are closed; otherwise, $v$ might not typecheck in \GenericLocII's namespace.

The rule \textsc{DefLocal} shows how the program $\DefLocal{\GenericLoc}{x}{C_1}{C_2}$ acts as an elimination rule for the type~$\OwnedType{\GenericLoc}{\ExprTypeTau}$.
We ensure that $C_1$ has the type $\OwnedType{\GenericLoc}{\ExprTypeTau}$, and then we bind $x$ to $\ExprTypeTau$ locally in $C_2$.
Finally, \textsc{FunLocal} and \textsc{FunGlobal} produce local and global function types, respectively.
Both also bind the function name to the appropriate function type, allowing for recursive function definitions.

The choreographic type system enjoys progress and preservation if the local type system is sound.
However, we can get more-precise guarantees, which we call \emph{relative progress} and \emph{relative preservation}.

\begin{thm}[Relative Preservation]
  \label{thm:relative-pres}
  If the local type system enjoys \textsc{Preservation}, then for every choreography~$C_1$ such that $\ChorTyping{\Gamma}{\Delta}{C_1}{\ChorTypeTau}$ and $\ChorStepsTo{R}{B}{C_1}{C_2}$,
  $\ChorTyping{\Gamma}{\Delta}{C_2}{\ChorTypeTau}$.
\end{thm}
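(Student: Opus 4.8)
The plan is to proceed by induction on the derivation of the transition $\ChorStepsTo{R}{B}{C_1}{C_2}$, performing inversion on the typing derivation $\ChorTyping{\Gamma}{\Delta}{C_1}{\ChorTypeTau}$ in each case to recover the premises that typed $C_1$, and then rebuilding a typing derivation for $C_2$ at the same type $\ChorTypeTau$. The blocking set $B$ and the redex label $R$ are irrelevant to the typing judgment, so they can be ignored throughout; what matters is only the shape of the rule applied. The cases fall into three groups: congruence rules that reduce a subterm, the rules that fire a local reduction, and the rules that fire a communication or a $\beta$-step. Crucially, the statement is proved in arbitrary contexts $\Gamma$ and $\Delta$ rather than only for closed choreographies, so that the induction hypothesis remains available under binders.

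The congruence rules --- \textsc{SendI}, \textsc{SyncI}, \textsc{IfI}, \textsc{AppGlobalFun}, \textsc{AppGlobalArg}, together with their counterparts for the constructs whose rules are deferred to the appendix --- are routine: inversion on the typing derivation exposes a typing judgment for the reduced subterm in an appropriately extended context (e.g.\ for \textsc{SendI} the continuation is typed under $\Gamma, \ChorTypingTriple{\GenericLocII}{x}{\ExprTypeTauI}$, and for \textsc{IfI} both branches are typed at the common type), the induction hypothesis supplies the same judgment for each reduct, and reassembling the original rule yields the result. The one rule that fires a local reduction, \textsc{SendE}, is where the hypothesis that the local type system enjoys \textsc{Preservation} is used: inversion of \textsc{Send} gives $\ExprTyping{\ProjectCtxt{\Gamma}{\GenericLocI}}{e_1}{\ExprTypeTauI}$, local \textsc{Preservation} applied to $\ExprStepsTo{e_1}{e_2}$ yields $\ExprTyping{\ProjectCtxt{\Gamma}{\GenericLocI}}{e_2}{\ExprTypeTauI}$, and since the continuation is unchanged the \textsc{Send} rule re-types the result. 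The other local-step rules hidden in the appendix (for \textsc{DefLocal}, the guard of an \textsc{If}, and local application) are handled identically.

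The substantive cases are those that substitute, and they require two substitution lemmas, each proved by a separate induction on the choreography. The first is a \emph{local-variable substitution} lemma: if $\ChorTyping{\Gamma, \ChorTypingTriple{\GenericLoc}{x}{\ExprTypeTau}}{\Delta}{C}{\ChorTypeTauII}$ and $v$ is a (closed) value with $\ExprTyping{ }{v}{\ExprTypeTau}$, then $\ChorTyping{\Gamma}{\Delta}{\Subst{C}{\SingleChorExprSubst{\GenericLoc}{x}{v}}}{\ChorTypeTauII}$; its proof descends to the $\Own{\GenericLoc}{e}$ leaves and discharges them using the admissible local \textsc{Substitution} rule of Figure~\ref{fig:local-type-rules} (weakening $v$ into the leaf's namespace as needed). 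The second is a \emph{choreography-variable substitution} lemma: if $\ChorTyping{\Gamma}{\Delta, \ChorTypingPair{X}{\ChorTypeTauI}}{C}{\ChorTypeTauII}$ and $\ChorTyping{\Gamma}{\Delta}{C'}{\ChorTypeTauI}$, then $\ChorTyping{\Gamma}{\Delta}{\Subst{C}{\SingleSubst{X}{C'}}}{\ChorTypeTauII}$. With these in hand, \textsc{SendV} follows from the first lemma, \textsc{AppGlobal} from the second applied to the parallel substitution $\SingleSubst{X}{V}, \SingleSubst{F}{\FunGlobal{F}{X}{C}}$ (the self-substitution for the recursive name $F$ being typed by the very \textsc{FunGlobal} derivation we inverted), and the deferred \textsc{DefLocal} and local-application $\beta$-rules from the first.

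I expect the main obstacle to be the \textsc{SendV} case, because of a namespace mismatch. Inversion of \textsc{Send} types the message $v$ in the \emph{sender's} namespace, as $\ExprTyping{\ProjectCtxt{\Gamma}{\GenericLocI}}{v}{\ExprTypeTauI}$, whereas the variable $x$ being substituted lives in the \emph{receiver's} namespace $\GenericLocII$, so the local-variable substitution lemma must feed $v$ into $\GenericLocII$'s context. This gap is bridged precisely by the assumption that all values are closed: from $\FV(v) = \emptyset$ and repeated use of the admissible local \textsc{Strengthening} rule we obtain the empty-context typing $\ExprTyping{ }{v}{\ExprTypeTauI}$, after which \textsc{Weakening} places it in any namespace. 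Making this namespace-crossing argument explicit --- and checking that the closedness invariant on values is genuinely maintained by the reduction relation, so that the lemma is always applicable --- is the part of the proof specific to choreographies rather than inherited from a textbook preservation argument.
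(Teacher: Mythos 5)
Your proposal is correct and takes essentially the same route as the paper's (Coq-mechanized) proof: induction on the derivation of the step relation with inversion on typing, separate substitution lemmas for local and choreography variables, local \textsc{Preservation} only in the local-reduction cases, and the closedness-of-values argument to transport the message's typing from the sender's namespace to the receiver's --- precisely the point the paper itself flags in its discussion of the \textsc{Send} typing rule. The only slip is cosmetic: the local-application $\beta$-rule needs the choreography-variable substitution lemma as well (for the recursive name $F$), not just the local one, but both lemmas are already in your toolkit.
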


\begin{thm}[Relative Progress]
  \label{thm:relative-prog}
  If the local type system enjoys \textsc{Boolean Invertability} and \textsc{Progress}, then for every choreography $C_1$ such that $\ChorTyping{\cdot}{\cdot}{C_1}{\ChorTypeTau}$, either $C_1$ is a choreography value or there are some $R$, $B$, and $C_2$ such that $\ChorStepsTo{R}{B}{C_1}{C_2}$.
\end{thm}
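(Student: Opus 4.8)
The plan is to prove this by induction on the derivation of $\ChorTyping{\cdot}{\cdot}{C_1}{\ChorTypeTau}$. Because the typing rules are syntax-directed, this is effectively a case analysis on the outermost form of $C_1$, and in each non-value case I will exhibit a concrete reduction, always choosing the block set $B = \emptyset$ so that every side condition of the form $\GenericLoc \notin B$ in Figure~\ref{fig:chor-sem} is discharged automatically; the disequalities $\GenericLocI \neq \GenericLocII$ come directly from the relevant typing rules (e.g.\ \textsc{Send}). The induction hypothesis will only ever be applied to immediate subterms that are themselves closed, namely $C_1$ in $\DefLocal{\GenericLoc}{x}{C_1}{C_2}$, the function position of $\AppLocal{\GenericLoc}{C}{e}$, and both positions of $\AppGlobal{C_1}{C_2}$. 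Crucially, I never need to recurse under a binder, into an $\ChorIf$ branch, or past a $\Send$, so the out-of-order congruence rules \textsc{SendI}, \textsc{IfI}, and \textsc{SyncI} play no role in establishing mere existence of a step.

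Two supporting lemmas carry most of the weight. First, a canonical-forms lemma obtained by inverting the typing judgment on choreography values: a value of type $\OwnedType{\GenericLoc}{\ExprTypeTau}$ must be $\Own{\GenericLoc}{v}$ with $\ExprValue{v}$; a value of type $\LocalFunType{\GenericLoc}{\ExprTypeTauI}{\ChorTypeTauII}$ must be a $\FunLocal{\GenericLoc}{F}{x}{C}$; and a value of type $\GlobalFunType{\ChorTypeTauI}{\ChorTypeTauII}$ must be a $\FunGlobal{F}{X}{C}$. Second, a free-variables lemma stating that if $\ChorTyping{\Gamma}{\Delta}{C}{\ChorTypeTau}$ then $\FCV(C)$ is bounded by $\Delta$ and each $\FEV[\GenericLoc](C)$ by $\ProjectCtxt{\Gamma}{\GenericLoc}$. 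Specialised to the contexts in the \textsc{FunLocal} and \textsc{FunGlobal} premises, which extend the empty contexts only by the binders $F$, $\Own{\GenericLoc}{x}$, or $X$, this guarantees that the bodies of the two function forms have no stray free variables, so these terms genuinely satisfy value conditions (b) and (c) and may be reported as values.

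The cases then split cleanly. The variable case $C_1 = X$ is vacuous, since nothing types a choreography variable under the empty $\Delta$. For $\Own{\GenericLoc}{e}$, $\Send{\GenericLocI}{e}{\GenericLocII}{x}{C}$, and $\ChorIf{\GenericLoc}{e}{C_1}{C_2}$ the guard or message expression is closed by the corresponding typing premise, so local \textsc{Progress} tells us $e$ is a value or steps: if it steps we fire the relevant expression-reducing rule (\textsc{SendE} in the send case); if it is a value we either declare $\Own{\GenericLoc}{v}$ a value, apply \textsc{SendV}, or—using \textsc{Boolean Invertability} to learn $e \in \{\True,\False\}$—select the appropriate $\ChorIf$ branch. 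A $\Sync{\GenericLocI}{\GenChoice}{\GenericLocII}{C}$ steps at once by \textsc{Sync}, and the two function forms are values by the lemmas above. For $\DefLocal{\GenericLoc}{x}{C_1}{C_2}$, $\AppLocal{\GenericLoc}{C}{e}$, and $\AppGlobal{C_1}{C_2}$ I invoke the induction hypothesis on the relevant subterm: if it steps we lift that step through the evaluation-context rule (\textsc{AppGlobalFun}/\textsc{AppGlobalArg} and their local analogues); if it is a value, canonical forms identify it as $\Own{\GenericLoc}{v}$, a $\FunLocal{\GenericLoc}{F}{x}{C}$, or a $\FunGlobal{F}{X}{C}$, enabling the matching elimination step (the $\DefLocal$ binder-substitution, the local $\beta$-rule, or global $\beta$ via \textsc{AppGlobal}).

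The main obstacle is the interaction between the induction hypothesis and the block sets. The hypothesis only promises a step for some block set, whereas the evaluation-context rules fix how a subterm's block set relates to the whole program's, and \textsc{AppGlobal} insists on exactly the empty set. I would handle this with an auxiliary monotonicity lemma: if $\ChorStepsTo{R}{B}{C}{C'}$ and $B' \subseteq B$ then $\ChorStepsTo{R}{B'}{C}{C'}$, proved by a routine induction on the reduction (each premise $\GenericLoc \notin B$ only strengthens as $B$ shrinks, and each congruence rule adds the same locations to both premise and conclusion). With this lemma every step handed back by the hypothesis can be rerun at $B = \emptyset$, so the whole argument can consistently report $B = \emptyset$, and the evaluation-context rules used for applications—which pass the block set through unchanged—impose no further demands. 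The one genuine subtlety is that the reduction path we take must only use rules that do not enlarge the block set; the rules that do enlarge it are precisely the out-of-order rules the outermost strategy avoids. Verifying this—equivalently, that a \emph{sound} local language always supplies an unblocked outermost redex—is where the \emph{relative} character of the theorem, its dependence on local \textsc{Progress} and \textsc{Boolean Invertability}, really bites.
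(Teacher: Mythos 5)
Your proposal is correct and takes essentially the same route as the paper's (Coq-mechanized) proof: induction on the typing derivation, a canonical-forms lemma plus a free-variable lemma to certify the value cases, local \textsc{Progress} and \textsc{Boolean Invertability} for expression positions, and all block-set side conditions discharged by working at $B = \emptyset$ (your monotonicity lemma is sound, though if you strengthen the induction statement to always return $B=\emptyset$ it becomes unnecessary, since the context rules you use pass the block set through unchanged). The only sub-case left implicit is local application: after canonical forms identify the function position as a $\FunLocal{\GenericLoc}{F}{x}{C}$ value, you must still apply local \textsc{Progress} to the argument expression $e$ (closed and well-typed by the \textsc{AppLocal} typing premise), firing \textsc{AppLocalArg} at $B=\emptyset$ if it steps and the local $\beta$-rule only once it is a value---the same pattern you already use for the guard and message expressions.
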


\begin{cor}[Relative Soundness]
  \label{thm:relative-sound}
  If the local type system is sound, as defined in Section~\ref{sec:typed-local-program}, then the choreographic type system enjoys progress and preservation.
\end{cor}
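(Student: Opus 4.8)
The plan is to observe that this corollary follows directly by combining the two preceding theorems, once the definition of a \emph{sound} local type system is unfolded. Recall from Section~\ref{sec:typed-local-program} that soundness of the local type system means it satisfies all three of \textsc{Boolean Invertibility}, \textsc{Preservation}, and \textsc{Progress}. The goal is to establish that the choreographic type system enjoys both preservation (typing is preserved under $\ChorStepsTo{R}{B}{C_1}{C_2}$) and progress (a well-typed closed choreography is either a value or can take a step), which together constitute soundness at the choreographic level.

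First I would dispatch choreographic preservation. Since the local type system is sound, it in particular enjoys \textsc{Preservation}, which is exactly the hypothesis of Theorem~\ref{thm:relative-pres} (Relative Preservation). I can therefore invoke that theorem verbatim to conclude that for any $C_1$ with $\ChorTyping{\Gamma}{\Delta}{C_1}{\ChorTypeTau}$ and any step $\ChorStepsTo{R}{B}{C_1}{C_2}$, we have $\ChorTyping{\Gamma}{\Delta}{C_2}{\ChorTypeTau}$.

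Next I would handle choreographic progress. Again using soundness, the local type system enjoys both \textsc{Boolean Invertability} and \textsc{Progress}, and these two properties together form precisely the antecedent of Theorem~\ref{thm:relative-prog} (Relative Progress). Invoking that theorem yields that any closed, well-typed $C_1$ with $\ChorTyping{\cdot}{\cdot}{C_1}{\ChorTypeTau}$ is either a choreography value or admits a step $\ChorStepsTo{R}{B}{C_1}{C_2}$. Conjoining the two results gives exactly progress plus preservation for the choreographic type system.

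Since both halves are immediate applications of already-established theorems, I do not expect any genuine obstacle; all the substantive content lives in Theorems~\ref{thm:relative-pres} and~\ref{thm:relative-prog}. The only point requiring care is matching hypotheses exactly: confirming that ``sound local type system'' literally packages together the three local properties needed as antecedents of the two relative theorems, so that no further assumption about the local language is smuggled in and the corollary is a clean composition.
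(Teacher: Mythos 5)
Your proposal is correct and matches the paper's own (implicit) argument exactly: the corollary is stated as a direct consequence of Theorems~\ref{thm:relative-pres} and~\ref{thm:relative-prog}, with ``sound'' unfolding to precisely the three local properties (\textsc{Boolean Invertibility}, \textsc{Preservation}, \textsc{Progress}) those theorems require. Nothing further is needed.
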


By dividing up the result into more-precise theorems, we are able to get some guarantees even when the local type system is not sound.
For instance, the unityped $\lambda$-calculus example (Example~\ref{ex:unityped-lambda}) is not sound, but does guarantee preservation.
Thus, Theorem~\ref{thm:relative-pres} allows us to lift preservation to the choreographic system.
In this case, we do not get choreographic progress, intuitively because we may get stuck when trying to evaluate a local expression, or when an if expression tries to discriminate on a non-boolean value.


\section{Equational Reasoning}
\label{sec:equational-reasoning}

Choreographies represent collections of programs running in parallel.
In order to represent these programs serially, we are forced to decide what behavior to write first.
For instance, consider the program
$
\Send{\GenericLocI}{2+3}{\GenericLocII}{x}{
  \Send{\GenericLocIII}{3*4}{\GenericLocIV}{y}{C}}
$.
The program represents \GenericLocI sending a message to \GenericLocII while \GenericLocIII sends a message to \GenericLocIV.
We could have just as well represented that program as 
$
\Send{\GenericLocIII}{3*4}{\GenericLocIV}{y}{
  \Send{\GenericLocI}{2+3}{\GenericLocII}{x}{C}}
$.
Since these choreographies represent the same collection of programs, all of our constructs treat them the same way.
In prior work on choreographies, this fact is typically formalized using a separate notion of \emph{equivalence}, which says when two choreographies represent the same collection of programs~\cite{Cruz-FilipeM17,Cruz-FilipeM17c,Montesi13,LaneseMZ13}.
Following that tradition, we define a notion of structural equivalence for Pirouette and study its properties.


\begin{figure}
  \begin{mathpar}
    \infer[SwapSendSend]{\GenericLocI \neq \GenericLocIII\\ \GenericLocII \neq \GenericLocIII \\ \GenericLocI \neq \GenericLocIV\\ \GenericLocII \neq \GenericLocIV}{
      \ChorEquiv{\Send*{\GenericLocI}{e_1}{\GenericLocII}{x}{\Send{\GenericLocIII}{e_2}{\GenericLocIV}{y}{C}}}
      {\Send*{\GenericLocIII}{e_2}{\GenericLocIV}{y}{\Send{\GenericLocI}{e_1}{\GenericLocII}{x}{C}}}
    } \and
    \infer[SwapSendSync]{\GenericLocI \neq \GenericLocIII\\ \GenericLocII \neq \GenericLocIII \\ \GenericLocI \neq \GenericLocIV\\ \GenericLocII \neq \GenericLocIV}{
      \ChorEquiv{\Send*{\GenericLocI}{e}{\GenericLocII}{x}{\Sync{\GenericLocIII}{\GenChoice}{\GenericLocIV}{C}}}
      {\Sync*{\GenericLocIII}{\GenChoice}{\GenericLocIV}{\Send{\GenericLocI}{e}{\GenericLocII}{x}{C}}}
    }\and
    \infer[SwapSendIf]{\GenericLocI \neq \GenericLocIII\\ \GenericLocII \neq \GenericLocIII}{
      \ChorEquiv{\Send*{\GenericLocI}{e_1}{\GenericLocII}{x}{\ChorIf*{\GenericLocIII}{e_2}{C_1}{C_2}}}
      {\ChorIf*{\GenericLocIII}{e_2}{\Send{\GenericLocI}{e_1}{\GenericLocII}{x}{C_1}}{\Send{\GenericLocI}{e_1}{\GenericLocII}{x}{C_2}}}
    }\and
    \infer[SwapSyncSync]{\GenericLocI \neq \GenericLocIII\\ \GenericLocII \neq \GenericLocIII \\ \GenericLocI \neq \GenericLocIV\\ \GenericLocII \neq \GenericLocIV}{
      \ChorEquiv{\Sync*{\GenericLocI}{\GenChoice}{\GenericLocII}{\Sync{\GenericLocIII}{\GenChoiceII}{\GenericLocIV}{C}}}
      {\Sync*{\GenericLocIII}{\GenChoiceII}{\GenericLocIV}{\Sync{\GenericLocI}{\GenChoice}{\GenericLocII}{C}}}
    }\and
    \infer[SwapSyncIf]{\GenericLocI \neq \GenericLocIII\\ \GenericLocII \neq \GenericLocIII}{
      \ChorEquiv{\Sync*{\GenericLocI}{\GenChoice}{\GenericLocII}{\ChorIf*{\GenericLocIII}{e}{C_1}{C_2}}}
      {\ChorIf*{\GenericLocIII}{e}{\Sync{\GenericLocI}{\GenChoice}{\GenericLocII}{C_1}}{\Sync{\GenericLocI}{\GenChoice}{\GenericLocII}{C_2}}}
    }\and
    \infer[SwapIfIf]{\GenericLocI \neq \GenericLocII}{
      \ChorEquiv{\ChorIf*{\GenericLocI}{e_1}{\ChorIf*{\GenericLocII}{e_2}{C_1}{C_2}}{\ChorIf*{\GenericLocII}{e_2}{C_3}{C_4}}}
      {\ChorIf*{\GenericLocII}{e_2}{\ChorIf*{\GenericLocI}{e_1}{C_1}{C_3}}{\ChorIf*{\GenericLocI}{e_1}{C_2}{C_4}}}
    }  
  \end{mathpar}
  \caption{Selected Choreography Equivalence Rules}
  \label{fig:chor-equiv}
\end{figure}

We define choreography equivalence as the smallest equivalence relation which is also a congruence and satisfies the rules in Figure~\ref{fig:chor-equiv}.
The complete formal definition can be found in \iftr{}Appendix~\ref{sec:full-chor-equiv}\else{}the accompanying technical report~\cite{HirschG21}\fi{}.

Choreography equivalence respects types, even for open programs:
\begin{thm}[Equivalence Respects Types]
  \label{thm:equiv-types}
  If $\ChorTyping{\Gamma}{\Delta}{C_1}{\ChorTypeTau}$ and $\ChorEquiv{C_1}{C_2}$, then $\ChorTyping{\Gamma}{\Delta}{C_2}{\ChorTypeTau}$
\end{thm}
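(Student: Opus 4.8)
The plan is to prove the stronger biconditional, $\ChorTyping{\Gamma}{\Delta}{C_1}{\ChorTypeTau}$ if and only if $\ChorTyping{\Gamma}{\Delta}{C_2}{\ChorTypeTau}$, by induction on the derivation of $\ChorEquiv{C_1}{C_2}$; the stated implication is then immediate. Strengthening to an iff is exactly what makes the symmetry case go through, since the induction hypothesis is already symmetric. Reflexivity is trivial, transitivity composes two instances of the hypothesis, and the congruence cases — where $C_1$ and $C_2$ share a top-level constructor and differ only in corresponding subterms — are uniform: invert the (syntax-directed) typing rule for that constructor from Figure~\ref{fig:chor-types}, apply the induction hypothesis to the rewritten subterm(s), and reapply the same rule. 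All the real work is in the base cases, the swap rules of Figure~\ref{fig:chor-equiv}.

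Before attacking those, I would establish a \emph{local-context exchange} lemma for choreographic typing: if $\ChorTyping{\Gamma, \ChorTypingTriple{\GenericLocI}{x}{\ExprTypeTauI}, \ChorTypingTriple{\GenericLocII}{y}{\ExprTypeTauII}, \Gamma'}{\Delta}{C}{\ChorTypeTau}$, then the same judgment holds with the two local bindings swapped. This lifts the admissible \textsc{Exchange} rule of the local type system (Figure~\ref{fig:local-type-rules}) through the projection operator $\ProjectCtxt{\cdot}{\GenericLoc}$ and is proved by induction on the choreographic typing derivation; the only content is that, for every target location $\GenericLoc$, the two reordered contexts project to local contexts that are either identical or related by a single local \textsc{Exchange}, so every expression-typing premise survives. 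I would pair it with the purely definitional observation that extending $\Gamma$ with a binding at $\GenericLocII$ leaves $\ProjectCtxt{\Gamma}{\GenericLoc}$ unchanged whenever $\GenericLocII \neq \GenericLoc$; this is what lets the location side-conditions of Figure~\ref{fig:chor-equiv} discharge the expression premises.

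With the lemma in hand, each swap case amounts to inverting the two nested typing rules on one side and reassembling them in the opposite order on the other. Take \textsc{SwapSendSend}: inverting the two \textsc{Send}s on the left yields $\ExprTyping{\ProjectCtxt{\Gamma}{\GenericLocI}}{e_1}{\ExprTypeTauI}$ and $\ExprTyping{\ProjectCtxt{\Gamma}{\GenericLocIII}}{e_2}{\ExprTypeTauII}$ (using $\GenericLocII \neq \GenericLocIII$ to drop the intervening binding from the projection) together with a typing of $C$ under $\Gamma, \ChorTypingTriple{\GenericLocII}{x}{\ExprTypeTauI}, \ChorTypingTriple{\GenericLocIV}{y}{\ExprTypeTauII}$. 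The right-hand derivation needs exactly the same expression judgments (now $\GenericLocIV \neq \GenericLocI$ drops the other intervening binding) plus a typing of $C$ under the context with its last two bindings exchanged, which the structural lemma supplies. The purely communication-shaped swaps \textsc{SwapSendSync} and \textsc{SwapSyncSync} are similar but easier, as synchronization adds no local binding.

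The main obstacle — and the reason the theorem needs the unusual uniqueness-of-local-typing hypothesis from Section~\ref{sec:typed-local-program} — is the family of swaps that move a communication across a conditional: \textsc{SwapSendIf}, \textsc{SwapSyncIf}, and \textsc{SwapIfIf}. Here an expression that occurs \emph{once} on one side occurs \emph{twice} on the other, once per branch. Consider \textsc{SwapSendIf} read right-to-left: the right-hand side has a separate \textsc{Send} of $e_1$ in each branch, and inverting them gives $\ExprTyping{\ProjectCtxt{\Gamma}{\GenericLocI}}{e_1}{\sigma}$ in one branch and $\ExprTyping{\ProjectCtxt{\Gamma}{\GenericLocI}}{e_1}{\sigma'}$ in the other, with $C_1$ typed under $\ChorTypingTriple{\GenericLocII}{x}{\sigma}$ and $C_2$ under $\ChorTypingTriple{\GenericLocII}{x}{\sigma'}$. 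To build the single outer \textsc{Send} on the left I must bind $x$ at one type, so I need $\sigma = \sigma'$; this is precisely where uniqueness of local typing is invoked, forcing the two independently derived types of $e_1$ to coincide so that the branch derivations can be retyped under a common context and recombined under the conditional. Once that identification is made the reassembly is routine. I expect this unification step, together with getting the projection-versus-reordering bookkeeping of the structural lemma exactly right, to be the only genuinely delicate parts of the proof.
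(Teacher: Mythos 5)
Your proposal is correct and takes essentially the paper's route: induction over the equivalence derivation, an admissible local-context exchange lemma (stated in the paper's appendix among the Pirouette structural rules) to reorder the bindings introduced by swapped sends, the observation that projection ignores bindings at other locations to discharge the expression premises under the location side-conditions, and uniqueness of local typing to reconcile the two independently derived types of the duplicated message when a send is pulled out of a conditional --- which is exactly the rationale the paper gives for its unique-typing requirement. One minor correction: uniqueness is genuinely needed only for \textsc{SwapSendIf}/\textsc{SwapIfSend}; in \textsc{SwapSyncIf} no expression is duplicated at all (synchronizations carry no expression and the guard occurs once on each side), and in \textsc{SwapIfIf} the duplicated guards are both forced to type $\Bool$ by the \textsc{If} rule, so no unification step is required there.
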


Interestingly, Theorem~\ref{thm:equiv-types} has an outsized influence on our system model.
In particular, in order to prove that the rule \textsc{SwapSendIf} respects types, we need to know that expressions have unique types.
After all, if $e_2$ is given type~$\ExprTypeTauI$ in the \True branch, but type~$\ExprTypeTauII$ in the \False branch, there might not be a type that we could assign to $e_2$ that makes both branches type check.
Including a typing annotation on send statements might solve this problem.
However, this would require mixing the type system with the syntax of choreographies.
While this is a reasonable choice in many situations, here we choose to keep them separate, since that makes it easy to tell when our results rely on the type system and when they do not.

Finally, our operational semantics respects equivalence, allowing us to prove the following simulation theorem:
\begin{thm}[Operational Semantics Simulates Equivalence]
  If $\ChorStepsTo{R}{B}{C_1}{C_2}$ and $\ChorEquiv{C_1}{C_1'}$, then there is a $C_2'$ such that $\ChorEquiv{C_2}{C_2'}$ and $\ChorStepsTo{R}{B}{C_1'}{C_2'}$.  
\end{thm}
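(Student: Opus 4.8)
The plan is to proceed by induction on the derivation of $\ChorEquiv{C_1}{C_1'}$. Since $\Equiv$ is the least equivalence relation that is a congruence and contains the swap rules of Figure~\ref{fig:chor-equiv}, each derivation is built from reflexivity, symmetry, transitivity, the congruence rules (one per constructor), and the seven base swaps. Reflexivity is immediate (take $C_2' = C_2$). Transitivity composes two simulations: if $\ChorEquiv{C_1}{C_1'}$ comes from $\ChorEquiv{C_1}{D}$ and $\ChorEquiv{D}{C_1'}$, I apply the induction hypothesis to the first to get $\ChorStepsTo{R}{B}{D}{C_2'}$ with $\ChorEquiv{C_2}{C_2'}$, then to the second to get $\ChorStepsTo{R}{B}{C_1'}{C_2''}$ with $\ChorEquiv{C_2'}{C_2''}$, and conclude $\ChorEquiv{C_2}{C_2''}$ by transitivity of $\Equiv$. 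The two genuinely load-bearing cases are the base swaps and the congruence rules.

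For a base swap rule, $C_1$ is its left-hand choreography and $C_1'$ its right-hand one, and I case-analyze on the operational rule justifying $\ChorStepsTo{R}{B}{C_1}{C_2}$. The essential phenomenon is illustrated by \textsc{SwapSendSend}, where $C_1 = \Send{\GenericLocI}{e_1}{\GenericLocII}{x}{\Send{\GenericLocIII}{e_2}{\GenericLocIV}{y}{C}}$. A reduction acting on \GenericLocI's send (\textsc{SendE} or \textsc{SendV}) is reproduced in $C_1'$, where \GenericLocI's send is now nested, by prefixing a \textsc{SendI} step — which enlarges the block set by $\{\GenericLocIII,\GenericLocIV\}$ — and the two outcomes are related again by \textsc{SwapSendSend}. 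Symmetrically, a reduction acting on \GenericLocIII's send is derived in $C_1$ by a \textsc{SendI} step (blocking $\{\GenericLocI,\GenericLocII\}$) followed by the acting rule, whereas in $C_1'$ that send is already at the surface and the acting rule applies directly. The delicate point is that labels and block sets must match exactly: the redices $\RSendE{\cdot}{\cdot}{\cdot}{\cdot}$ and $\RSendV{\cdot}{\cdot}{\cdot}$ record only the acting locations, so they coincide, while \textsc{SendI} and \textsc{SyncI} preserve the redex; and the side conditions $\GenericLocI,\GenericLocII \notin \{\GenericLocIII,\GenericLocIV\}$ guarantee that the premise $\GenericLocIII \notin B \cup \{\GenericLocI,\GenericLocII\}$ of the blocked step holds exactly when $\GenericLocIII \notin B$ holds for the surface step. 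Hence the same $R$ and the same $B$ serve on both sides. The remaining swaps are analogous.

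For the swaps involving conditionals (\textsc{SwapSendIf}, \textsc{SwapSyncIf}, \textsc{SwapIfIf}), pushing a communication past a guard duplicates it across the two branches, and the \textsc{IfI} rule is precisely what is needed: it demands that both branches reduce by the \emph{same} redex, which is automatic because the duplicated communications are identical, and it blocks the guard location, matching the side conditions. Conversely, a step that \textsc{IfI} performs inside the branches of $C_1$ is replayed on the single copy in $C_1'$. For the congruence rules, I invert the operational step and apply the induction hypothesis to the immediate subterm in which it occurs: for example if $C_1 = \AppGlobal{D}{C}$ and $C_1' = \AppGlobal{D'}{C}$ with $\ChorEquiv{D}{D'}$ and the step is \textsc{AppGlobalFun} with redex $\RFun(R_0)$, the hypothesis on $D$ yields a matching step of $D'$ and the results reassemble under \AppGlobal, with the wrappers $\RFun$, $\RArg$ and any \textsc{SendI}/\textsc{SyncI}/\textsc{IfI} block-set enrichments carried along unchanged. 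The \textsc{AppGlobal} ($\beta$) case and the \textsc{SendV} case additionally require that $\Equiv$ is preserved by the substitutions $\SingleSubst{X}{V}$ and $\SingleChorExprSubst{\GenericLoc}{x}{v}$; each is a routine induction showing substitution commutes with the swap rules, which I would establish as a preliminary lemma.

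Finally, symmetry is handled by proving the base simulation for \emph{both} orientations of each swap rule: for the shape-preserving families (\textsc{SwapSendSend}, \textsc{SwapSyncSync}) the reverse is again an instance of the same rule after renaming the location metavariables, while for the shape-changing families (\textsc{SwapSendIf} and the like) I verify the reverse direction explicitly. This makes the base case bidirectional, so the symmetry case of the induction follows immediately by exchanging the roles of $C_1$ and $C_1'$. I expect the main obstacle to be the block-set bookkeeping in the swap cases — checking, for each of the seven rules and each applicable operational rule, that the enlarged block set incurred by a \textsc{SendI}/\textsc{SyncI}/\textsc{IfI} prefix on one side imposes the same reducibility constraint as the bare block set on the other. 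This matching works only because of the disjointness side conditions, so keeping precise track of which locations are blocked and discharging the corresponding non-membership premises is where the real care is required.
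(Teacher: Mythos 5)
Your proposal is correct and follows essentially the same route as the paper's (mechanized) proof: induction on the derivation of $\ChorEquiv{C_1}{C_1'}$ with case analysis on the operational step, where the block-set bookkeeping is discharged exactly by the swap rules' disjointness side conditions, the internal rules preserve the redex so the same $R$ and $B$ work on both sides, and the preliminary substitution lemma you identify is precisely the paper's own ``Equivalence of Syntactic Operations'' theorem (Appendix~\ref{sec:addit-prop-chor}). Note also that the paper's formal definition of $\Equiv$ (Appendix~\ref{sec:full-chor-equiv}) contains no symmetry rule---both orientations of each swap are primitive---so your bidirectional treatment of the base cases is exactly what the induction requires.
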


Equivalence can be used to define a new, seemingly simpler operational semantics for choreographies. 
This new semantics, which we write $\EquivStep$, has no redices or block sets and it replaces internal steps from our semantics with the following rule:
$$
\infer[EquivStep]{\ChorEquiv{C_1}{C_1'}\\ \EquivStepsTo{C_1'}{C_2'}\\ \ChorEquiv{C_2'}{C_2}}{\EquivStepsTo{C_1}{C_2}}
$$
We formalize this semantics in \iftr{}Appendix~\ref{sec:equiv-based-semant}\else{}the accompanying technical report~\cite{HirschG21}\fi{}.

In fact, much of the prior work on choreographies defined their operational semantics in precisely this way~\citep{Cruz-FilipeM17,Cruz-FilipeM17c,CarboneMS14,Montesi13,LaneseMZ13}.
While this new semantics is good for prior work, it is too weak for Pirouette, which allows reduction of local expressions.
To see why, consider the program
$
\Send{\GenericLocI}{e_1}{\GenericLocII}{x}{
  \Own{\GenericLocIII}{e_2}
}
$.
It should be possible for \GenericLocIII to evaluate its return value, even though \GenericLocI has not yet sent its message to \GenericLocII.
However, under the equivalence-based semantics, there is no way to reduce $e_2$, since we cannot use an equivalence to bring it up to the top.

A similar problem appears for sends.
Consider the program
$
\Send{\GenericLocI}{e_1}{\GenericLocII}{x}{
  \Send{\GenericLocIII}{e_2}{\GenericLocII}{x}{
    C
  }
}
$.
In this example, both \GenericLocI and \GenericLocIII are trying to send a message to \GenericLocII.
If $e_2$ can be reduced further, then \GenericLocIII ought to be able to reduce $e_2$ while waiting for \GenericLocII to be ready to receive the second message.
However, in the equivalence-based semantics, we cannot use \textsc{SwapSendSend} to bring up the second send, since that would swap the order of \GenericLocII's receives.

Finally, consider programs that contain function applications, such as
$$
\Send*{\GenericLocI}{e_1}{\GenericLocII}{x}{
  \AppGlobal{(\FunGlobal{F}{X}{X})}{(\Send{\GenericLocIII}{e_2}{\GenericLocIV}{y}{\Own{\GenericLocIV}{y}})}
}
$$
This program allows \GenericLocI to send a message to \GenericLocII and then applies the identity function to another choreography.
Importantly, the argument choreography does not mention \GenericLocI or \GenericLocII.
Therefore, \GenericLocIII should be able to reduce $e_2$ before \GenericLocI completes its send.
However, it is not possible to do this in the equivalence-based definition mentioned before.

This makes reasoning with $\Equiv$ much less powerful here than in previous choreographic systems.
However, the difficulty is limited to the problems mentioned above.
\begin{thm}[Weak Semantics]
  Let $\WeakStep{R}{B}$ be the relation obtained by modifying the relation $\ChorStep{R}{B}$ as follows:
  \begin{itemize}
  \item $\Own{\GenericLoc}{e}$ can only reduce when the block set is empty,
  \item messages can only be reduced when neither the sender nor the receiver are in the block set, and
  \item in $\DefLocal{\GenericLoc}{x}{C_1}{C_2}$ and function application, subchoreographies can only reduce when the block set is empty.
  \end{itemize}
  (The semantics $\WeakStep{R}{B}$ is formalized in \iftr{}Appendix~\ref{sec:weak-block-set}\else{}the accompanying technical report~\cite{HirschG21}\fi{}.)
  Then whenever $\WeakStepsTo{R}{B}{C_1}{C_2}$, $\EquivStepsTo{C_1}{C_2}$.
  Moreover, whenever $\EquivStepsTo{C_1}{C_2}$, there is a redex~$R$ and choreography $C_2'$ such that $\WeakStepsTo{R}{\emptyset}{C_1}{C_2'}$ and $\ChorEquiv{C_2}{C_2'}$.
\end{thm}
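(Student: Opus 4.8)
The plan is to prove the two inclusions separately by induction. Throughout I treat $\EquivStep$ as the least relation closed under the primitive reduction rules of the block-set semantics (stripped of their redices and block sets), the structural congruence rules for application and \textsc{DefLocal} (likewise stripped), and the single \textsc{EquivStep} rule. Before starting either direction I would record two preliminary facts. First, $\EquivStep$ is closed under $\Equiv$ on both endpoints: if $\ChorEquiv{C_1}{C_1'}$, $\EquivStepsTo{C_1'}{C_2'}$, and $\ChorEquiv{C_2'}{C_2}$, then $\EquivStepsTo{C_1}{C_2}$, which is immediate from \textsc{EquivStep} together with the symmetry and transitivity of $\Equiv$ (used to collapse nested applications). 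Second, every weak step is a full step, so $\WeakStep{R}{B} \subseteq \ChorStep{R}{B}$; this lets me feed weak steps into the already-proved theorem that the operational semantics simulates equivalence.

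For the forward inclusion I would induct on the weak derivation, first decomposing it as a single primitive reduction $D_1 \to D_2$ sitting inside a congruence context $E$, so that $C_1 = E[D_1]$ and $C_2 = E[D_2]$. The weak restrictions pin down the shape of $E$: because an $\Own{\GenericLoc}{e}$-reduction or an application/\textsc{DefLocal}-subterm reduction fires only when the block set is empty, no blocking prefix can sit above such a redex, whereas a message reduction fires only when sender and receiver are free. Consequently $E$ factors as outer application/\textsc{DefLocal} layers (which carry an empty block set) wrapped around inner send/sync/if layers nearest the redex. The application and \textsc{DefLocal} layers are discharged directly by the congruence rules that $\EquivStep$ inherits, so the real content is a \emph{floating lemma}: each send/sync/if layer commutes with the redex via one of the swap rules of Figure~\ref{fig:chor-equiv}, whose side conditions ($\GenericLocI \neq \GenericLocIII$, and so on) are discharged exactly by the disjointness that the block set guarantees between the redex's active locations and the locations of the surrounding blocking construct. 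Iterating the swaps floats the redex to the top of the innermost subterm, a single $\EquivStep$ primitive rule applies there, and the reverse swaps restore the context; closure of $\EquivStep$ under $\Equiv$ then yields $\EquivStepsTo{E[D_1]}{E[D_2]}$.

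I expect the \textbf{main obstacle} to be the if-layer case of the floating lemma. When the surrounding blocking construct is a conditional, the congruence rule \textsc{IfI} forces the very same redex to occur in \emph{both} branches, and floating it outward past the conditional is possible only through \textsc{SwapSendIf}, \textsc{SwapSyncIf}, or \textsc{SwapIfIf}, each of which demands that the two branches share a common send/sync/if prefix. The delicate step is to verify, layer by layer, that the ``same redex in both branches'' hypothesis supplied by \textsc{IfI} lines up with precisely the syntactic shape these branch-swap equivalences require, and that the accumulated block set keeps the branching location distinct from the construct being floated out. This is exactly the phenomenon, highlighted earlier, whereby redices are what prevent an if from reducing different programs in its two branches.

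For the backward inclusion I would induct on the $\EquivStep$ derivation. A primitive leaf step maps directly to a weak step with $B = \emptyset$, since the empty block set satisfies every weak restriction at top level, and we may take $C_2' = C_2$ with $\ChorEquiv{C_2}{C_2}$ by reflexivity; the inherited application and \textsc{DefLocal} congruences are handled by the induction hypothesis since they too fire with empty block set. For the \textsc{EquivStep} rule, with $\ChorEquiv{C_1}{C_1'}$, $\EquivStepsTo{C_1'}{C_2'}$, and $\ChorEquiv{C_2'}{C_2}$, the induction hypothesis gives $R$ and $C_2''$ with $\WeakStepsTo{R}{\emptyset}{C_1'}{C_2''}$ and $\ChorEquiv{C_2'}{C_2''}$. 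I then transport this step along $\ChorEquiv{C_1'}{C_1}$ using a \emph{weak-simulation} lemma to obtain $C_2'''$ with $\WeakStepsTo{R}{\emptyset}{C_1}{C_2'''}$ and $\ChorEquiv{C_2''}{C_2'''}$; symmetry and transitivity of $\Equiv$ give $\ChorEquiv{C_2}{C_2'''}$, so $C_2' := C_2'''$ works. The one genuinely new ingredient is the weak-simulation lemma. Rather than reprove simulation from scratch, I would apply the existing simulation theorem to the full step underlying the weak step and then argue that the simulated full step again lies in the weak fragment: since the swap equivalences only permute send, sync, and if prefixes and never extract an $\Own{\GenericLoc}{e}$ or an application/\textsc{DefLocal} subterm from underneath a blocking construct, the block set that the redex sees is invariant under $\Equiv$, so the weak restrictions are preserved.
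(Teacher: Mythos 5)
Your proposal is structurally sound. Since the paper supplies no prose proof of this theorem (it is established only in the Coq development), I can only measure it against what the formalization must do; on that score your two-directional argument --- forward by induction on the weak derivation via a swap-based floating lemma whose side conditions are discharged by block-set disjointness, backward by induction on the $\EquivStep$ derivation via a transport lemma for weak steps --- is exactly the natural route, and your identification of the \textsc{IfI} case as the crux of the forward direction is correct.

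Two points deserve sharpening. First, your opening decomposition --- ``a single primitive reduction $D_1 \to D_2$ sitting inside a congruence context $E$'' --- is not literally available: \textsc{IfI} has two premises, so a weak derivation is a tree containing one copy of the redex per branch (exponentially many under nested conditionals), not a linear stack of congruences around one leaf. The proof must therefore be an induction on the derivation in which the \textsc{IfI} case first invokes the inductive hypothesis on both branches to surface the redex's construct at the top of each branch, and then observes that the shared redex $R$ pins down that construct --- sender, receiver, and message for send redices (up to the bound receive variable, which the de~Bruijn presentation disposes of), location and guard for if redices --- so that \textsc{SwapIfSend}, \textsc{SwapIfSync}, or \textsc{SwapIfIf} applies. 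Your prose does describe this, so the flaw is presentational; but note also that the floating lemma must carry an equivalence on the \emph{reduct} side as well as the source side, since for send-value, sync, and if-true/if-false redices the floated construct disappears upon reduction, and matching the actual weak reduct requires that $\Equiv$ commute with local substitution and branch selection (Theorem~\ref{thm:equiv-syntax}). Second, in the backward direction, your derivation of the weak-simulation lemma from the paper's full simulation theorem is not actually a shortcut: that theorem only asserts the \emph{existence} of a full step from $C_1$, so concluding that this step ``lies in the weak fragment'' requires precisely the invariance claim you sketch (the collection of send/sync/if constructs above any subterm occurrence is preserved by $\Equiv$), and establishing that claim is itself an induction over the equivalence derivation --- that is, it is the direct proof of weak simulation, with the same case analysis the paper's simulation theorem needs. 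Neither point invalidates the proposal; they are the places where the sketch's wording hides the real inductive bookkeeping.
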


While this makes our connection to previous work more clear, we prefer to work with the $\ChorStep{R}{B}$ relation defined in Section~\ref{sec:funct-chor} due to its extra power.


\section{Endpoint Projection}
\label{sec:comp-chor}

While Pirouette programs are designed to represent a collection of concurrently-executing programs, so far that has been a guiding intuition rather than a formal property.
In order to change that, we define the \emph{endpoint projection} operation, which \emph{extracts a program} for each location from a choreography, if it is possible to do so.
This extracted program is expressed in a language called the \emph{control language} which features local execution and explicit constructs for message passing.
The extracted programs of all locations are composed in parallel.
Here, we explain the control language, then explain the extraction and finally show that the parallel composition of all extracted programs reflects and preserves the operational semantics of the choreography.

\subsection{The Control Language}
\label{sec:control-language}

\begin{figure}
  \begin{syntax}
    \category[Control Expression]{E}
    \alternative{X}
    \alternative{\ContFunLocal{F}{x}{E}}
    \alternative{\ContFunGlobal{F}{X}{E}}
    \alternative{\ContAppLocal{E}{e}}
    \alternative{\ContAppGlobal{E_1}{E_2}}
    \alternativeLine{\Unit}
    \alternative{\Ret(e)}
    \alternative{\LetRet{x}{E_1}{E_2}}
    \alternativeLine{\ContSend{e}{\GenericLoc}{E}}
    \alternative{\ContReceive{x}{\GenericLoc}{E}}
    \alternativeLine{\ContIf{e}{E_1}{E_2}}
    \alternative{\ContChoose{\GenChoice}{\GenericLoc}{E}}
    \alternativeLine{\AllowChoiceLR{\GenericLoc}{E_1}{E_2}}
    \category[Systems]{\Pi} \alternative{\LocBind{\GenericLocI}{E_1} \mathrel{\Par} \cdots \mathrel{\Par} \LocBind{\GenericLocN}{E_n}}
  \end{syntax}
  
  \caption{Control Language Syntax}
  \label{fig:control-syn}
\end{figure}

Our control language (Figure~\ref{fig:control-syn}) is a concurrent $\lambda$-calculus where messages are values of local programs.
It is inspired both by work on process calculi and by concurrent ML.

Like with Pirouette, control programs have two types of variable: local variables and control variables.
We write control variables with capital letters, because they play a role similar to that played by choreography variables.
Local variables are the variables of local programs.
There are correspondingly two types of functions, local functions and global functions (written $\ContFunLocal{F}{x}{E}$ and $\ContFunGlobal{F}{X}{E}$, respectively).
Note that because in the control language---unlike in Pirouette---every local program is at the same location, local substitution does not take location into account.

Control programs can return the result of evaluating a local program, which we write $\Ret(e)$.
We can use the result of such a program in another program using the syntax $\LetRet{x}{E_1}{E_2}$.
However, unlike choreographies, control programs can also return the trivial value $\Unit$.
This is used for control programs that do not have a return value on them.

Communication happens between control programs composed in parallel through explicit send and receive commands.
We will see later how parallel composition works, and how communication takes place.

There are two forms of branching in the control language.
``If'' statements are standard, and are a sequential form of branching.
We also have \emph{external choice}, which is a distributed form of branching.
The program $\AllowChoiceLR{\GenericLoc}{E_1}{E_2}$ represents allowing $\GenericLoc$ to choose which branch to take: $E_1$, labeled \LChoice, or $E_2$, labeled \RChoice.
The syntax $\ContChoose{\GenChoice}{\GenericLoc}{E}$ represents telling $\GenericLoc$ to take the branch labeled $\GenChoice$.

We refer to the parallel composition of a control program for each location as a \emph{system} (we use the symbol $\Pi$ to refer to systems).
The notation $\LocBind{\GenericLoc}{E}$ says that program $E$ is running on the node $\GenericLoc$.
As can be seen in Figure~\ref{fig:control-syn}, a system is a finite parallel compositions of such $\LocBind{\GenericLoc}{E}$.

We often use the syntax $\BigPar{\ell \in \mathfrak{L}}{E_\ell}$ to refer to a system where $\mathfrak{L}$ is a finite set of locations and $E_{-}$ is a function from locations to control program expressions.

\paragraph{Location semantics}
In defining the operational semantics of systems, two syntactic operations will be useful.
The first is \emph{system lookup} (written $\Pi(\LocA)$), which refers to the control program bound to a particular location~$\LocA$.
The second is \emph{system update} (written $\Subst{\Pi}{\SingleSubst{\LocA}{E}}$), which replaces the program bound to~$\LocA$.
We define them as follows:
\begin{mathpar}
  \left(\BigPar{\GenericLoc \in \mathfrak{L}}{E_{\GenericLoc}}\right)(\LocA) = E_{\LocA} \and
  \Subst{\left(\BigPar{\GenericLoc \in \mathfrak{L}}{E_{\GenericLoc}}\right)}{\SingleSubst{\LocA}{E}} = \BigPar{\GenericLoc \in \mathfrak{L}}{E_{\GenericLoc}'}~\text{where}~E_{\GenericLoc}' = \left\{\begin{array}{ll}E & \GenericLoc = \LocA\\E_{\GenericLoc} & \text{otherwise}\end{array}\right.
\end{mathpar}

\begin{figure}
  \begin{syntax}
    \category[Label]{l} \alternative{\TauLabel}
    \alternative{\SendLabel{v}{\GenericLoc}}
    \alternative{\RecvLabel{\GenericLoc}{v}}
    \alternative{\ChooseLabel{\GenChoice}{\GenericLoc}}
    \alternative{\AllowChoiceLabel{\GenericLoc}{\GenChoice}}
    \alternative{\SyncTauLabel}
    \alternative{\FunLabel(l)}
    \alternative{\ArgLabel(l)}
  \end{syntax}
  \begin{mathpar}
    \infer[SendE]{
      \ExprStepsTo{e_1}{e_2}
    }{
      \ContStepsTo{\TauLabel}{\ContSend{e_1}{\GenericLoc}{E}}{\ContSend{e_2}{\GenericLoc}{E}}
    }\\
    \infer[SendV]{
      \ExprValue{v}
    }{
      \ContStepsTo{\SendLabel{v}{\GenericLoc}}{\ContSend{v}{\GenericLoc}{E}}{E}
    } \and
    \infer[RecvV]{
      \ExprValue{v}
    }{
      \ContStepsTo{\RecvLabel{\GenericLoc}{v}}{\ContReceive{x}{\GenericLoc}{E}}{\Subst{E}{\SingleSubst{x}{v}}}
    } \and
    \infer[Choose]{ }{
      \ContStepsTo{\ChooseLabel{\GenChoice}{\GenericLoc}}{\ContChoose{\GenChoice}{\GenericLoc}{E}}{E}
    } \and
    \infer[AllowChoiceL]{ }{
      \ContStepsTo{\AllowChoiceLabel{\GenericLoc}{\LChoice}}
      {\AllowChoiceLR{\GenericLoc}{E_1}{E_2}}{E_1}
    } \and
    \infer[LetRet]{
      \ExprValue{v}
    }{
      \ContStepsTo{\SyncTauLabel}{\LetRet{x}{\Ret(v)}{E_2}}{\Subst{E_2}{\SingleSubst{x}{v}}}
    } \and
    \infer[AppLocal]{
      \ExprValue{v}
    }{
      \ContStepsTo{\SyncTauLabel}{\ContAppLocal{\left(\ContFunLocal{F}{x}{E}\right)}{v}}{\Subst{E}{\SingleSubst{x}{v}}}
    } \and
    \infer[AppGlobal]{
      \ContValue(V)
    }{
      \ContStepsTo{\SyncTauLabel}{\ContAppGlobal{\left(\ContFunGlobal{F}{X}{E}\right)}{V}}{\Subst{E}{\SingleSubst{X}{V}}}
    }
  \end{mathpar}
  \caption[Control Programs Semantics]{Control Programs Semantics (Selected Rules)}
  \label{fig:control-sem}
\end{figure}

The semantics of control programs is given via a labeled transition system.
This allows systems to match up corresponding rules in their semantics.
The syntax of labels and selected rules can be found in Figure~\ref{fig:control-sem}, and the full set of rules can be found in \iftr{}Appendix~\ref{sec:contr-lang-oper}\else{}the accompanying technical report~\cite{HirschG21}\fi.

Internal steps that do not interact with the outside are given the label~$\TauLabel$
\footnote{It is standard to use $\tau$ to refer to internal steps.
  However, $\tau$~already represents Pirouette types.
  We thus use $\iota$ for ``internal.''}.
For instance, the rule \textsc{SendE} takes a local step in a message to be sent to $\GenericLoc$, which is an internal step.
Every step of a local program corresponds to an $\TauLabel$~step.

Sends and receives are labeled with matching labels: $\SendLabel{v}{\GenericLocII}$ for sends, and $\RecvLabel{\GenericLocI}{v}$ for receives.
Note that from the perspective of a single control-language program, receives are treated nondeterministically---any value could be received.
Our system semantics will force sends and receives to match up.
Similarly, external choice and its resolution have matching labels: $\AllowChoiceLabel{\GenericLoc}{\GenChoice}$ for external choice and $\ChooseLabel{\GenChoice}{\GenericLoc}$ for  its resolution.

In choreographies, all participants must $\beta$-reduce function applications together.
For instance, consider reducing a local function:
$$\ChorStepsTo{\RAppLocal{\GenericLoc}{v}}{\emptyset}{\AppLocal{\GenericLoc}{(\FunLocal{\GenericLoc}{F}{x}{C})}{e}}{\Subst{\Subst{C}{\SingleChorExprSubst{\GenericLoc}{x}{v}}}{\SingleSubst{F}{\FunLocal{\GenericLoc}{F}{x}{C}}}}$$
Because the choreography steps from a choreography with a function application to one without, \emph{every} location's control program changes.
In order to accommodate this, we use a new label,~$\SyncTauLabel$.
The only three rules labeled with~$\SyncTauLabel$ are \textsc{LetRet}, \textsc{AppLocal}, and \textsc{AppGlobal}, all of which can be found in Figure~\ref{fig:control-sem}.

\begin{figure}
  \begin{syntax}
    \category[System Label]{L}
    \alternative{\SysTau}
    \alternative{\CommLabel{\GenericLocI}{v}{\GenericLocII}}
    \alternative{\ChoiceLabel{\GenericLocI}{\GenChoice}{\GenericLocII}}
    \alternative{\SysSyncTau}
  \end{syntax}
  \textbf{Label Merge}
  \begin{mathpar}
    \infer[MergeIota]{ }{\LabelMerge{\TauLabel}{\TauLabel}{\SysTau}} \and
    \infer[MergeSync]{ }{\LabelMerge{\SyncTauLabel}{\SyncTauLabel}{\SysSyncTau}} \and
    \infer[MergeComm]{ }{\LabelMerge{\SendLabel{v}{\GenericLocII}}{\RecvLabel{\GenericLocI}{v}}{\CommLabel{\GenericLocI}{v}{\GenericLocII}}} \and
    \infer[MergeChoice]{ }{\LabelMerge{\ChooseLabel{\GenChoice}{\GenericLocII}}{\AllowChoiceLabel{\GenericLocI}{\GenChoice}}{\ChoiceLabel{\GenericLocI}{\GenChoice}{\GenericLocII}}} \and
    \infer[MergeFun]{
      \LabelMerge{l_1}{l_2}{L}
    }{
      \LabelMerge{\FunLabel(l_1)}{\FunLabel(l_2)}{L}
    } \and
    \infer[MergeArg]{
      \LabelMerge{l_1}{l_2}{L}
    }{
      \LabelMerge{\ArgLabel(l_1)}{\ArgLabel(l_2)}{L}
    }
  \end{mathpar}
  \textbf{System Steps}
  \begin{mathpar}
    \infer[Internal]{
      \ContStepsTo{\TauLabel}{\Pi(\GenericLoc)}{E}
    }{
      \SysStepsTo{\SysTau}{\Pi}{\Subst{\Pi}{\SingleSubst{\GenericLoc}{E}}}
    } \and
    \infer[Synchronized Internal]{
      \LabelMerge{l}{l}{\SyncTauLabel}\\
      \forall \GenericLoc \in \mathfrak{L},~\ContStepsTo{l}{E_{\GenericLoc}}{E'_{\GenericLoc}}
    }{
      \SysStepsTo{\SysSyncTau}{\BigPar{\GenericLoc \in \mathfrak{L}}{E_{\GenericLoc}}}{\BigPar{\GenericLoc \in \mathfrak{L}}{E_{\GenericLoc}'}}
    } \\
    \infer[Comm]{
      \GenericLocI \neq \GenericLocII\\
      \LabelMerge{l_1}{l_2}{\CommLabel{\GenericLocI}{v}{\GenericLocII}}\\\\
      \ContStepsTo{l_1}{\Pi(\GenericLocI)}{E_1}\\
      \ContStepsTo{l_2}{\Pi(\GenericLocII)}{E_2}
    }{
      \SysStepsTo{\CommLabel{\GenericLocI}{v}{\GenericLocII}}{\Pi}{\Subst{\Pi}{\SingleSubst{\GenericLocI}{E_1}, \SingleSubst{\GenericLocII}{E_2}}}
    } \and
    \infer[Choice]{
      \GenericLocI \neq \GenericLocII\\
      \LabelMerge{l_1}{l_2}{\ChoiceLabel{\GenericLocI}{\GenChoice}{\GenericLocII}}\\\\
      \ContStepsTo{l_1}{\Pi(\GenericLocI)}{E_1}\\
      \ContStepsTo{l_2}{\Pi(\GenericLocII)}{E_2}
    }{
      \SysStepsTo{\ChoiceLabel{\GenericLocI}{\GenChoice}{\GenericLocII}}{\Pi}{\Subst{\Pi}{\SingleSubst{\GenericLocI}{E_1}, \SingleSubst{\GenericLocII}{E_2}}}
    }
  \end{mathpar}
  \caption{System Semantics}
  \label{fig:sys-sem}
\end{figure}

\paragraph{System semantics}
Systems are also given semantics via a labeled transition system.
The system labels arise from a merging operator on control-language labels, where $\LabelMerge{l_1}{l_2}{L}$ ensures that labels $l_1$ and $l_2$ match, producing an output system label~$L$.
It also ensures that, in a function application, either both steps reduce the function or both reduce its argument.
The syntax of system labels, label merge operator, and system semantics can all be found in Figure~\ref{fig:sys-sem}.

The labels~$\TauLabel$ and~$\SyncTauLabel$ both refer to internal steps of a production. 
Hence, they can be merged with themselves to yield the corresponding system label.
The rule \textsc{Internal} allows any location~\GenericLoc to take an internal step without interfering with any other location.
Synchronized steps require the use of the \textsc{Synchronized Internal} rule, which requires that \emph{every} location take a \SyncTauLabel~step.

Send and receive labels must be matched together.
If $\GenericLocI$ sends $v$ to $\GenericLocII$, their labels merge together to a system label $\CommLabel{\GenericLocI}{v}{\GenericLocII}$ in rule \textsc{MergeComm}.
The rule \textsc{Comm} then allows both $\GenericLocI$ and $\GenericLocII$ to take their corresponding steps together, without interfering with any other locations.
The \textsc{Choice} rule behaves similarly, but for choice-based branching.

\subsection{Merging Control Programs}
\label{sec:merg-contr-progr}

Our goal is to extract a control program for every location in a choreography \emph{compositionally}.
However, compositionality  is greatly complicated by if branches.
To see why, consider the following choreography:
$$
\ChorIf*{\GenericLocI}{e}{
  \Send{\GenericLocII}{3}{\GenericLocI}{x}{
    \Sync{\GenericLocI}{\LChoice}{\GenericLocII}{
      \Own{\GenericLocII}{0}
    }
  }
}{
  \Send{\GenericLocII}{3}{\GenericLocI}{x}{
    \Sync{\GenericLocI}{\RChoice}{\GenericLocII}{
      \Own{\GenericLocII}{1}
    }
  }
}    
$$

Intuitively, we want $\GenericLocII$ to be associated with the control program
$$
\ContSend{3}{\GenericLocI}{
  \AllowChoiceLR{\GenericLocI}{\Ret(0)}{\Ret(1)}
}
$$
However, when defining our procedure formally, we want to extract a program for each branch of the if expression, and combine them together to get the final program.

This leads to two issues.
First, we need to be able to define a program for each branch.
Second, we need to be able to merge those two programs into a single program.

To see why it is difficult to define a program for each branch, consider the true branch of the program above.
We know that $\GenericLocI$ will send a synchronization message to $\GenericLocII$.
Thus, $\GenericLocII$ must allow $\GenericLocI$ to make a choice for it, as we saw earlier.
However, here we only have the $\LChoice$ branch available; the $\RChoice$ branch will not be available until we merge.

To solve this problem, we add one-branch choice constructs to our control language:
$$\AllowChoiceL*{\GenericLoc}{E}\hspace{3em}\text{and}\hspace{3.1em}\AllowChoiceR*{\GenericLoc}{E}$$
These act precisely like the two-choice construct, except that they only allow their one branch to be taken.
With these, we now have a program for $\GenericLocII$ we intend to extract from the branch above:
$$
\ContSend{3}{\GenericLocI}{
  \AllowChoiceL{\GenericLocI}{\Ret(0)}
}
$$

\begin{figure}
  \begin{mathpar}
  \ContMerge{\left(\AllowChoiceL*{\GenericLoc}{E_1}\right)}{\left(\AllowChoiceL*{\GenericLoc}{E_2}\right)} \triangleq \AllowChoiceL*{\GenericLoc}{\ContMerge{E_1}{E_2}} \and
  \ContMerge{\left(\AllowChoiceL*{\GenericLoc}{E_1}\right)}{\left(\AllowChoiceR*{\GenericLoc}{E_2}\right)} \triangleq \AllowChoiceLR*{\GenericLoc}{E_1}{E_2} \and
  \ContMerge{\left(\AllowChoiceL*{\GenericLoc}{E_1}\right)}{\left(\AllowChoiceLR*{\GenericLoc}{E_{2,1}}{E_{2,2}}\right)} \triangleq \AllowChoiceLR*{\GenericLoc}{\ContMerge{E_1}{E_{2,1}}}{E_{2,2}}
  \end{mathpar}
  \caption{Merge Operator Definition (Selected Parts)}
  \label{fig:cont-merge}
\end{figure}

We now need a way to merge the extracted programs from each branch into a single program.
We define a \emph{partial} merge operator~\ContMergeOperator, which ensures that two programs are the same until they allow both choices.
We adopt notation from computability theory and write $\Undefed{f(x)}$ when $f$ is a partial function to denote that the function is undefined.
So we would write $\Undefed{\ContMerge{E_1}{E_2}}$ if the merge of $E_1$ and $E_2$ is undefined.
We further write $\Defed{f(x)}$ (so $\Defed{\ContMerge{E_1}{E_2}}$) if $f(x)$ is defined, but we do not care about the value.
In Figure~\ref{fig:cont-merge}, you can find the definition of the merge operator when the left-hand side is $\AllowChoiceL{\GenericLoc}{E_1}$.
The remaining parts of the definition can be found in \iftr{}Appendix~\ref{sec:contr-progr-merge}\else{}the accompanying technical report~\cite{HirschG21}\fi{}.
\iftr{}Additional properties can be found in Appendix~\ref{sec:addit-prop-merg}.\fi

\subsection{Endpoint Projection, Defined}
\label{sec:endp-proj-defin}

We are now ready to define endpoint projection, or~EPP.
Merging is an important part of the definition of EPP, and EPP inherits partiality from merging.
We continue to use $\Undefed{f(x)}$ if $f(x)$ is undefined and $\Defed{f(x)}$ if $f(x)$ is defined, but we do not care about the value.

Endpoint projection is defined as follows: 
\begin{mathparpagebreakable}
  \ProjectC{\Own{\GenericLocI}{e}}{\GenericLocII} = \left\{\begin{array}{ll} \Ret(e) & \text{if}~\GenericLocI = \GenericLocII\\\Unit & \text{otherwise}\end{array}\right.\and
  \ProjectC{X}{\GenericLoc} = X \and
  \ProjectC{\Send{\GenericLocI}{e}{\GenericLocII}{x}{C}}{\GenericLocIII} =
    \left\{
      \begin{array}{ll}
        \uparrow & \text{if}~\GenericLocI=\GenericLocII=\GenericLocIII\\
        \ContSend{e}{\GenericLocII}{\ProjectC{C}{\GenericLocIII}} & \text{if}~\GenericLocI=\GenericLocIII \neq \GenericLocII\\
        \ContReceive{x}{\GenericLocI}{\ProjectC{C}{\GenericLocIII}} & \text{if}~\GenericLocI\neq\GenericLocIII=\GenericLocII\\
        \ProjectC{C}{\GenericLocIII} & \text{if}~\GenericLocI\neq\GenericLocIII \text{ and } \GenericLocII\neq\GenericLocIII
      \end{array}
    \right. \and
  \ProjectC{\ChorIf{\GenericLocI}{e}{C_1}{C_2}}{\GenericLocII} =
    \left\{
      \begin{array}{ll}
        \ContIf{e}{\ProjectC{C_1}{\GenericLocII}}{\ProjectC{C_2}{\GenericLocII}} & \text{if}~\GenericLocI=\GenericLocII\\
        \ContMerge{\ProjectC{C_1}{\GenericLocII}}{\ProjectC{C_2}{\GenericLocII}} & \text{otherwise}
      \end{array}
    \right. \and
  \ProjectC{\Sync{\GenericLocI}{\GenChoice}{\GenericLocII}{C}}{\GenericLocIII} =
    \left\{
      \begin{array}{ll}
        \uparrow & \text{if}~\GenericLocI=\GenericLocII=\GenericLocIII\\
        \ContChoose{\GenChoice}{\GenericLocII}{\ProjectC{C}{\GenericLocIII}} & \text{if}~\GenericLocI=\GenericLocIII \neq \GenericLocII\\
        \AllowChoiceL{\GenericLocI}{\ProjectC{C}{\GenericLocIII}} & \text{if}~\GenericLocI\neq\GenericLocIII=\GenericLocII~\text{and}~\GenChoice=\LChoice\\
        \AllowChoiceR{\GenericLocI}{\ProjectC{C}{\GenericLocIII}} & \text{if}~\GenericLocI\neq\GenericLocIII=\GenericLocII~\text{and}~\GenChoice=\RChoice\\
        \ProjectC{C}{\GenericLocIII} & \text{otherwise}
      \end{array}
    \right. \and
  \ProjectC{\DefLocal{\GenericLocI}{x}{C_1}{C_2}}{\GenericLocII} =
    \left\{
      \begin{array}{ll}
        \LetRet{x}{\ProjectC{C_1}{\GenericLocII}}{\ProjectC{C_2}{\GenericLocII}} & \text{if}~\GenericLocI=\GenericLocII\\
        \AppGlobal{(\ContFunGlobal{F}{X}{\ProjectC{C_2}{\GenericLocII}})}{\ProjectC{C_1}{\GenericLocII}} & \text{where $F,X$ are fresh, otherwise}
      \end{array}
    \right. \and
  \ProjectC{\FunLocal{\GenericLocI}{F}{x}{C}}{\GenericLocII} =
    \left\{
      \begin{array}{ll}
        \ContFunLocal{F}{x}{\ProjectC{C}{\GenericLocII}} & \text{if}~\GenericLocI = \GenericLocII\\
        \ContFunGlobal{F}{X}{\ProjectC{C}{\GenericLocII}} & \text{where $X$ is fresh, otherwise}
      \end{array}
    \right. \and
  \ProjectC{\AppLocal{\GenericLocI}{C}{e}}{\GenericLocII} =
    \left\{
      \begin{array}{ll}
        \ContAppLocal{\ProjectC{C}{\GenericLocII}}{e} & \text{if}~\GenericLocI=\GenericLocII\\
        \ContAppGlobal{\ProjectC{C}{\GenericLocII}}{\Unit} & \text{otherwise}
      \end{array}
    \right. \and
  \ProjectC{\FunGlobal{F}{X}{C}}{\GenericLoc} = \ContFunGlobal{F}{X}{\ProjectC{C}{\GenericLoc}} \and
  \ProjectC{\AppGlobal{C_1}{C_2}}{\GenericLoc} = \ContAppGlobal{\ProjectC{C_1}{\GenericLoc}}{\ProjectC{C_2}{\GenericLoc}}  
\end{mathparpagebreakable}

Note a simple design principle: local expressions owned by a location other than the one being projected to are projected to $\Unit$.
This allows us to define a control program with the same control flow, but which does not know about the precise local expressions held at other locations.

This definition tells us what the control~program for a single location is.
However, we are interested in a system of control~programs.
We can lift the single-location definition to a multi-location system definition:
$
\ProjectC{C}{\mathfrak{L}} = \BigPar{\GenericLoc \in \mathfrak{L}}{\LocBind{\GenericLoc}{\ProjectC{C}{\GenericLoc}}}
$.

One way of thinking about $\ProjectC{C}{\GenericLoc}$ is that it gives $\GenericLoc$'s view of $C$.
From this perspective, it makes sense to ask what $\GenericLoc$'s view of a step of computation is.
We can provide this by projecting a choreography redex to a control-language label, as follows:
\begin{mathparpagebreakable}
  \ProjectR{\RDone{\GenericLocI}{e_1}{e_2}}{\GenericLocII} =
    \left\{
      \begin{array}{ll}
        \TauLabel & \text{if}~\GenericLocI=\GenericLocII\\
        \uparrow & \text{otherwise}
      \end{array}
    \right. \and
  \ProjectR{\RIfE{\GenericLocI}{e_1}{e_2}}{\GenericLocII} =
    \left\{
      \begin{array}{ll}
        \TauLabel & \text{if}~\GenericLocI=\GenericLocII\\
        \uparrow & \text{otherwise}
      \end{array}
    \right. \and
  \ProjectR{\RIfT{\GenericLocI}}{\GenericLocII} =
    \left\{
      \begin{array}{ll}
        \TauLabel & \text{if}~\GenericLocI=\GenericLocII\\
        \uparrow & \text{otherwise}
      \end{array}
    \right. \and
  \ProjectR{\RIfF{\GenericLocI}}{\GenericLocII} =
    \left\{
      \begin{array}{ll}
        \TauLabel & \text{if}~\GenericLocI=\GenericLocII\\
        \uparrow & \text{otherwise}
      \end{array}
    \right. \and
  \ProjectR{\RAppLocalE{\GenericLocI}{e_1}{e_2}}{\GenericLocII} =
    \left\{
      \begin{array}{ll}
        \TauLabel & \text{if}~\GenericLocI=\GenericLocII\\
        \uparrow & \text{otherwise}
      \end{array}
    \right. \\
  \ProjectR{\RSendE{\GenericLocI}{e_1}{e_2}{\GenericLocII}}{\GenericLocIII} =
    \left\{
      \begin{array}{ll}
        \tau & \text{if}~\GenericLocI = \GenericLocIII\\
        \uparrow & \text{otherwise}
      \end{array}
    \right. \and
  \ProjectR{\RSendV{\GenericLocI}{v}{\GenericLocII}}{\GenericLocIII} =
    \left\{
      \begin{array}{ll}
        \SendLabel{v}{\GenericLocII} & \text{if}~\GenericLocI = \GenericLocIII \neq \GenericLocII\\
        \RecvLabel{\GenericLocI}{v} & \text{if}~\GenericLocI \neq \GenericLocIII = \GenericLocII\\
        \uparrow & \text{otherwise}
      \end{array}
    \right. \and
  \ProjectR{\RSync{\GenericLocI}{\GenChoice}{\GenericLocII}}{\GenericLocIII} =
    \left\{
      \begin{array}{ll}
        \ChooseLabel{\GenChoice}{\GenericLocII} & \text{if}~\GenericLocI = \GenericLocIII \neq \GenericLocII\\
        \AllowChoiceLabel{\GenericLocI}{\GenChoice} & \text{if}~\GenericLocI \neq \GenericLocIII = \GenericLocII\\
        \uparrow & \text{otherwise}
      \end{array}
    \right. \and
  \ProjectR{\RDefLocal{\GenericLocI}{v}}{\GenericLocII} = \SyncTauLabel \and
  \ProjectR{\RAppLocal{\GenericLocI}{v}}{\GenericLocII} = \SyncTauLabel \and
  \ProjectR{\RAppGlobal}{\GenericLoc} = \SyncTauLabel \and
  \ProjectR{\RArg(R)}{\GenericLoc} = \ArgLabel(\ProjectR{R}{\GenericLoc}) \and
  \ProjectR{\RFun(R)}{\GenericLoc} = \FunLabel(\ProjectR{R}{\GenericLoc}) \and
\end{mathparpagebreakable}

Just as we can ask what some location's view of a step of computation is, we can ask what a system's view of a step of computation is.
Interestingly, this does not rely on which locations are in the system, and it is therefore a total operation.
\begin{mathparpagebreakable}
  \CompileR{\RDone{\GenericLocI}{e_1}{e_2}} = \SysTau \and
  \CompileR{\RIfE{\GenericLocI}{e_1}{e_2}} = \SysTau \and
  \CompileR{\RIfT{\GenericLocI}} = \SysTau \and
  \CompileR{\RIfF{\GenericLocI}} = \SysTau \and
  \CompileR{\RAppLocalE{\GenericLocI}{e_1}{e_2}} = \SysTau \and
  \CompileR{\RSendE{\GenericLocI}{e_1}{e_2}{\GenericLocII}} = \SysTau \and
  \CompileR{\RSendV{\GenericLocI}{v}{\GenericLocII}} = \CommLabel{\GenericLocI}{v}{\GenericLocII} \and
  \CompileR{\RSync{\GenericLocI}{\GenChoice}{\GenericLocII}} = \ChoiceLabel{\GenericLocI}{\GenChoice}{\GenericLocII} \and
  \CompileR{\RDefLocal{\GenericLocI}{v}} = \SysSyncTau \and
  \CompileR{\RAppLocal{\GenericLocI}{v}} = \SysSyncTau \and
  \CompileR{\RAppGlobal} = \SysSyncTau \and
  \CompileR{\RArg(R)} = \CompileR{R} \and
  \CompileR{\RFun(R)} = \CompileR{R} \and
\end{mathparpagebreakable}  

\subsection{Properties of Endpoint Projection}
\label{sec:prop-endp-proj}

EPP is one of the most-important operations on choreographies.
It is what gives them a ground-truth interpretation as a parallel composition of programs.
In fact, without EPP it would be almost impossible to state one of our most-important theorems: deadlock freedom by construction (Theorem~\ref{thm:deadlock-freedom} below).

The first property we examine is how EPP treats equivalence.
As with every other operation on choreographies, we would like it if EPP treated equivalent choreographies the same.
Note that we have no notion of equivalence on control-language programs, so we get a strong notion of ``treating the same'':
\begin{thm}[Equivalence Begets Equality]
  \label{thm:equiv-equality}
  If $\ChorEquiv{C_1}{C_2}$, then $\ProjectC{C_1}{\GenericLoc} = \ProjectC{C_2}{\GenericLoc}$ for every $\GenericLoc \in \mathcal{L}$.
\end{thm}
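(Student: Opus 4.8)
The plan is to proceed by rule induction on the derivation of $\ChorEquiv{C_1}{C_2}$, proving the statement simultaneously for every target location $\GenericLoc$. Recall that $\Equiv$ is generated as the least equivalence relation that is also a congruence and contains the swap rules of Figure~\ref{fig:chor-equiv}, so there are three kinds of cases: the equivalence-closure cases (reflexivity, symmetry, transitivity), the congruence cases (one per choreography constructor), and the base swap cases. Throughout, since EPP is partial, I read the conclusion $\ProjectC{C_1}{\GenericLoc} = \ProjectC{C_2}{\GenericLoc}$ as a Kleene equality: either both projections are undefined, or both are defined and syntactically equal.

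The equivalence-closure cases are immediate: reflexivity gives $C_1 = C_2$, while symmetry and transitivity follow because (Kleene) equality is itself symmetric and transitive, using the induction hypotheses. The congruence cases are almost as easy. Projection is defined by structural recursion, and for each constructor the clause for $\ProjectC{\cdot}{\GenericLoc}$ depends on its immediate subchoreographies only through their projections, which are then assembled either by a total control-language constructor (e.g.\ $\ContSend{e}{\GenericLocII}{E}$ or $\ContIf{e}{E_1}{E_2}$) or by the partial merge operator $\ContMergeOperator$. Since both kinds of assembly are functions, applying them to Kleene-equal arguments yields Kleene-equal results; hence each congruence case follows directly from the induction hypotheses on the subterms.

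The substance of the proof is in the base swap cases. For each swap rule I fix $\GenericLoc$ and perform a finite case analysis on how $\GenericLoc$ relates to the (pairwise distinct, by the side conditions) locations named in the rule. When $\GenericLoc$ is an active participant, both sides unfold — using the side conditions to discard the send/receive/choose prefixes contributed by the locations $\GenericLoc$ is not involved with — to literally the same prefixed control program; for instance, for \textsc{SwapSendSend} projected to the first sender, both sides reduce to $\ContSend{e_1}{\GenericLocII}{\ProjectC{C}{\GenericLoc}}$. The delicate cases are those in which $\GenericLoc$ is a bystander to an if (\textsc{SwapSendIf}, \textsc{SwapSyncIf}, \textsc{SwapIfIf}), because there projection introduces the merge operator and the two sides group the merges differently. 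These cases reduce to purely algebraic facts about $\ContMergeOperator$.

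Accordingly, the main obstacle — and the place where all the real content sits — is establishing the requisite algebraic laws for the partial merge operator, stated up to Kleene equality and therefore with careful tracking of definedness. Concretely I will need: (i) that merge commutes with every common prefix constructor — send, receive, and choose, e.g.\ $\ContMerge{(\ContSend{e}{\GenericLoc}{E_1})}{(\ContSend{e}{\GenericLoc}{E_2})} = \ContSend{e}{\GenericLoc}{(\ContMerge{E_1}{E_2})}$ — with the allow-choice cases being immediate from the defining clauses of Figure~\ref{fig:cont-merge} (this discharges \textsc{SwapSendIf} and \textsc{SwapSyncIf} in the bystander case); (ii) that merge distributes over a guarded if, $\ContMerge{(\ContIf{e}{A_1}{B_1})}{(\ContIf{e}{A_2}{B_2})} = \ContIf{e}{(\ContMerge{A_1}{A_2})}{(\ContMerge{B_1}{B_2})}$, needed when \textsc{SwapIfIf} is projected to one of the two deciding locations; and (iii) that merge is commutative and associative as a partial operation, yielding the four-way interchange law $\ContMerge{(\ContMerge{A_1}{A_2})}{(\ContMerge{A_3}{A_4})} = \ContMerge{(\ContMerge{A_1}{A_3})}{(\ContMerge{A_2}{A_4})}$ that exactly matches the two projections of \textsc{SwapIfIf} at a bystander location. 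These merge laws are themselves proved by induction on control-program structure following the clauses of Figure~\ref{fig:cont-merge}; once they are in hand, every swap case closes by rewriting, and the theorem follows.
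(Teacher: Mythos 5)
Your proposal is correct and takes essentially the same route as the paper's (mechanized) proof: rule induction on the derivation of $\ChorEquiv{C_1}{C_2}$, with congruence cases following from the structural definition of projection and the swap cases discharged by the algebra of the partial merge operator~\ContMergeOperator{} --- its defining clauses for common send/receive/choose/if prefixes together with the commutativity and associativity properties the paper records in Appendix~\ref{sec:addit-prop-merg} --- all read up to Kleene equality. One small correction: the swap rules' side conditions do \emph{not} make all named locations pairwise distinct (e.g., \textsc{SwapSendSend} permits $\GenericLocI = \GenericLocII$), so your case analysis acquires a few degenerate self-communication cases; these are harmless, since there both projections are undefined and your Kleene-equality reading closes them immediately.
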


Next we would like to examine the relationship between the semantics of a choreography and the semantics of its projection.
However, there is still one remaining disconnect between the semantics of choreographies and that of systems that comes into play.
Yet again, it has to do with the semantics of external choice.
To see the issue, consider the following example:
\begin{mathpar}
C_1 \triangleq \ChorIf*{\GenericLocI}{\True}{\Sync{\GenericLocI}{\LChoice}{\GenericLocII}{\Own{\GenericLocII}{0}}}{\Sync{\GenericLocI}{\RChoice}{\GenericLocII}{\Own{\GenericLocII}{1}}} \and
C_2 \triangleq \Sync{\GenericLocI}{\LChoice}{\GenericLocII}{\Own{\GenericLocII}{0}} \and
\ProjectC{C_1}{\GenericLocII} = \AllowChoiceLR*{\GenericLocI}{\Ret(0)}{\Ret(1)} \and
\ProjectC{C_2}{\GenericLocII} = \AllowChoiceL*{\GenericLocI}{\Ret(0)} \and
\ChorStepsTo{\RIfT{\GenericLocI}}{\emptyset}{C_1}{C_2}
\end{mathpar}
As you can see, by taking a choreography step which corresponds to a completely internal step on \GenericLocI, we have lost information about a possible path on \GenericLocII.
This comes because the choice of the path is up to \GenericLocI, who ``makes up their mind'' in that internal step.

One way to view this is from \GenericLocII's point of view.
In the program $C_1$, \GenericLocII has a nondeterministic program: a message will come in to tell them which of two branches to take.
This is evident in the semantics of the control~language, since $\ProjectC{C_1}{\GenericLocII}$ can take either of two steps.
However, $C_1$ is deterministic from \GenericLocII's point of view, since only one message is possible.
Hence, this step has resolved some nondeterminism.

We formalize this notion of ``the same program, but with some nondeterminism resolved'' in a new relation called~\LNDRel.
This is \emph{nearly} defined as the smallest relation that commutes with all of the control-language constructs and also fulfills the following extra rules:
\begin{mathpar}
  \infer{\LND{E_1}{E_{2,1}}}
  {\LND{\AllowChoiceL*{\GenericLoc}{E_1}}{\AllowChoiceLR*{\GenericLoc}{E_{2,1}}{E_{2,2}}}} \and
  \infer{\LND{E_1}{E_{2,2}}}
  {\LND{\AllowChoiceR*{\GenericLoc}{E_1}}{\AllowChoiceLR*{\GenericLoc}{E_{2,1}}{E_{2,2}}}}
\end{mathpar}
However, there is a small complication: functions are only related to themselves.
The full definition can be found in \iftr{}Appendix~\ref{sec:less-nond-relat}\else{}the accompanying technical report~\cite{HirschG21}\fi{}.
The relation~\LNDRel is a partial order.

We extend the $\LNDRel$~relation to systems pointwise, so $\LND{\Pi_1}{\Pi_2}$ if for every $\GenericLoc$ such that $\Defed{\Pi_1(\GenericLoc)}$, $\Defed{\Pi_2(\GenericLoc)}$ and $\LND{\Pi_1(\GenericLoc)}{\Pi_2(\GenericLoc)}$.
The following theorem relates $\LNDRel$ to the semantics of systems.
It may look complicated, but is not.
Its two bullet points respectively say: (1) If a less nondeterministic system takes a step, then that step is available to any more-nondeterministic system, and
(2) If a more nondeterministic system takes a step but the less nondeterministic system is the result of projecting some choreography, then we can take advantage of the fact that choices are always paired in choreographies to mimic the step in the less nondeterministic system.
\begin{thm}[Lifting and Lowering System Steps Across~\LNDRel]
  \label{thm:lift-lower-global}
  If $\LND{\Pi_1}{\Pi_2}$, then the following are both true:
  \begin{itemize}
  \item If $\SysStepsTo{L}{\Pi_1}{\Pi_1'}$, then there is a $\Pi_2'$ such that $\LND{\Pi_1'}{\Pi_2'}$ and $\SysStepsTo{L}{\Pi_2}{\Pi_2'}$.
  \item If $\SysStepsTo{L}{\Pi_2}{\Pi_2'}$ and $\Pi_1 = \ProjectC{C}{\mathfrak{L}}$, then there is a $\Pi_1'$ such that $\LND{\Pi_1'}{\Pi_2'}$ and $\SysStepsTo{L}{\Pi_1}{\Pi_1'}$.
  \end{itemize}
\end{thm}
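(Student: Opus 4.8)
The plan is to reduce the system-level statement to a single-location statement about the control-program relation $\LNDRel$, and then reassemble using the pointwise definition of $\LNDRel$ on systems together with a case analysis on which rule (\textsc{Internal}, \textsc{Synchronized Internal}, \textsc{Comm}, or \textsc{Choice}) justifies the system step. Two auxiliary facts about $\LNDRel$ on control programs do most of the work. First, $\LNDRel$ is preserved by substitution: if $\LND{E_1}{E_2}$ then $\LND{\Subst{E_1}{\SingleSubst{x}{v}}}{\Subst{E_2}{\SingleSubst{x}{v}}}$, and if moreover $\LND{V_1}{V_2}$ then $\LND{\Subst{E_1}{\SingleSubst{X}{V_1}}}{\Subst{E_2}{\SingleSubst{X}{V_2}}}$ (recall function values are related only to themselves, so the control-variable case collapses to substituting a common value). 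Second, a control-level \emph{lifting} lemma: if $\LND{E_1}{E_2}$ and $\ContStepsTo{l}{E_1}{E_1'}$, then there is an $E_2'$ with $\ContStepsTo{l}{E_2}{E_2'}$ and $\LND{E_1'}{E_2'}$. Both are proved by induction on the derivation of $\LND{E_1}{E_2}$. The only interesting case of the second is $E_1 = \AllowChoiceL{\GenericLoc}{F}$ related to $E_2 = \AllowChoiceLR{\GenericLoc}{F_1}{F_2}$: the single available label $\AllowChoiceLabel{\GenericLoc}{\LChoice}$ is matched by the left branch of $E_2$ using the premise $\LND{F}{F_1}$ (symmetrically for the right). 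Every other case is a congruence that re-applies the same control rule and appeals to the induction hypothesis and the substitution lemma.

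For the first bullet I case on the rule deriving $\SysStepsTo{L}{\Pi_1}{\Pi_1'}$. In the \textsc{Internal} case a single $\GenericLoc$ takes an $\TauLabel$-step; the lifting lemma on $\LND{\Pi_1(\GenericLoc)}{\Pi_2(\GenericLoc)}$ produces the same step in $\Pi_2$, and since all other components are untouched and remain pointwise related, $\LND{\Pi_1'}{\Pi_2'}$ for $\Pi_2' = \Subst{\Pi_2}{\SingleSubst{\GenericLoc}{E}}$. The \textsc{Comm} and \textsc{Choice} cases apply the lifting lemma at both participating locations; because $\LNDRel$ commutes with $\ContSend{e}{\GenericLoc}{E}$, $\ContReceive{x}{\GenericLoc}{E}$, and $\ContChoose{\GenChoice}{\GenericLoc}{E}$ and keeps their annotations $v$, $\GenericLoc$, $\GenChoice$ fixed, the merged system label $L$ is unchanged, and the substitution lemma handles the binding performed by \textsc{RecvV}. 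The \textsc{Synchronized Internal} case applies the lifting lemma at every location with the common label; since function values are related only to themselves, the $\beta$-redices at each location of $\Pi_2$ are the identical functions, so all locations take the matching $\SyncTauLabel$-step. In every case $\LND{\Pi_1'}{\Pi_2'}$ follows pointwise.

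For the second bullet I again case on the rule deriving $\SysStepsTo{L}{\Pi_2}{\Pi_2'}$, now mimicking the step in the \emph{smaller} system $\Pi_1$. The key observation is that a control program emitting a $\TauLabel$, send, receive, choose, or $\SyncTauLabel$ label must have head $\ContSend{e}{\GenericLoc}{E}$, $\ContReceive{x}{\GenericLoc}{E}$, $\ContChoose{\GenChoice}{\GenericLoc}{E}$, a let-binding $\LetRet{x}{E_1}{E_2}$, or an application, and $\LNDRel$ commutes with all of these with identical annotations (a one-sided choice emits \emph{only} an allow label). Hence in the \textsc{Internal}, \textsc{Comm}, \textsc{Synchronized Internal} cases, and for the \emph{sender} of the \textsc{Choice} case, the head constructor of $\Pi_1$ at the relevant location matches that of $\Pi_2$, so $\Pi_1$ takes the same step and the lifting lemma supplies the required $\LNDRel$ on the results. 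In particular the sender's $\ContChoose{\GenChoice}{\GenericLocII}{E}$ in $\Pi_1$ pins the choice label to the \emph{same} $\GenChoice$ as in $\Pi_2$.

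The remaining difficulty, and the main obstacle, is the \emph{receiver} obligation in the \textsc{Choice} case. Here $\Pi_2(\GenericLocII)$ takes $\AllowChoiceLabel{\GenericLocI}{\GenChoice}$, so it is a two- or one-sided allow-choice construct offering $\GenChoice$, and therefore $\Pi_1(\GenericLocII)$ is an allow-choice construct below it; the only way $\Pi_1$ could fail to follow is if $\Pi_1(\GenericLocII)$ were the one-sided choice on the \emph{opposite} side from $\GenChoice$. This is exactly where the hypothesis $\Pi_1 = \ProjectC{C}{\mathfrak{L}}$ is essential. I would establish a \emph{coherence lemma} for endpoint projection: whenever $\ProjectC{C}{\GenericLocI}$ can emit $\ChooseLabel{\GenChoice}{\GenericLocII}$ and $\ProjectC{C}{\GenericLocII}$ has an allow-choice head, that head offers $\GenChoice$. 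This is proved by induction on $C$, following the EPP clauses; the only clause producing a bare $\ContChoose{\GenChoice}{\GenericLocII}{E}$/allow-choice pair is a synchronization $\Sync{\GenericLocI}{\GenChoice}{\GenericLocII}{C'}$, whose two projections are matched by construction, and the merge operator of Figure~\ref{fig:cont-merge} only ever combines one-sided choices of opposite polarity (or a one-sided with a two-sided on the agreeing side) into choices that still offer $\GenChoice$, so no merge can strand the receiver on the wrong branch. With this lemma, $\Pi_1(\GenericLocII)$ is guaranteed to offer $\GenChoice$, the \textsc{Choice} step is reconstructed in $\Pi_1$, and the lifting lemma delivers $\LND{\Pi_1'}{\Pi_2'}$.
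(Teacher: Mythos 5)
Your skeleton is the paper's own: the two control-level facts you isolate are exactly the paper's ``Steps Lift Across~\LNDRel'' and ``Lowering Steps Across~\LNDRel'', the latter of which the paper explicitly guards with the predicate $\IsChoiceLabel{\GenChoice}(l)$ precisely because lowering fails for labels that resolve an external choice; the system-level argument is the same pointwise case analysis, and the \textsc{Choice}-receiver case is indeed the one place where the hypothesis $\Pi_1 = \ProjectC{C}{\mathfrak{L}}$ must enter (``choices are always paired in choreographies''). The gap is in how you discharge that case. You claim that since $\Pi_2(\GenericLocII)$ steps with a label that merges to $\ChoiceLabel{\GenericLocI}{\GenChoice}{\GenericLocII}$, it ``is a two- or one-sided allow-choice construct offering $\GenChoice$,'' and your coherence lemma is correspondingly stated about allow-choice \emph{heads}. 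That inference is false: by \textsc{MergeFun} and \textsc{MergeArg}, the \textsc{Choice} rule also fires with $l_1 = \ArgLabel(\ChooseLabel{\GenChoice}{\GenericLocII})$ and $l_2 = \ArgLabel(\AllowChoiceLabel{\GenericLocI}{\GenChoice})$, or any common nesting of $\FunLabel(\cdot)$ and $\ArgLabel(\cdot)$ wrappers, in which case both participants have application or let heads with the choice buried inside. Such nesting is generated by $\AppGlobal{C_1}{C_2}$, by $\AppLocal{\GenericLoc}{C}{e}$, and---unavoidably---by $\DefLocal{\GenericLoc}{x}{C_1}{C_2}$, whose projection wraps every location's program in an application or a let. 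Concretely, for $C = \AppGlobal{(\FunGlobal{F}{X}{X})}{(\Sync{\GenericLocI}{\LChoice}{\GenericLocII}{C'})}$ the two projections emit $\ArgLabel(\ChooseLabel{\LChoice}{\GenericLocII})$ and $\ArgLabel(\AllowChoiceLabel{\GenericLocI}{\LChoice})$, the system takes a \textsc{Choice} step, and neither program is an allow-choice construct; your lemma does not apply to this step, and its proof sketch (``the only clause producing a bare choose/allow pair is a synchronization'') omits exactly the clauses that create these wrappers.

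The fix stays within your strategy, but the coherence lemma must be restated at the level of steps and labels rather than heads: if $\ContStepsTo{l_1}{\ProjectC{C}{\GenericLocI}}{E_1}$, $\LND{\ProjectC{C}{\GenericLocII}}{E}$, $\ContStepsTo{l_2}{E}{E'}$, and $\LabelMerge{l_1}{l_2}{\ChoiceLabel{\GenericLocI}{\GenChoice}{\GenericLocII}}$, then there is an $E''$ with $\ContStepsTo{l_2}{\ProjectC{C}{\GenericLocII}}{E''}$ and $\LND{E''}{E'}$. The induction on $C$ then has genuinely new cases beyond sync and if: in the application cases and in the case of $\DefLocal{\GenericLoc}{x}{C_1}{C_2}$, one wrapper is peeled off both labels in lockstep and the induction hypothesis is applied to the corresponding subchoreography---noting that for the local-definition case the owner's projection has a let head while every other location's has an application head, yet both emit labels of the form $\ArgLabel(l)$, so the alignment survives. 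Your head-level reasoning (sync pairs are matched by construction; the merge operator is monotone in the branches it offers) is then the correct treatment of the remaining base case and the if-merge case, so with this strengthening your argument goes through.
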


Finally, the less-nondeterminism relation allows us to connect the semantics of choreographies with the semantics of our control~language.
Steps in the choreographies correspond to actions by one or more locations.
If a step involves a location, then that location's control program also takes a step; the label of that step is the projection of the redex of the choreographic step.
However, if a step does not involve a location, then that location's control program does not take a step and the projection of the redex for that location is undefined.
While an uninvolved location does not take a step, it may find its nondeterminism reduced as other locations ``make up their minds'' about what branch it should take in the future.
\begin{thm}[Local Completeness]
  \label{thm:loc-complete}
  If $\ChorStepsTo{R}{B}{C_1}{C_2}$, then for any location $\GenericLoc$, either (a)~\mbox{$\ContStepsTo{\ProjectR{R}{\GenericLoc}}{\ProjectC{C_1}{\GenericLoc}}{\ProjectC{C_2}{\GenericLoc}}$}, or (b)~$\Undefed{\ProjectR{R}{\GenericLoc}}$ and $\LND{\ProjectC{C_2}{\GenericLoc}}{\ProjectC{C_1}{\GenericLoc}}$.
\end{thm}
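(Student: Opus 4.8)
The plan is to proceed by induction on the derivation of $\ChorStepsTo{R}{B}{C_1}{C_2}$, establishing the disjunction for every location $\GenericLoc$ simultaneously (so that the induction hypothesis is available at whichever location a subderivation needs). Before starting, I would assemble the supporting lemmas that every case leans on. First, EPP commutes with both notions of substitution: projecting after a local substitution $\SingleChorExprSubst{\GenericLocI}{x}{v}$ equals the control-language substitution $\SingleSubst{x}{v}$ at $\GenericLocI$ and leaves the projection at any other location unchanged, and likewise projection commutes with the choreography-variable substitutions $\SingleSubst{X}{V}$ and $\SingleSubst{F}{\cdot}$. Second, a block-set soundness lemma: if $\ChorStepsTo{R}{B}{C_1}{C_2}$ then $\Undefed{\ProjectR{R}{\GenericLoc}}$ for every $\GenericLoc \in B$, which is what forces the ``wrapper'' locations of the congruence rules into case (b). Third, the structural facts already asserted in the text, that $\LNDRel$ is a partial order and a congruence for every control-language construct. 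Finally, three compatibility lemmas for the merge operator, which I expect to find among its additional properties: (E) if $\Defed{\ContMerge{E_1}{E_2}}$ and $\ContStepsTo{l}{E_1}{E_1'}$ and $\ContStepsTo{l}{E_2}{E_2'}$ with the \emph{same} label $l$, then $\Defed{\ContMerge{E_1'}{E_2'}}$ and $\ContStepsTo{l}{\ContMerge{E_1}{E_2}}{\ContMerge{E_1'}{E_2'}}$; (F) merge is monotone under $\LNDRel$; and (G) each argument lies below the merge, $\LND{E_1}{\ContMerge{E_1}{E_2}}$ and $\LND{E_2}{\ContMerge{E_1}{E_2}}$.

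With these in hand the leaf rules are direct. For \textsc{SendV} the sender's projection $\ContSend{v}{\GenericLocII}{E}$ takes a $\SendLabel{v}{\GenericLocII}$ step by the control rule \textsc{SendV}, the receiver's $\ContReceive{x}{\GenericLocI}{E}$ takes a $\RecvLabel{\GenericLocI}{v}$ step to $\Subst{E}{\SingleSubst{x}{v}}$ (matched to $\ProjectC{C_2}{\GenericLocII}$ by the substitution lemma), and these are exactly the labels $\ProjectR{\RSendV{\GenericLocI}{v}{\GenericLocII}}{\cdot}$, so case (a) holds for sender and receiver; for every other location the projection is literally unchanged, so case (b) holds by reflexivity of $\LNDRel$. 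The rules \textsc{SendE}, \textsc{Sync}, and the guard-expression rule \textsc{IfE} go the same way: either the location is the active participant and the matching control rule fires with label $\ProjectR{R}{\GenericLoc}$, or its projection is unchanged and we use reflexivity. The $\SyncTauLabel$-labeled rules \textsc{AppGlobal}, \textsc{AppLocal}, and the value rule for $\DefLocal{}{}{}{}$ project to $\SyncTauLabel$ at \emph{every} location, so case (a) holds everywhere, the substitution lemmas identifying the reducts.

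The congruence rules (\textsc{SendI}, \textsc{SyncI}, \textsc{AppGlobalFun}, \textsc{AppGlobalArg}, and the internal \textsc{DefLocal} and application rules) are handled by the induction hypothesis together with the congruence of $\LNDRel$. Here the block-set lemma does real work: the sender and receiver (resp.\ the function's location) appear in the premise's block set, so $\ProjectR{R}{\cdot}$ is undefined there and the IH forces case (b), which the congruence of $\LNDRel$ propagates through the surrounding $\ContSend{\cdot}{\cdot}{\cdot}$, $\ContChoose{\cdot}{\cdot}{\cdot}$, and application contexts; for a location that is neither participant, the wrapper vanishes under projection and the IH transfers verbatim. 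The labels line up because $\ProjectR{\RFun(R)}{\GenericLoc} = \FunLabel(\ProjectR{R}{\GenericLoc})$ (and symmetrically for $\RArg$), which is precisely the label the control application rule attaches when the sub-expression steps; this bookkeeping is the only subtlety, and it is the reason the non-owner projections of \textsc{DefLocal} and local application wrap the body in a global-function application.

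The real obstacle is the \textbf{if-rules at a non-guard location}, where the projection is a \emph{merge} $\ContMerge{\ProjectC{C_1}{\GenericLoc}}{\ProjectC{C_2}{\GenericLoc}}$. For \textsc{IfI} the two premises step both branches with the \emph{same} redex $R$---exactly the property the labeled, block-set design was built to guarantee---so $\ProjectR{R}{\GenericLoc}$ is defined-or-undefined uniformly across the branches and the two uses of the IH fall into the same case. If both are case (a), lemma (E) lets the merge take the single $\ProjectR{R}{\GenericLoc}$-step to the merge of the reducts; if both are case (b), lemma (F) pushes $\LNDRel$ through the merge. For the branch-selection rules a non-guard location's projection collapses from $\ContMerge{\ProjectC{C_1}{\GenericLoc}}{\ProjectC{C_2}{\GenericLoc}}$ to $\ProjectC{C_1}{\GenericLoc}$ (or $\ProjectC{C_2}{\GenericLoc}$) while $\ProjectR{\RIfT{\GenericLocI}}{\GenericLoc}$ is undefined, so I must land in case (b), and this is exactly lemma (G): each component sits below the merge in $\LNDRel$. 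This is the formal content of the informal remark that such a step ``resolves nondeterminism'' at the uninvolved locations, and stating and proving the merge lemmas (E)--(G) with precisely the definedness side-conditions needed here is where I expect the bulk of the effort to go.
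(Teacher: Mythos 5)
Your proposal is correct and follows essentially the same route as the paper's (Coq-mechanized) proof: induction on the derivation of $\ChorStepsTo{R}{B}{C_1}{C_2}$, where your supporting lemmas are precisely the theorems the paper records in its appendices --- (E) is the merge-step theorem in Appendix~\ref{sec:addit-prop-merg}, (F) and (G) are the merge-monotonicity and below-merge properties of $\LNDRel$ in Appendix~\ref{sec:addit-prop-lndr}, and the projection--substitution commutation is Theorem~\ref{thm:projection-syntax-ops}. One caveat on your block-set lemma: it is exactly what the congruence cases need and is true for the intended semantics, but the appendix's \textsc{DefLocal} rule as literally printed permits an arbitrary block set~$B$ with redex $\RDefLocal{\GenericLoc}{v}$, whose projection $\SyncTauLabel$ is defined at \emph{every} location --- this would falsify both your lemma and the theorem itself (wrap such a step in \textsc{SendI}), so that rule must be read with an empty block set, matching its siblings \textsc{AppLocal} and \textsc{AppGlobal}.
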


\begin{thm}[Global Completeness]
  \label{thm:global-complete}
  If $\ChorStepsTo{R}{B}{C_1}{C_2}$ and every location named in $R$ is in $\mathfrak{L}$, then there is a $\Pi$ such that $\SysStepsTo{\CompileR{R}}{\ProjectC{C_1}{\mathfrak{L}}}{\Pi}$ and $\LND{\ProjectC{C_2}{\mathfrak{L}}}{\Pi}$.
\end{thm}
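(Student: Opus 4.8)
The plan is to induct on the redex~$R$ (equivalently, on the derivation of $\ChorStepsTo{R}{B}{C_1}{C_2}$) and to let Local Completeness (Theorem~\ref{thm:loc-complete}) do the per-location work, assembling its output into a single system step. Fix an $\mathfrak{L}$ containing every location named in~$R$ and abbreviate $\Pi_1 = \ProjectC{C_1}{\mathfrak{L}}$ and $\Pi_2 = \ProjectC{C_2}{\mathfrak{L}}$. Invoking Theorem~\ref{thm:loc-complete} at each $\GenericLoc \in \mathfrak{L}$ splits $\mathfrak{L}$ into \emph{active} locations, where $\ProjectR{R}{\GenericLoc}$ is defined and $\ContStepsTo{\ProjectR{R}{\GenericLoc}}{\ProjectC{C_1}{\GenericLoc}}{\ProjectC{C_2}{\GenericLoc}}$, and \emph{inactive} locations, where $\Undefed{\ProjectR{R}{\GenericLoc}}$ and $\LND{\ProjectC{C_2}{\GenericLoc}}{\ProjectC{C_1}{\GenericLoc}}$. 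Which system rule I apply is then dictated by $\CompileR{R}$, whose four possible shapes drive the case split.

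For the base redices I choose the system rule matching $\CompileR{R}$. When $\CompileR{R} = \SysTau$ (a purely local reduction, an if-test, or a send-side evaluation such as $\RSendE{\GenericLocI}{e_1}{e_2}{\GenericLocII}$) exactly one location~$\GenericLoc_0$ is active, carrying label~$\TauLabel$, so rule~\textsc{Internal} fires and produces $\Pi = \Subst{\Pi_1}{\SingleSubst{\GenericLoc_0}{\ProjectC{C_2}{\GenericLoc_0}}}$. When $\CompileR{R} = \CommLabel{\GenericLocI}{v}{\GenericLocII}$ (redex $\RSendV{\GenericLocI}{v}{\GenericLocII}$) the two active locations carry $\SendLabel{v}{\GenericLocII}$ and $\RecvLabel{\GenericLocI}{v}$, which combine under~\textsc{MergeComm}; since $\GenericLocI, \GenericLocII \in \mathfrak{L}$ by hypothesis, rule~\textsc{Comm} applies. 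The case $\CompileR{R} = \ChoiceLabel{\GenericLocI}{\GenChoice}{\GenericLocII}$ (redex $\RSync{\GenericLocI}{\GenChoice}{\GenericLocII}$) is symmetric via \textsc{MergeChoice} and~\textsc{Choice}. Finally, when $\CompileR{R} = \SysSyncTau$ (redices $\RDefLocal{\GenericLoc}{v}$, $\RAppLocal{\GenericLoc}{v}$, and $\RAppGlobal$) the map $\ProjectR{R}{-}$ is \emph{total}: every location in~$\mathfrak{L}$ is active with label~$\SyncTauLabel$, so \textsc{MergeSync} licenses rule~\textsc{Synchronized Internal}, which steps all of~$\Pi_1$ at once and yields $\Pi = \Pi_2$ exactly.

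The wrapper redices $R = \RFun(R')$ and $R = \RArg(R')$ fall to the induction hypothesis. Here $\ProjectR{R}{\GenericLoc} = \FunLabel(\ProjectR{R'}{\GenericLoc})$ (respectively~$\ArgLabel$) while $\CompileR{R} = \CompileR{R'}$, so the active set is unchanged and each active label is merely wrapped; rules \textsc{MergeFun} and \textsc{MergeArg} show wrapped labels merge exactly when their contents do, absorbing the wrapper and reproducing the system label $\CompileR{R'}$. It remains to check $\LND{\Pi_2}{\Pi}$ pointwise. At an active $\GenericLoc$, Theorem~\ref{thm:loc-complete} lands the control step exactly on $\ProjectC{C_2}{\GenericLoc} = \Pi(\GenericLoc)$, so reflexivity of~$\LNDRel$ suffices; at an inactive $\GenericLoc$ the step leaves $\Pi(\GenericLoc) = \ProjectC{C_1}{\GenericLoc}$ untouched and $\LND{\ProjectC{C_2}{\GenericLoc}}{\ProjectC{C_1}{\GenericLoc}}$ is precisely case~(b) of Local Completeness. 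This last observation is exactly why the theorem is stated with $\LNDRel$ rather than equality: an uninvolved location may have its branching nondeterminism resolved by the step even though its program is not rewritten.

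I expect the main obstacle to be the wrapper bookkeeping, rather than the $\LNDRel$ accounting, which is routine. The delicate point is that, however deeply a redex is nested beneath function applications, the collection of defined per-location labels must still assemble through the merge relation into one valid system step whose label is \emph{exactly} $\CompileR{R}$ --- and, in particular, that a wrapped internal label like $\FunLabel(\TauLabel)$ is recognized as a single $\SysTau$ step rather than spuriously forcing a global synchronization. The cleanest way to discharge this is a separate lemma stating that for every~$R$ and every $\mathfrak{L}$ covering~$R$ the defined labels $\ProjectR{R}{\GenericLoc}$ merge to $\CompileR{R}$ through the matching system rule, proved by induction on~$R$ with \textsc{MergeFun} and \textsc{MergeArg} threading the wrappers; this keeps the totality argument for the $\SysSyncTau$ case and the pointwise $\LNDRel$ check isolated from the merge combinatorics.
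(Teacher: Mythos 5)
Your proposal takes essentially the same route as the paper's (mechanized) proof: Global Completeness is obtained by instantiating Local Completeness (Theorem~\ref{thm:loc-complete}) at every $\GenericLoc \in \mathfrak{L}$, splitting on the shape of $\CompileR{R}$ to select the matching system rule (\textsc{Internal}, \textsc{Comm}, \textsc{Choice}, or \textsc{Synchronized Internal}), and discharging the $\LNDRel$ obligation pointwise---reflexivity at locations that step, case~(b) of Local Completeness at untouched ones. Your auxiliary lemma that the defined labels $\ProjectR{R}{\GenericLoc}$ merge to exactly $\CompileR{R}$, with \textsc{MergeFun}/\textsc{MergeArg} threading the $\FunLabel/\ArgLabel$ wrappers of nested redices, is precisely how the formalization resolves the wrapped-$\TauLabel$ concern you flag, so no gap remains.
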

The requirement that every location named in $R$ is in $\mathfrak{L}$ is important: if $R$ represents \GenericLocI sending a message to \GenericLocII, but \GenericLocII is not in $\mathfrak{L}$, then there is no one in $\ProjectC{C}{\mathfrak{L}}$ to do the receiving.
This blocks \GenericLocI from sending its message (since message passing is synchronous) and thus blocks the system step from taking place.
However, if both \GenericLocI and \GenericLocII are in $\mathfrak{L}$, then Theorem~\ref{thm:loc-complete} tells us that \GenericLocII is guaranteed to be ready to receive \GenericLocI's message.

A similar difficulty comes when trying to go in the other direction: it is not enough to know that \GenericLocI's projection sends a message for a choreography to take a step; we must also know that \GenericLocII's projection receives it.
This means that we have to state local soundness differently depending on the label of the control-program step:
\begin{thm}[Local Soundness]
  \label{thm:local-sound}
  All of the following are true:
  \begin{itemize}
  \item If $\ContStepsTo{l}{\ProjectC{C_1}{\GenericLoc}}{E}$ where $\LabelMerge{l}{l}{\SysTau}$, then there are $R$ and $C_2$ such that $\ProjectR{R}{\GenericLoc} = l$, $\ProjectC{C_2}{\GenericLoc} = E$, and $\ChorStepsTo{R}{\emptyset}{C_1}{C_2}$.
  \item If $\ContStepsTo{l_1}{\ProjectC{C_1}{\GenericLocI}}{E_1}$ and $\ContStepsTo{l_2}{\ProjectC{C_2}{\GenericLocII}}{E_2}$ where $\LabelMerge{l_1}{l_2}{\CommLabel{\GenericLocI}{v}{\GenericLocII}}$, then there are $R$ and $C_2$ such that
    (a)~$\ProjectC{C_2}{\GenericLocI} = E_1$,
    (b)~$\ProjectC{C_2}{\GenericLocII} = E_2$,
    (c)~$\ProjectR{R}{\GenericLocI} = l_1$,
    (d)~$\ProjectR{R}{\GenericLocII} = l_2$, and
    (e)~$\ChorStepsTo{R}{\emptyset}{C_1}{C_2}$.
  \item If $\ContStepsTo{l_1}{\ProjectC{C_1}{\GenericLocI}}{E_1}$ and $\ContStepsTo{l_2}{\ProjectC{C_2}{\GenericLocII}}{E_2}$ where $\LabelMerge{l_1}{l_2}{\ChoiceLabel{\GenericLocI}{\GenChoice}{\GenericLocII}}$, then there are $R$ and $C_2$ such that
    (a)~$\ProjectC{C_2}{\GenericLocI} = E_1$,
    (b)~$\ProjectC{C_2}{\GenericLocII} = E_2$,
    (c)~$\ProjectR{R}{\GenericLocI} = l_1$,
    (d)~$\ProjectR{R}{\GenericLocII} = l_2$, and
    (e)~$\ChorStepsTo{R}{\emptyset}{C_1}{C_2}$.
  \item If $\LocationNames(C_1) \subseteq \mathfrak{L} \neq \emptyset$, $\LabelMerge{l}{l}{\SysSyncTau}$, and for every $\GenericLoc \in \mathfrak{L}$ $\ContStepsTo{l}{\ProjectC{C_1}{\GenericLoc}}{E_{\GenericLoc}}$, then there are $R$ and $C_2$ such that
    (a)~for every $\GenericLoc \in \mathfrak{L}$, $\ProjectR{R}{\GenericLoc} = l_{\GenericLoc}$, and
    (b)~$\ChorStepsTo{R}{\emptyset}{C_1}{C_2}$
  \end{itemize}
\end{thm}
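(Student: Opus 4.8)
The plan is to prove all four bullets by structural induction on the source choreography $C_1$, in each case computing $\ProjectC{C_1}{\GenericLoc}$ from the definition of endpoint projection and then \emph{inverting} the given control-language transition against that computed shape. This is the converse direction to Local Completeness (Theorem~\ref{thm:loc-complete}) and does not follow from it, so a fresh induction is needed; inducting on $C_1$ rather than on the transition derivation is what makes the inversion tractable, since once the head constructor of $C_1$ is fixed the form of each projection is determined and only a handful of control rules can fire. I would first dispatch the base cases, where the head constructor of $C_1$ is exactly the one realizing the observed step. For bullet~1 these are a local reduction inside $\Own{\GenericLoc}{e}$ (redex $\RDone{\GenericLoc}{e_1}{e_2}$), the guard reduction and branch selection of $\ChorIf{\GenericLoc}{e}{C_a}{C_b}$, the message reduction of a send out of $\GenericLoc$ (redex $\RSendE{\GenericLoc}{e_1}{e_2}{\GenericLocII}$), and the argument reduction of a local application; for bullet~2 the head must be $\Send{\GenericLocI}{v}{\GenericLocII}{x}{C}$ with $v$ a value, so the sender's $\ContSend{v}{\GenericLocII}{E}$ fires via \textsc{SendV} and the receiver's $\ContReceive{x}{\GenericLocI}{E}$ via \textsc{RecvV} (redex $\RSendV{\GenericLocI}{v}{\GenericLocII}$); for bullet~3 the head is $\Sync{\GenericLocI}{\GenChoice}{\GenericLocII}{C}$ (redex $\RSync{\GenericLocI}{\GenChoice}{\GenericLocII}$); and for bullet~4 the head is a local definition $\DefLocal{\GenericLocI}{x}{C_a}{C_b}$ whose first subchoreography $C_a$ is a value, a fully-applied local application, or a fully-applied global application, matching \textsc{LetRet}, \textsc{AppLocal}, and \textsc{AppGlobal}. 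In every base case the choreographic rule carries the empty block set, so $\ChorStepsTo{R}{\emptyset}{C_1}{C_2}$ holds directly, and one reads off from the definitions of $\ProjectC{\cdot}{\cdot}$ and $\ProjectR{\cdot}{\cdot}$ that the required projection equalities hold.

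The remaining cases are the \emph{congruence} (transparent) cases, where the head constructor of $C_1$ is not the one driving the step, so $\ProjectC{C_1}{\GenericLoc}$ either strips that constructor away (for instance a send whose endpoints both differ from $\GenericLoc$) or wraps a recursive call in a function application (as for a local definition or local application projected at a non-owner, or under $\AppGlobal{C_a}{C_b}$). Here I would apply the induction hypothesis to the relevant subchoreography to obtain a choreographic step, and then re-wrap it with the matching internal rule (\textsc{SendI}, \textsc{SyncI}, the \textsc{DefLocal}-internal rule, or \textsc{AppGlobalFun}/\textsc{AppGlobalArg}), checking that $\ProjectR{R}{\GenericLoc}$ is transported correctly: unchanged for \textsc{SendI} and \textsc{SyncI}, and wrapped in $\ArgLabel/\FunLabel$ for the application rules, consistent with the \textsc{MergeArg}/\textsc{MergeFun} clauses that the hypotheses already admit. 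The one genuine technical point is that \textsc{SendI} and \textsc{SyncI} add the (transparent) endpoints to the block set, whereas the induction hypothesis supplies a step with an \emph{empty} block set; I therefore need a block-set \emph{monotonicity} lemma stating that a step may always be re-derived under a larger block set provided none of the added locations is one the redex requires to be active. That proviso is automatically met here because transparency forces the added endpoints to be distinct from the locations named actively in $R$: for bullet~1 the only redex naming a second location is $\RSendE{\GenericLoc}{e_1}{e_2}{\GenericLocII}$, whose rule blocks only the sender, while for bullets~2 and~3 the active locations are exactly $\GenericLocI,\GenericLocII$, which are disjoint from any construct transparent to both.

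The crux, and the step I expect to be the main obstacle, is the transparent conditional case, where $C_1 = \ChorIf{\GenericLocIII}{e}{C_a}{C_b}$ with $\GenericLocIII \neq \GenericLoc$, so $\ProjectC{C_1}{\GenericLoc} = \ContMerge{\ProjectC{C_a}{\GenericLoc}}{\ProjectC{C_b}{\GenericLoc}}$. To reconstruct a choreographic step I must use \textsc{IfI}, which demands the \emph{same} redex $R$ in both branches, so from a single transition of the merged program I must extract compatible transitions of $\ProjectC{C_a}{\GenericLoc}$ and $\ProjectC{C_b}{\GenericLoc}$. This requires a dedicated \emph{merge-semantics lemma}: if $\ContMerge{E_1}{E_2}$ is defined and steps with some label, then $E_1$ and $E_2$ step with labels that recombine to it and with results whose merge is the result. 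For internal, communication, and local labels this is immediate, because such steps can only occur in the common prefix the two arguments share before any divergence, so the two branches carry identical labels. The delicate instance is external choice: a merged $\AllowChoiceLR{\GenericLoc}{E_1'}{E_2'}$ can step into either branch, so the components need not carry the same label. Here the partiality of $\ContMergeOperator$ does the essential work, ruling out exactly the branchings a single decider never licensed, since the would-be sender's own projection of $C_1$ would merge two disagreeing $\ContChoose$ and hence be undefined; under the theorem's standing assumption that the projections are defined and step with matching merged labels, this case cannot arise and the two branches are forced to agree on the choreographic redex. Combining the merge lemma with the induction hypothesis applied to $C_a$ and $C_b$, the observation that send and sync redices are determined by their projected labels, and the block-set monotonicity lemma (to add $\GenericLocIII$ to the block set), \textsc{IfI} then assembles the required step $\ChorStepsTo{R}{\emptyset}{C_1}{\ChorIf{\GenericLocIII}{e}{C_a'}{C_b'}}$.

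Finally, for bullet~4 I would additionally invoke the standing hypothesis $\LocationNames(C_1) \subseteq \mathfrak{L}$ to guarantee that every participant of a synchronized reduction is present in $\mathfrak{L}$ and takes its $\SyncTauLabel$ step together, matching the requirement of the choreographic synchronized rules that such reductions proceed with no blockers; at the deciding location the step is a \textsc{LetRet}, \textsc{AppLocal}, or \textsc{AppGlobal}, and at every other location it is an \textsc{AppGlobal} against the projected placeholder $\Unit$, all labeled $\SyncTauLabel$. This last step also needs the easy auxiliary lemma that the projection of a choreography value is a control-language value, so that \textsc{AppGlobal} and \textsc{AppLocal} are able to fire at each location simultaneously.
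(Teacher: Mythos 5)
Your overall strategy (structural induction on $C_1$, inverting the control step against the computed shape of the projection) is the right one, but the case you yourself call the crux does not go through as argued. Your merge-inversion lemma fails for choice labels, and the reason you give for why the bad case ``cannot arise'' --- that the decider's own projection of $C_1$ would merge two disagreeing $\ContChoose$ constructs and hence be undefined --- does not cover the problematic configuration. Concretely, consider $C_1 = \ChorIf{\GenericLocIII}{e}{C_a}{C_b}$ with $\GenericLocIII \notin \{\GenericLocI,\GenericLocII\}$, where $\ProjectC{C_a}{\GenericLocII} = \AllowChoiceL{\GenericLocI}{F_1}$ and $\ProjectC{C_b}{\GenericLocII} = \AllowChoiceR{\GenericLocI}{F_2}$, while $\ProjectC{C_a}{\GenericLocI}$ and $\ProjectC{C_b}{\GenericLocI}$ \emph{both} have head $\ContChoose{\LChoice}{\GenericLocII}{\cdot}$. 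Then the receiver-side merge $\AllowChoiceLR{\GenericLocI}{F_1}{F_2}$ is defined and steps with $\AllowChoiceLabel{\GenericLocI}{\LChoice}$, the chooser-side merge is also defined and steps with $\ChooseLabel{\LChoice}{\GenericLocII}$, so both hypotheses of bullet~3 hold --- yet $\ProjectC{C_b}{\GenericLocII}$ cannot step with $\AllowChoiceLabel{\GenericLocI}{\LChoice}$, so the induction hypothesis cannot be applied to $C_b$ and \textsc{IfI} cannot be assembled. Nothing is undefined at the chooser here, so partiality of $\ContMergeOperator$ at the chooser does not rescue you. What is needed is a separate \emph{coherence} lemma about projection, proven by its own induction on the choreography: if $\ProjectC{C}{\GenericLocI}$ has head $\ContChoose{\GenChoice}{\GenericLocII}{\cdot}$ and $\ProjectC{C}{\GenericLocII}$ has a one-sided allow-choice head on $\GenericLocI$, then that head's direction is $\GenChoice$. (In the nested-if case of \emph{that} lemma, merge partiality at the chooser is exactly what forces the two sub-branches to agree, so your intuition is an ingredient of the proof, but it is not by itself the proof.) Applying the lemma to $C_a$ gives $\GenChoice = \LChoice$ and to $C_b$ gives $\GenChoice = \RChoice$, the contradiction that kills the configuration; without it, bullet~3's transparent-if case is simply open.

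Two smaller problems. First, your claim that redices are ``determined by their projected labels'' is true for send and sync redices but false for the internal redices of bullet~1: $\RDone{\GenericLoc}{e_1}{e_2}$, $\RIfT{\GenericLoc}$, $\RIfF{\GenericLoc}$, $\RIfE{\GenericLoc}{e_1}{e_2}$ all project to $\TauLabel$, so after invoking the induction hypothesis on both branches you hold two redices with equal projections and no a priori reason they are equal, while \textsc{IfI} demands a single shared redex; agreement must instead be recovered from definedness of the merge (which forces the two branches' projections to share their head constructor and data along the path indicated by the $\FunLabel/\ArgLabel$ wrapping of $l$) together with the projected results. Second, your block-set monotonicity lemma is false as stated: \textsc{AppGlobal} and \textsc{AppLocal} require the block set to be exactly $\emptyset$, so steps by $\RAppGlobal$ and $\RAppLocal{\GenericLoc}{v}$ admit no enlargement whatsoever. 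The lemma must be restricted to the non-$\beta$ redices, which does suffice for your use of it, since in bullet~4 the send, sync, and transparent-if congruence cases are vacuous (no control program headed by a send, receive, choose, allow-choice, or if can emit a $\SyncTauLabel$-rooted label), so no block-set enlargement is ever needed for $\beta$-redices --- but that restriction and its justification have to appear explicitly.
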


The requirement that $\LocationNames(C_1) \subseteq \mathfrak{L} \neq \emptyset$ may seem unusual in the last case of Theorem~\ref{thm:local-sound}.
In order to $\beta$-reduce a function call or a subexpression $\DefLocal{\GenericLoc}{x}{C_1}{C_2}$ inside a choreography~$C$, every location in $C$ must be able to perform the same $\beta$-reduction.
However, if we only require that every location in $\LocationNames(C)$ be able to make a step, the requirement might be trivial if $C$ does not name any locations.
Therefore, we require that every location in some nonempty set of locations $\mathfrak{L}$ which contains every location in $\LocationNames(C)$ be able to make a step.
In the common case where $\LocationNames(C)$ is nonempty, this restriction is the same as the simpler requirement.

Lifting soundness from control~programs to systems yields a much simpler theorem.
However, note how the strange requirement for $\beta$-reduction steps infects the theorem:
\begin{thm}[Global Soundness]
  \label{thm:global-sound}
  If $\LocationNames(C_1) \subseteq \mathfrak{L} \neq \emptyset$ and $\SysStepsTo{L}{\ProjectC{C_1}{\mathfrak{L}}}{\Pi}$ then there is an $R$ and $C_2$ such that
  (a)~$\Defed{\ProjectC{C_2}{\mathfrak{L}}}$,
  (b)~$\LND{\ProjectC{C_2}{\mathfrak{L}}}{\Pi}$,
  (c)~$\CompileR{R} = L$, and
  (d)~$\ChorStepsTo{R}{\emptyset}{C_1}{C_2}$.
\end{thm}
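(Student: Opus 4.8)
The plan is to proceed by case analysis on the rule that derives the system step $\SysStepsTo{L}{\ProjectC{C_1}{\mathfrak{L}}}{\Pi}$, which is determined by the shape of $L$: the four possibilities correspond exactly to the rules \textsc{Internal}, \textsc{Comm}, \textsc{Choice}, and \textsc{Synchronized Internal}, and hence to the four bullets of Local Soundness (Theorem~\ref{thm:local-sound}). In each case I would first extract the witnessing redex $R$ and target $C_2$ with $\ChorStepsTo{R}{\emptyset}{C_1}{C_2}$ by feeding the premises of the system rule into the matching bullet, using $\ProjectC{C_1}{\mathfrak{L}}(\GenericLoc) = \ProjectC{C_1}{\GenericLoc}$ to line up the hypotheses; for \textsc{Synchronized Internal} the side condition $\LocationNames(C_1) \subseteq \mathfrak{L} \neq \emptyset$ demanded by the fourth bullet is precisely the hypothesis of this theorem. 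This discharges goal~(d). Goal~(c), $\CompileR{R} = L$, then follows by inspecting the shape of the $R$ returned in each case: an internal step forces $R$ to be one of the internal redices (all compiling to $\SysTau$); \textsc{Comm} and \textsc{Choice} force $R$ to be a possibly \textsf{Fun}/\textsf{Arg}-wrapped $\RSendV{\GenericLocI}{v}{\GenericLocII}$ or $\RSync{\GenericLocI}{\GenChoice}{\GenericLocII}$, whose compilation strips the wrappers to $\CommLabel{\GenericLocI}{v}{\GenericLocII}$ or $\ChoiceLabel{\GenericLocI}{\GenChoice}{\GenericLocII}$; and \textsc{Synchronized Internal} forces a def-local or application redex, all compiling to $\SysSyncTau$.

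The remaining goals---(a) definedness of $\ProjectC{C_2}{\mathfrak{L}}$ and (b) $\LND{\ProjectC{C_2}{\mathfrak{L}}}{\Pi}$---form the core of the argument, and I would get both by invoking Local Completeness (Theorem~\ref{thm:loc-complete}) on $\ChorStepsTo{R}{\emptyset}{C_1}{C_2}$ at every $\GenericLoc \in \mathfrak{L}$. For each location Local Completeness lands in one of its two cases, and in both $\ProjectC{C_2}{\GenericLoc}$ is defined (either as the target of a control step or as the smaller side of an $\LNDRel$ pair), which gives~(a). For~(b), proved pointwise, I split on whether $\GenericLoc$ participates in the step. For the locations named in $R$---the active location in \textsc{Internal} and the two endpoints in \textsc{Comm}/\textsc{Choice}---bullets~1--3 of Local Soundness already equate $\ProjectC{C_2}{\GenericLoc}$ with exactly the component the system rule wrote into $\Pi(\GenericLoc)$, so $\LND{\ProjectC{C_2}{\GenericLoc}}{\Pi(\GenericLoc)}$ holds by reflexivity. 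For every uninvolved location $\ProjectR{R}{\GenericLoc}$ is undefined, so Local Completeness yields its case~(b), namely $\LND{\ProjectC{C_2}{\GenericLoc}}{\ProjectC{C_1}{\GenericLoc}}$; since such locations are untouched by the system rule, $\Pi(\GenericLoc) = \ProjectC{C_1}{\GenericLoc}$ and we conclude.

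The case that needs more than bookkeeping is \textsc{Synchronized Internal}, which I expect to be the main obstacle. Here every location participates, yet the fourth bullet of Local Soundness---unlike the other three---returns only $\ProjectR{R}{\GenericLoc} = l$ and does not equate $\ProjectC{C_2}{\GenericLoc}$ with the successor $E'_{\GenericLoc}$ recorded in $\Pi$. To close the gap I would apply Local Completeness at each $\GenericLoc$; since $\ProjectR{R}{\GenericLoc} = l$ is defined it lands in case~(a), giving $\ContStepsTo{l}{\ProjectC{C_1}{\GenericLoc}}{\ProjectC{C_2}{\GenericLoc}}$. Both $E'_{\GenericLoc}$ and $\ProjectC{C_2}{\GenericLoc}$ are thus $l$-successors of the same program $\ProjectC{C_1}{\GenericLoc}$, and because $l$ is a (possibly wrapped) $\SyncTauLabel$ label---generated only by the deterministic rules \textsc{LetRet}, \textsc{AppLocal}, and \textsc{AppGlobal}---a short label-determinism lemma for the control LTS forces $E'_{\GenericLoc} = \ProjectC{C_2}{\GenericLoc}$, recovering $\Pi(\GenericLoc) = \ProjectC{C_2}{\GenericLoc}$ and hence~(b) by reflexivity. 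Proving (or citing) this determinism lemma, together with making explicit the coherence between $\ProjectR{}{}$, $\CompileR{}$, and the admissible shapes of $R$, is the only genuinely delicate ingredient; the rest is a mechanical case analysis mirroring the structure of Theorems~\ref{thm:local-sound} and~\ref{thm:loc-complete}.
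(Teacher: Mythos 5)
Your proposal is correct, and it follows the route the paper itself sets up: the paper gives no textual proof of this theorem (it is deferred to the Coq mechanization, with the one-line remark that it is obtained by ``lifting'' Theorem~\ref{thm:local-sound}), and the intended argument is exactly your combination of a case analysis on the system step, Local Soundness (Theorem~\ref{thm:local-sound}) for goals~(c) and~(d), and Local Completeness (Theorem~\ref{thm:loc-complete}) plus reflexivity of $\LNDRel$ and the fact that \textsc{Internal}/\textsc{Comm}/\textsc{Choice} leave uninvolved locations' programs untouched for goals~(a) and~(b). The one place where you go beyond the paper's stated toolkit is also the proposal's best observation: as written, the fourth bullet of Theorem~\ref{thm:local-sound} returns only $\ProjectR{R}{\GenericLoc} = l$ and a choreography step, \emph{not} the equation $\ProjectC{C_2}{\GenericLoc} = E'_{\GenericLoc}$, so the \textsc{Synchronized Internal} case genuinely cannot be closed from the paper's stated lemmas alone. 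Your patch is sound: feeding the choreography step back through Local Completeness must land in its case~(a) (the projection of $R$ is defined at every $\GenericLoc$), and determinism of $\SyncTauLabel$-labeled (and $\FunLabel$/$\ArgLabel$-wrapped $\SyncTauLabel$-labeled) control steps holds because the only rules producing such labels are the $\beta$-rules \textsc{LetRet}, \textsc{AppLocal}, and \textsc{AppGlobal}, whose redices are values and hence cannot step by any competing rule---note this determinism is specific to $\SyncTauLabel$ and would fail for $\TauLabel$, since the local language's step relation need not be deterministic, so your scoping of the lemma is exactly right. In the mechanization this gap is instead absorbed into a stronger invariant, but given only the theorem statements printed in the paper, your auxiliary lemma (or something equivalent to it) is necessary, and your proof of it is correct.
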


We use these to develop the most important theorem for Pirouette (and choreographies in general): deadlock freedom by design.
Interestingly, the proof is a simple interplay of soundness and completeness for our translation along with type soundness.
\begin{thm}[Deadlock Freedom By Design]
  \label{thm:deadlock-freedom}
  If choreography typing enjoys both progress and preservation (e.g., if local typing is sound), $\ChorTyping{\cdot}{\cdot}{C_1}{\ChorTypeTau}$, and $\SysStepsManyTo{Ls}{\ProjectC{C_1}{\mathfrak{L}}}{\Pi}$, then either every location in $\Pi$ maps to a control-language value, or there are $L$ and $\Pi'$ such that $\SysStepsTo{L}{\Pi}{\Pi'}$.
\end{thm}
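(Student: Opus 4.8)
The plan is to prove a stronger, invariant statement by induction on the length of the system run and then read the theorem off the invariant together with choreographic progress. The invariant I maintain at the current system $\Pi$ is: there is a closed, well-typed choreography $C$ with $\ChorTyping{\cdot}{\cdot}{C}{\ChorTypeTau}$ and $\LocationNames(C) \subseteq \mathfrak{L}$ such that $\LND{\ProjectC{C}{\mathfrak{L}}}{\Pi}$. Intuitively, every reachable $\Pi$ is a (possibly more nondeterministic) copy of the projection of some well-typed choreography reachable from $C_1$. The base case, where $\Pi = \ProjectC{C_1}{\mathfrak{L}}$, holds by taking $C = C_1$ and using reflexivity of $\LNDRel$ (it is a partial order); here I use the assumption that $\LocationNames(C_1) \subseteq \mathfrak{L} \neq \emptyset$, which is exactly what makes $\ProjectC{C_1}{\mathfrak{L}}$ the system of interest.

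For the inductive step, suppose the invariant holds at $\Pi'$ via $C'$ and that $\SysStepsTo{L}{\Pi'}{\Pi}$. Since $\LND{\ProjectC{C'}{\mathfrak{L}}}{\Pi'}$ and the left-hand side is literally a projection, the second bullet of Theorem~\ref{thm:lift-lower-global} lets me \emph{lower} the step onto the projection: there is a $\Pi_1'$ with $\SysStepsTo{L}{\ProjectC{C'}{\mathfrak{L}}}{\Pi_1'}$ and $\LND{\Pi_1'}{\Pi}$. Now Global Soundness (Theorem~\ref{thm:global-sound}), applicable because $\LocationNames(C') \subseteq \mathfrak{L} \neq \emptyset$, produces a redex $R$ and a choreography $C''$ with $\ChorStepsTo{R}{\emptyset}{C'}{C''}$, $\Defed{\ProjectC{C''}{\mathfrak{L}}}$, and $\LND{\ProjectC{C''}{\mathfrak{L}}}{\Pi_1'}$. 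Choreographic preservation (Theorem~\ref{thm:relative-pres}) keeps $C''$ well-typed at $\ChorTypeTau$, and a routine lemma that stepping never introduces new location names gives $\LocationNames(C'') \subseteq \LocationNames(C') \subseteq \mathfrak{L}$. Finally, transitivity of $\LNDRel$ turns $\LND{\ProjectC{C''}{\mathfrak{L}}}{\Pi_1'}$ and $\LND{\Pi_1'}{\Pi}$ into $\LND{\ProjectC{C''}{\mathfrak{L}}}{\Pi}$, re-establishing the invariant at $\Pi$ with $C = C''$.

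With the invariant in hand at the final $\Pi$, I fix the witnessing $C$ and apply choreographic progress (Theorem~\ref{thm:relative-prog}) to the closed, well-typed $C$. If $C$ is a choreography value, then a short lemma shows its projection is a value at every location (each of $\Own{\GenericLoc}{v}$, the local function, and the global function projects to $\Unit$, $\Ret(v)$, or a control-language function, all values), and since $\LNDRel$ relates a value only to itself (choice constructs are not values, and functions relate only to themselves), $\LND{\ProjectC{C}{\mathfrak{L}}}{\Pi}$ forces $\Pi$ to be a value at every location --- the first disjunct. Otherwise $\ChorStepsTo{R}{B}{C}{C'}$ for some $R,B,C'$; every location named in $R$ lies in $\LocationNames(C) \subseteq \mathfrak{L}$, so Global Completeness (Theorem~\ref{thm:global-complete}) yields a system step $\SysStepsTo{\CompileR{R}}{\ProjectC{C}{\mathfrak{L}}}{\Pi_0}$. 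Lifting this step along $\LND{\ProjectC{C}{\mathfrak{L}}}{\Pi}$ via the first bullet of Theorem~\ref{thm:lift-lower-global} produces a $\Pi'$ with $\SysStepsTo{\CompileR{R}}{\Pi}{\Pi'}$ --- the second disjunct.

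The main obstacle is getting the nondeterminism bookkeeping exactly right. Because choreographies resolve branch choices ``eagerly'' (as in the $C_1/C_2$ example preceding Theorem~\ref{thm:lift-lower-global}), the reachable system $\Pi$ is in general strictly \emph{more} nondeterministic than the projection of the choreography it corresponds to, so the invariant must be phrased with $\LNDRel$ rather than equality, and the inductive step must use the \emph{lowering} direction of Theorem~\ref{thm:lift-lower-global} --- which is exactly why that theorem restricts its second bullet to systems that are projections --- before Global Soundness can reconstruct a choreography. Keeping the side conditions aligned (non-emptiness of $\mathfrak{L}$, containment of $\LocationNames(C)$, and the fact that a step's redex names only locations already present) is the remaining fiddly part, but each is either a hypothesis or an easy structural lemma.
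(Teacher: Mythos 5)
Your proof is correct and takes essentially the same route as the paper's: the paper's opening appeal to (multi-step) soundness---``there are a list of redices~$Rs$ and a choreography~$C_2$ such that $\ChorStepsTo{Rs}{\emptyset}{C_1}{C_2}$ and $\LND{\ProjectC{C_2}{\mathfrak{L}}}{\Pi}$''---is precisely the invariant your induction maintains (lowering via the second bullet of Theorem~\ref{thm:lift-lower-global}, single-step Global Soundness, transitivity of $\LNDRel$), and your endgame (preservation, progress, then either the values-project-to-values argument or Global Completeness followed by lifting across $\LNDRel$) matches the paper's sketch step for step. The only cosmetic difference is that you thread well-typedness and the $\LocationNames$ side condition through the induction, whereas the paper's sketch applies preservation once over the whole redex list at the end.
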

\begin{proof}[Sketch of the mechanized proof]
  By soundness, there are a list of redices~$Rs$ and a choreography~$C_2$ such that $\ChorStepsTo{Rs}{\emptyset}{C_1}{C_2}$ and $\LND{\ProjectC{C_2}{\mathfrak{L}}}{\Pi}$.
  By preservation, $\ChorTyping{\cdot}{\cdot}{C_2}{\ChorTypeTau}$, and by progress either $C_2$ is a choreography~value or there are $R$ and $C_3$ such that $\ChorStepsTo{R}{\emptyset}{C_2}{C_3}$.
  If $C_2$ is a value, then every location in $\Pi$ maps to a control-language value, and we are done.
  Otherwise, completeness tells us there is a $\Pi''$ such that $\SysStepsTo{\CompileR{R}}{\ProjectC{C_2}{\mathfrak{L}}}{\Pi''}$.
  We can then use Theorem~\ref{thm:lift-lower-global} to get a $\Pi'$ such that $\SysStepsTo{\CompileR{R}}{\Pi}{\Pi'}$, as desired.
\end{proof}


\section{Notes on the Coq Code}
\label{sec:notes-coq-code}

In this paper, we have presented all of our work in a standard, named style.
However, in our Coq code we use a nameless style with de~Bruijn indices.
While normally this would not pose any difficulties, we have taken the unusual step of treating the local~language generically.
Here, we give the nameless version of our requirements on the local language, ensuring that the transition to the named style does not create undue confusion.

First, we require that expressions have decidable \emph{syntactic} equality---usually a trivial requirement, since programs are usually sentences from a context-free grammar.
Next, we formalize the fact that variables must be expressions by requiring a mathematical function $\text{var}$ from natural numbers to expressions.
In all of our examples, this is a terminal in the language, but this is not required.
Instead of the ability to compute the set of free variables of an expression, we require a predicate $\ExprClosed[n]{e}$, which means that there are no free variables above~$n$ in~$e$.
We require that $\ExprClosed[n]{\text{var}(m)}$ if and only if $m < n$, and we write $\ExprClosed{e}$ for $\ExprClosed[0]{e}$. 

Substitution is changed to allow for infinite parallel substitution.
Formally, we require an operation $\Subst{e}{\sigma}$ where $\sigma$ is a mathematical function from natural numbers to expressions.
This must obey the following equations:
\begin{itemize}
\item $\Subst{\text{var}(n)}{\sigma} = \sigma(n)$
\item $\Subst{e}{n \mapsto \text{var}(n)} = e$
\item $\Subst{(\Subst{e}{\sigma_1})}{\sigma_2} = \Subst{e}{n \mapsto (\Subst{\sigma_1(n)}{\sigma_2})}$
\item if $\sigma_1(n) = \sigma_2(n)$ for every natural number~$n$, then $\Subst{e}{\sigma_1} = \Subst{e}{\sigma_2}$    
\end{itemize}
The last requirement is important because Coq is an \emph{intensional} type theory, so functions that behave the same on all inputs are not necessarily equal.
The other three equations are nameless versions of the equations in Section~\ref{sec:system-model}.

Finally, we require an additional \emph{renaming} operation, which we write $e \langle\xi\rangle$, where $\xi$ is a mathematical function from natural numbers to natural numbers.
This must satisfy the equation $e \langle\xi\rangle = \Subst{e}{n \mapsto \text{var}(\xi(n))}$.
While this equation could serve as a definition of renaming, it is often useful to define and reason about it separately.
For instance, we can use renaming to define exchange and weakening in typing.

Typing judgments in the nameless setting use contexts which are mathematical functions from natural numbers to types.
The requirements in Figure~\ref{fig:local-type-rules} are then equivalent to the following requirements:
\begin{mathpar}
  \infer[Var]{ }{\ExprTyping{\Gamma}{\text{var}(n)}{\Gamma(n)}} \and
  \infer[Extensionality]{\forall n, \Gamma(n) = \Delta(n)\\ \ExprTyping{\Gamma}{e}{\ExprTypeTau}}{\ExprTyping{\Delta}{e}{\ExprTypeTau}} \and
  \infer[Weakening]{\ExprTyping{\Gamma}{e}{\ExprTypeTau}}{\ExprTyping{\Gamma \circ \xi}{e \langle\xi\rangle}{\ExprTypeTau}} \and
  \infer[Strengthening]{\ExprClosed[n]{e}\\\forall m < n, \Gamma(m) = \Delta(m)\\ \ExprTyping{\Gamma}{e}{\ExprTypeTau}}{\ExprTyping{\Delta}{e}{\ExprTypeTau}}\and
  \infer[Substitution]{\forall n, \ExprTyping{\Gamma}{\sigma(n)}{\Delta(n)}\\ \ExprTyping{\Delta}{e}{\ExprTypeTau}}{\ExprTyping{\Gamma}{\Subst{e}{\sigma}}{\ExprTypeTau}}
\end{mathpar}
Note that both \textsc{Exchange} and \textsc{Weakening} are absorbed into the more-general \textsc{Weakening} rule.
Moreover, we made \textsc{Substitution} infinitary.
The \textsc{Var} and \textsc{Strengthening} rules are straightforward transformations of the named versions in Figure~\ref{fig:local-type-rules}.


\section{Related Work}
\label{sec:related-work}

\subsection{Choreographies}
\label{sec:related-choreographies}

Choreographies originate as a way of writing web services.
The W3C released a report on choreographies as a way to write web services~\citep{W3C04}, which originated the idea of endpoint projection~\citep{ZongyanXCH07} (see \citet{Cruz-FilipeM17} for the historical note).
Soon, there was work formalizing and understanding choreographies from this point of view.

Choreographic programming as a paradigm originates with Montesi's Ph.D. thesis~\citep{Montesi13}.
There, he develops the first choreographic programming language and begins exploring the formal properties thereof.
Later works extend this idea, with \citet{Cruz-FilipeM17} creating a core calculus of choreographies and \citet{CarboneM13} combining choreographies with session types to ensure that endpoint projection never fails.
However these works were all lower-order, used equivalence-based semantics, and had no mechanized metatheory.

More closely related with this project, \citet{Cruz-FilipeM17c} create a choreographic language with procedures and the ability to make procedure calls.
While parts of our endpoint projection definition are inspired by this work, procedural choreographies remain resolutely lower-order.
Procedures are global and shared, but cannot be treated as data; moreover, they cannot take other choreographies as inputs nor return them as outputs.
However, unlike our functions, these procedures are able to take locations as inputs.
This is harder in our setting because locations are part of types; we consider polymorphism (including location polymorphism) to be future work.

Some work has been done on higher-order choreographies.
The Choral Project~\cite{GiallorenzoMP20} builds choreographies on top of object-oriented programming.
However, Choral has never been formalized, and so the theoretical underpinnings of higher-order choreographies were left unexplored.

Significantly, concurrent and completely independent work on functional choreographies has recently been undertaken by \citet{CruzFilipeGLMP21}.
This work resembles Pirouette in several ways, but has a significantly different design.
Most importantly, they identify the local language with the choreography language.
This design choice allows functions to be sent as messages, but they require that messages only mention a single location, preventing functions sent as messages from including communication.
It also leads to a simpler system, but makes it much harder for them to identify the core features of the local language which allow for type soundness of the choreographic language.
Moreover, their operational semantics allows for functions to be reduced; this allows them to not need as many synchronizations as Pirouette projection requires.
However, it violates the common requirement of call-by-value languages that functions be treated as values and makes execution more unpredictable.
We preferred to keep the simpler and more-traditional operational semantics.
Finally, they do not have any mechanized formalization of their system.

\paragraph{Mechanization}
Two very recent papers have mechanized the metatheory of core choreographies, but both are lower-order.
\citet{CruzFilipeMP21a} use Coq to formalize the core calculus of choreographies developed by \citet{Cruz-FilipeM17}, modified so that the language of local computations is a parameter.
The focus is on proving that the language is Turing-complete.
In the meantime, \citet{CruzFilipeMP21b} extend that work to verify (in Coq) endpoint projection and deadlock freedom of the same language.

Interestingly, both these papers formalize out-of-order execution for choreographies using a labeled-transition system, similar to ours.
As far as we are aware, they, along with a forthcoming book on choreographies~\cite{Montesi21} were the first to do so, though our work was formalized before we discovered their concurrent work.
Their work takes inspiration from \citet{HondaYC16}, who use a similar labeled-transition system semantics for multiparty session types.
Thus, they do not use any construct similar to our block sets.
This means that they cannot allow locations to locally reduce their messages when the receiver of that message is blocked. 
While for their language this does not matter (since messages do not reduce at all), in our setting this would be problematic.

\subsection{Functional Concurrent Programming}
\label{sec:funct-conc-progr}

There is a long tradition of mixing functional and concurrent programming in principled ways.
In practice, this often leads to languages that look a lot like our control~language, including the concurrency features in Racket~\citep{ConcurrentRacket}.
The first academic language with channels and communication in a functional language was Facile~\citep{GiacaloneMP89}.
Even before then, work on parallelizing compilers for functional programming was popular.
\citet{Burton87} considered adding annotations to functions describing when arguments should be evaluated at what location.

Currently, the most related academic project might be Links~\citep{CooperLWY06} and its core language, the RPC calculus~\citep{CooperW09}.
Links also provides for a mixture of higher-order typed functional programming and concurrency.
However, in Links, communication always happens at function boundaries, unlike Pirouette.
Moreover, the RPC calculus only allows for one thread of execution, even when multiple unrelated locations can profitably be taking actions.
This is in contradiction to the work on choreographies---and Pirouette in particular---where unrelated locations can compute and even pass messages concurrently.

Another language which mixes functional and concurrent programming is \citeauthor{MurphyCH07}'s ML5~\citep{LicataH10,MurphyCH07}.
ML5 has a type system for communication based on the Kripke semantics of modal logic.
Interestingly, our type system also takes inspiration from modal logic: our type system can be seen as (the Curry-Howard analogue of) a degenerate form of the \emph{proof system} for modal logic.
Their use of the Kripke semantics makes sense, however, because different worlds can be viewed as different machines.
Unfortunately, this led to difficulties in shipping some data that could be treated as code, so ML5 focuses on annotating \emph{mobile code}, which includes static types like strings and integers, but not functions or local resources like arrays.
This restriction is necessary due to their use of local state.
Since Pirouette has no state, is has no such constraint.



\bibliography{main}
\appendix
\section*{Appendices}
\section{Full Pirouette Operational Semantics}
\label{sec:full-chor-oper}

\subsection{Local Substitution, Defined}
\label{sec:local-subst-defin}

\begin{mathparpagebreakable}
  \Subst{X}{\SingleChorExprSubst{\GenericLoc}{x}{e}} = X \and
  \Subst{(\Own{\GenericLocI}{e_1})}{\SingleChorExprSubst{\GenericLocII}{x}{e_2}} =
  \left\{
    \begin{array}{ll}
      \Own{\GenericLocI}{\Subst{e_1}{\SingleSubst{x}{e_2}}} & \text{if}~\GenericLocI = \GenericLocII\\
      \Own{\GenericLocI}{e_1} & \text{otherwise}
    \end{array}
  \right. \and
  \Subst{(\Send{\GenericLocI}{e_1}{\GenericLocII}{x}{C})}{\SingleChorExprSubst{\GenericLocIII}{y}{e}} =
  \left\{
    \begin{array}{ll}
      \Send{\GenericLocI}{\Subst{e_1}{\SingleSubst{x}{e_2}}}{\GenericLocII}{y}{C}
      & \begin{array}{l}\text{if}~\GenericLocI = \GenericLocII = \GenericLocIII\\\text{and}~x=y\end{array}\\[1em]
      \Send*{\GenericLocI}{\Subst{e_1}{\SingleSubst{y}{e_2}}}{\GenericLocII}{x}{~(\Subst{C}{\SingleChorExprSubst{\GenericLocIII}{y}{e}})}
      & \begin{array}{l}\text{if}~\GenericLocI = \GenericLocIII\\\text{and either}\begin{array}[t]{l}\GenericLocII \neq \GenericLocIII\\\text{or}~x\neq y\end{array}\end{array}\\
      \Send{\GenericLocI}{e_1}{\GenericLocII}{x}{C} & \begin{array}{l}\text{if}~\GenericLocI \neq \GenericLocIII\\\text{and}~\GenericLocII = \GenericLocIII\\\text{and}~x = y\end{array}\\[2em]
      \Send*{\GenericLocI}{e_1}{\GenericLocII}{x}{~(\Subst{C}{\SingleChorExprSubst{\GenericLocIII}{y}{e}})} & \begin{array}{l}\text{if}~\GenericLocI \neq \GenericLocIII\\\text{and either}\begin{array}[t]{l}\GenericLocII \neq \GenericLocIII\\\text{or}~x \neq y\end{array}\end{array}
    \end{array}
  \right.\and
  \Subst{(\ChorIf{\GenericLocI}{e}{C_1}{C_2})}{\SingleChorExprSubst{\GenericLocII}{x}{e}} =
  \left\{
    \begin{array}{ll}
      \ChorIf*{\GenericLocI}{(\Subst{e}{\SingleSubst{x}{e}})}{(\Subst{C_1}{\SingleChorExprSubst{\GenericLocII}{x}{e}})}{(\Subst{C_2}{\SingleChorExprSubst{\GenericLocII}{x}{e}})} & \text{if}~\GenericLocI = \GenericLocII\\[2em]
      \ChorIf*{\GenericLocI}{e}{(\Subst{C_1}{\SingleChorExprSubst{\GenericLocII}{x}{e}})}{(\Subst{C_2}{\SingleChorExprSubst{\GenericLocII}{x}{e}})} & \text{otherwise}
    \end{array}
  \right. \and
  \Subst{(\Sync{\GenericLocI}{\GenChoice}{\GenericLocII}{C})}{\SingleChorExprSubst{\GenericLocIII}{x}{e}} = \Sync{\GenericLocI}{\GenChoice}{\GenericLocII}{(\Subst{C}{\SingleChorExprSubst{\GenericLocIII}{x}{e}})} \\
  \Subst{(\DefLocal{\GenericLocI}{x}{C_1}{C_2})}{\SingleChorExprSubst{\GenericLocII}{y}{e}} =
  \left\{
    \begin{array}{ll}
      \DefLocal*{\GenericLocI}{x}{(\Subst{C_1}{\SingleChorExprSubst{\GenericLocII}{y}{e}})}{C_2} & \begin{array}{l}\text{if}~\GenericLocI = \GenericLocII\\\text{and}~x=y\end{array}\\[1em]
      \DefLocal*{\GenericLocI}{x}{(\Subst{C_1}{\SingleChorExprSubst{\GenericLocII}{y}{e}})}{(\Subst{C_2}{\SingleChorExprSubst{\GenericLocII}{y}{e}})} & \text{otherwise}
    \end{array}
  \right. \and
  \Subst{(\FunLocal{\GenericLocI}{F}{X}{C})}{\SingleChorExprSubst{\GenericLocII}{y}{e}} =
  \left\{
    \begin{array}{ll}
      \FunLocal{\GenericLocI}{F}{X}{C} & \text{if}~\GenericLocI=\GenericLocII~\text{and}~x=y\\
      \FunLocal{\GenericLocI}{F}{X}{(\Subst{C}{\SingleChorExprSubst{\GenericLocII}{y}{e}})} & \text{otherwise}
    \end{array}
  \right. \and
  \Subst{(\AppLocal{\GenericLocI}{C}{e_1})}{\SingleChorExprSubst{\GenericLocII}{x}{e_2}} =
  \left\{
    \begin{array}{ll}
      \AppLocal{\GenericLocI}{(\Subst{C}{\SingleChorExprSubst{\GenericLocII}{x}{e_2}})}{(\Subst{e_1}{\SingleSubst{x}{e_2}})} & \text{if}~\GenericLocI = \GenericLocII\\
      \AppLocal{\GenericLocI}{(\Subst{C}{\SingleChorExprSubst{\GenericLocII}{x}{e_2}})}{e_1} & \text{otherwise}
    \end{array}
  \right. \and
  \Subst{(\FunGlobal{F}{X}{C})}{\SingleChorExprSubst{\GenericLoc}{x}{e}} = \FunGlobal{F}{X}{(\Subst{C}{\SingleChorExprSubst{\GenericLoc}{x}{e}})} \and
  \Subst{(\AppGlobal{C_1}{C_2})} = \AppGlobal{(\Subst{C_1}{\SingleChorExprSubst{\GenericLoc}{x}{e}})}{(\Subst{C_2}{\SingleChorExprSubst{\GenericLoc}{x}{e}})}
\end{mathparpagebreakable}

\subsection{Block-Set Semantics}
\label{sec:block-set-semantics}
See Section~\ref{sec:oper-semant} for discussion.

  \begin{syntax}
    \category[Redices]{R} \alternative{\RDone{\GenericLoc}{e_1}{e_2}}
    \alternative{\RIfE{\GenericLoc}{e_1}{e_2}}
    \alternative{\RIfT{\GenericLoc}}
    \alternative{\RIfF{\GenericLoc}}
    \alternativeLine{\RSendE{\GenericLoc}{e_1}{e_2}{\GenericLocII}}
    \alternative{\RSendV{\GenericLoc}{v}{\GenericLocII}}
    \alternative{\RSync{\GenericLoc}{\GenChoice}{\GenericLocII}}
    \alternativeLine{\RDefLocal{\GenericLoc}{v}}
    \alternative{\RAppLocalE{\GenericLoc}{e_1}{e_2}}
    \alternative{\RAppLocal{\GenericLoc}{v}}
    \alternativeLine{\RAppGlobal}
    \alternative{\RArg(R)} \alternative{\RFun(R)}
  \end{syntax}\\

\begin{mathparpagebreakable}
  \infer[DoneE]{\GenericLoc \notin B\\ \ExprStepsTo{e_1}{e_2}}{\ChorStepsTo{\RDone{\GenericLoc}{e_1}{e_2}}{B}{\Own{\GenericLoc}{e_1}}{\Own{\GenericLoc}{e_2}}} \\
  \infer[SendE]{\GenericLocI \notin B\\ \GenericLocI \neq \GenericLocII\\ \ExprStepsTo{e_1}{e_2}}{
    \ChorStepsTo{\RSendE{\GenericLocI}{e_1}{e_2}{\GenericLocII}}{B}{\Send{\GenericLocI}{e_1}{\GenericLocII}{x}{C}}{\Send{\GenericLocI}{e_2}{\GenericLocII}{x}{C}}} \and
  \infer[SendI]{\ChorStepsTo{R}{B \mathrel{\cup} \{\GenericLocI, \GenericLocII\}}{C_1}{C_2}}{
    \ChorStepsTo{R}{B}{\Send{\GenericLocI}{e}{\GenericLocII}{x}{C_1}}{\Send{\GenericLocI}{e}{\GenericLocII}{x}{C_2}}}\and
  \infer[SendV]{\GenericLocI \notin B\\ \GenericLocII \notin B\\ \ExprValue{v}\\ \GenericLocI \neq \GenericLocII}{
    \ChorStepsTo{\RSendV{\GenericLocI}{v}{\GenericLocII}}{B}{\Send{\GenericLocI}{v}{\GenericLocII}{x}{C}}{\Subst{C}{\SingleChorExprSubst{\GenericLocII}{x}{v}}}}\\
  \infer[IfE]{
    \GenericLoc \notin B\\
    \ExprStepsTo{e_1}{e_2}
  }
  {
    \ChorStepsTo{\RIfE{\GenericLoc}{e_1}{e_2}}{B}{\ChorIf{\GenericLoc}{e_1}{C_1}{C_2}}{\ChorIf{\GenericLoc}{e_2}{C_1}{C_2}}
  } \and
  \infer[IfI]{
    \ChorStepsTo{R}{B \cup \{\GenericLoc\}}{C_1}{C_1'}\\
    \ChorStepsTo{R}{B \cup \{\GenericLoc\}}{C_2}{C_2'}
  }
  {
    \ChorStepsTo{R}{B}{\ChorIf{\GenericLoc}{e}{C_1}{C_2}}{\ChorIf{\GenericLoc}{e}{C_1'}{C_2'}}
  } \and
  \infer[IfT]{
    \GenericLoc \notin B\\
  }
  {
    \ChorStepsTo{\RIfT{\GenericLoc}}{B}{\ChorIf{\GenericLoc}{\True}{C_1}{C_2}}{C_1}
  } \and
  \infer[IfF]{
    \GenericLoc \notin B\\
  }
  {
    \ChorStepsTo{\RIfT{\GenericLoc}}{B}{\ChorIf{\GenericLoc}{\False}{C_1}{C_2}}{C_2}
  }\\
  \infer[DefLocalI]{
    \ChorStepsTo{R}{B}{C_1}{C_1'}
  }{
    \ChorStepsTo{\RArg(R)}{B}{\DefLocal{\GenericLoc}{x}{C_1}{C_2}}{\DefLocal{\GenericLoc}{x}{C_1'}{C_2}}
  } \and
  \infer[DefLocal]{
    \ExprValue{v}
  }{
    \ChorStepsTo{\RDefLocal{\GenericLoc}{v}}{B}{\DefLocal{\GenericLoc}{x}{\Own{\GenericLoc}{v}}{C}}{\Subst{C}{\SingleChorExprSubst{\GenericLoc}{x}{v}}}
  }\and
  \infer[AppLocalFun]{
    \ChorStepsTo{R}{B}{C_1}{C_2}
  }{
    \ChorStepsTo{\RFun(R)}{B}{\AppLocal{\GenericLoc}{C_1}{e}}{\AppLocal{\GenericLoc}{C_2}{e}}
  } \and
  \infer[AppLocalArg]{
    \GenericLoc \notin B\\
    \ExprStepsTo{e_1}{e_2}
  }{
    \ChorStepsTo{\RAppLocalE{\GenericLoc}{e_1}{e_2}}{B}{\AppLocal{\GenericLoc}{C}{e_1}}{\AppLocal{\GenericLoc}{C}{e_2}}
  } \and
  \infer[AppLocal]{
    \ExprValue{v}
  }{
    \ChorStepsTo{\RAppLocal{\GenericLoc}{v}}{\emptyset}{\AppLocal{\GenericLoc}{(\FunLocal{\GenericLoc}{F}{x}{C})}{e}}{\Subst{\Subst{C}{\SingleChorExprSubst{\GenericLoc}{x}{v}}}{\SingleSubst{F}{\FunLocal{\GenericLoc}{F}{x}{C}}}}
  }\\
  \infer[AppGlobalFun]{
    \ChorStepsTo{R}{B}{C_1}{C_1'}
  }{
    \ChorStepsTo{\RFun(R)}{B}{\AppGlobal{C_1}{C_2}}{\AppGlobal{C_1'}{C_2}}
  }\and
  \infer[AppGlobalArg]{
    \ChorStepsTo{R}{B}{C_2}{C_2'}
  }{
    \ChorStepsTo{\RArg(R)}{B}{\AppGlobal{C_1}{C_2}}{\AppGlobal{C_1}{C_2'}}
  } \and
  \infer[AppGlobal]{
    \ChorValue(V)
  }{
    \ChorStepsTo{\RAppGlobal}{\emptyset}{\AppGlobal{(\FunGlobal{F}{X}{C})}{V}}{\Subst{C}{\SingleSubst{X}{V}, \SingleSubst{F}{\FunGlobal{F}{X}{C}}}}
  }\\
  \infer[SyncI]{
    \ChorStepsTo{R}{B \cup \{\GenericLocI, \GenericLocII\}}{C_1}{C_2}
  }{
    \ChorStepsTo{R}{B}{\Sync{\GenericLocI}{\GenChoice}{\GenericLocII}{C_1}}{\Sync{\GenericLocI}{\GenChoice}{\GenericLocII}{C_2}}
  } \and
  \infer[Sync]{
    \GenericLocI \notin B\\ \GenericLocII \notin B\\
    \GenericLocI \neq \GenericLocII
  }{
    \ChorStepsTo{R}{B}{\Sync{\GenericLocI}{\GenChoice}{\GenericLocII}{C}}{C}
  }
\end{mathparpagebreakable}

\subsection{Weak Block-Set Semantics}
\label{sec:weak-block-set}

See Section~\ref{sec:equational-reasoning} for discussion.

\begin{mathparpagebreakable}
  \infer[DoneE]{\ExprStepsTo{e_1}{e_2}}{\WeakStepsTo{\RDone{\GenericLoc}{e_1}{e_2}}{\emptyset}{\Own{\GenericLoc}{e_1}}{\Own{\GenericLoc}{e_2}}} \\
  \infer[SendE]{\GenericLocI \notin B\\ \GenericLocII \notin B\\ \GenericLocI \neq \GenericLocII\\ \ExprStepsTo{e_1}{e_2}}{
    \WeakStepsTo{\RSendE{\GenericLocI}{e_1}{e_2}{\GenericLocII}}{B}{\Send{\GenericLocI}{e_1}{\GenericLocII}{x}{C}}{\Send{\GenericLocI}{e_2}{\GenericLocII}{x}{C}}} \and
  \infer[SendI]{\WeakStepsTo{R}{B \mathrel{\cup} \{\GenericLocI, \GenericLocII\}}{C_1}{C_2}}{
    \WeakStepsTo{R}{B}{\Send{\GenericLocI}{e}{\GenericLocII}{x}{C_1}}{\Send{\GenericLocI}{e}{\GenericLocII}{x}{C_2}}}\and
  \infer[SendV]{\GenericLocI \notin B\\ \GenericLocII \notin B\\ \ExprValue{v}\\ \GenericLocI \neq \GenericLocII}{
    \WeakStepsTo{\RSendV{\GenericLocI}{v}{\GenericLocII}}{B}{\Send{\GenericLocI}{v}{\GenericLocII}{x}{C}}{\Subst{C}{\SingleChorExprSubst{\GenericLocII}{x}{v}}}}\\
  \infer[IfE]{
    \GenericLoc \notin B\\
    \ExprStepsTo{e_1}{e_2}
  }
  {
    \WeakStepsTo{\RIfE{\GenericLoc}{e_1}{e_2}}{B}{\ChorIf{\GenericLoc}{e_1}{C_1}{C_2}}{\ChorIf{\GenericLoc}{e_2}{C_1}{C_2}}
  } \and
  \infer[IfI]{
    \WeakStepsTo{R}{B \cup \{\GenericLoc\}}{C_1}{C_1'}\\
    \WeakStepsTo{R}{B \cup \{\GenericLoc\}}{C_2}{C_2'}
  }
  {
    \WeakStepsTo{R}{B}{\ChorIf{\GenericLoc}{e}{C_1}{C_2}}{\ChorIf{\GenericLoc}{e}{C_1'}{C_2'}}
  } \and
  \infer[IfT]{
    \GenericLoc \notin B\\
  }
  {
    \WeakStepsTo{\RIfT{\GenericLoc}}{B}{\ChorIf{\GenericLoc}{\True}{C_1}{C_2}}{C_1}
  } \and
  \infer[IfF]{
    \GenericLoc \notin B\\
  }
  {
    \WeakStepsTo{\RIfT{\GenericLoc}}{B}{\ChorIf{\GenericLoc}{\False}{C_1}{C_2}}{C_2}
  }\\
  \infer[DefLocalI]{
    \WeakStepsTo{R}{\emptyset}{C_1}{C_1'}
  }{
    \WeakStepsTo{\RArg(R)}{\emptyset}{\DefLocal{\GenericLoc}{x}{C_1}{C_2}}{\DefLocal{\GenericLoc}{x}{C_1'}{C_2}}
  } \and
  \infer[DefLocal]{
    \ExprValue{v}
  }{
    \WeakStepsTo{\RDefLocal{\GenericLoc}{v}}{\emptyset}{\DefLocal{\GenericLoc}{x}{\Own{\GenericLoc}{v}}{C}}{\ChorSubst{C}{\SingleChorExprSubst{\GenericLoc}{x}{v}}}
  }\and
  \infer[AppLocalFun]{
    \WeakStepsTo{R}{\emptyset}{C_1}{C_2}
  }{
    \WeakStepsTo{\RFun(R)}{\emptyset}{\AppLocal{\GenericLoc}{C_1}{e}}{\AppLocal{\GenericLoc}{C_2}{e}}
  } \and
  \infer[AppLocalArg]{
    \ExprStepsTo{e_1}{e_2}
  }{
    \WeakStepsTo{\RAppLocalE{\GenericLoc}{e_1}{e_2}}{\emptyset}{\AppLocal{\GenericLoc}{C}{e_1}}{\AppLocal{\GenericLoc}{C}{e_2}}
  } \and
  \infer[AppLocal]{
    \ExprValue{v}
  }{
    \WeakStepsTo{\RAppLocal{\GenericLoc}{v}}{\emptyset}{\AppLocal{\GenericLoc}{(\FunLocal{\GenericLoc}{F}{x}{C})}{e}}{\Subst{\Subst{C}{\SingleChorExprSubst{\GenericLoc}{x}{v}}}{\SingleSubst{F}{\FunLocal{\GenericLoc}{F}{x}{C}}}}
  }\\
  \infer[AppGlobalFun]{
    \WeakStepsTo{R}{\emptyset}{C_1}{C_1'}
  }{
    \WeakStepsTo{\RFun(R)}{\emptyset}{\AppGlobal{C_1}{C_2}}{\AppGlobal{C_1'}{C_2}}
  }\and
  \infer[AppGlobalArg]{
    \WeakStepsTo{R}{\emptyset}{C_2}{C_2'}
  }{
    \WeakStepsTo{\RArg(R)}{\emptyset}{\AppGlobal{C_1}{C_2}}{\AppGlobal{C_1}{C_2'}}
  } \and
  \infer[AppGlobal]{
    \ChorValue(V)
  }{
    \WeakStepsTo{\RAppGlobal}{\emptyset}{\AppGlobal{(\FunGlobal{F}{X}{C})}{V}}{\Subst{C}{\SingleSubst{X}{V}, \SingleSubst{F}{\FunGlobal{F}{X}{C}}}}
  }\\
  \infer[SyncI]{
    \WeakStepsTo{R}{B \cup \{\GenericLocI, \GenericLocII\}}{C_1}{C_2}
  }{
    \WeakStepsTo{R}{B}{\Sync{\GenericLocI}{\GenChoice}{\GenericLocII}{C_1}}{\Sync{\GenericLocI}{\GenChoice}{\GenericLocII}{C_2}}
  } \and
  \infer[Sync]{
    \GenericLocI \notin B\\ \GenericLocII \notin B\\
    \GenericLocI \neq \GenericLocII
  }{
    \WeakStepsTo{R}{B}{\Sync{\GenericLocI}{\GenChoice}{\GenericLocII}{C}}{C}
  }
\end{mathparpagebreakable}

\subsection{Equivalence-Based Semantics}
\label{sec:equiv-based-semant}

See Section~\ref{sec:equational-reasoning} for discussion.

\begin{mathparpagebreakable}
  \infer[EquivStep]{
    \ChorEquiv{C_1}{C_1'}\\
    \EquivStepsTo{C_1'}{C_2'}\\
    \ChorEquiv{C_2'}{C_2}
  }{
    \EquivStepsTo{C_1}{C_2}
  } \and
  \infer[DoneE]{
    \ExprStepsTo{e_1}{e_2}
  }{
    \EquivStepsTo{\Own{\GenericLoc}{e_1}}{\Own{\GenericLoc}{e_2}}
  } \and
  \infer[SendE]{
    \ExprStepsTo{e_1}{e_2}\\
    \GenericLocI \neq \GenericLocII
  }{
    \EquivStepsTo{\Send{\GenericLocI}{e_1}{\GenericLocII}{x}{C}}{\Send{\GenericLocI}{e_2}{\GenericLocII}{x}{C}}
  }\and
  \infer[SendV]{
    \ExprValue{v}\\
    \GenericLocI \neq \GenericLocII
  }{
    \EquivStepsTo{\Send{\GenericLocI}{v}{\GenericLocII}{x}{C}}{\Subst{C}{\SingleChorExprSubst{\GenericLocII}{x}{v}}}
  } \and
  \infer[IfE]{
    \ExprStepsTo{e_1}{e_2}
  }{
    \EquivStepsTo{\ChorIf{\GenericLoc}{e_1}{C_1}{C_2}}{\ChorIf{\GenericLoc}{e_2}{C_1}{C_2}}
  }\and
  \infer[IfTrue]{ 
  }{
    \EquivStepsTo{\ChorIf{\GenericLoc}{\True}{C_1}{C_2}}{C_1}
  } \and 
  \infer[IfFalse]{ 
  }{
    \EquivStepsTo{\ChorIf{\GenericLoc}{\False}{C_1}{C_2}}{C_2}
  } \and
  \infer[Sync]{
    \GenericLocI \neq \GenericLocII
  }{
    \EquivStepsTo{\Sync{\GenericLocI}{\GenChoice}{\GenericLocII}{C}}{C}
  } \and
  \infer[DefLocalArg]{
    \EquivStepsTo{C_1}{C_1'}
  }{
    \EquivStepsTo{\DefLocal{\GenericLoc}{x}{C_1}{C_2}}{\DefLocal{\GenericLoc}{x}{C_1'}{C_2}}
  } \and
  \infer[DefLocal]{
    \ExprValue{v}
  }{
    \EquivStepsTo{\DefLocal{\GenericLoc}{x}{\Own{\GenericLoc}{v}}{C}}{\Subst{C}{\SingleChorExprSubst{\GenericLoc}{x}{v}}}
  } \and
  \infer[AppLocalFun]{
    \EquivStepsTo{C_1}{C_2}
  }{
    \EquivStepsTo{\AppLocal{\GenericLoc}{C_1}{e}}{\AppLocal{\GenericLoc}{C_2}{e}}
  } \and
  \infer[AppLocalArg]{
    \ExprStepsTo{e_1}{e_2}
  }{
    \EquivStepsTo{\AppLocal{\GenericLoc}{C}{e_1}}{\AppLocal{\GenericLoc}{C}{e_2}}
  } \and
  \infer[AppLocal]{
    \ExprValue{v}
  }{
    \EquivStepsTo{\AppLocal{\GenericLoc}{\left(\FunLocal{\GenericLoc}{F}{x}{C}\right)}{v}}{\Subst{C}{\SingleChorExprSubst{\GenericLoc}{x}{v}}}
  } \and
  \infer[AppGlobalFun]{
    \EquivStepsTo{C_1}{C_1'}
  }{
    \EquivStepsTo{\AppGlobal{C_1}{C_2}}{\AppGlobal{C_1'}{C_2}}
  } \and
  \infer[AppGlobalArg]{
    \EquivStepsTo{C_2}{C_2'}
  }{
    \EquivStepsTo{\AppGlobal{C_1}{C_2}}{\AppGlobal{C_1}{C_2'}}
  } \and
  \infer[AppGlobal]{
    \ChorValue(V)
  }{
    \EquivStepsTo{\AppGlobal{\left(\FunGlobal{F}{X}{C}\right)}{V}}{\Subst{C}{\SingleSubst{X}{C}}}
  }
\end{mathparpagebreakable}


\section{Full Choreography Type System}
\label{sec:full-chor-types}

\begin{mathparpagebreakable}
  \infer[Done]{\ExprTyping{\ProjectCtxt{\Gamma}{\GenericLoc}}{e}{\ExprTypeTau}}{\ChorTyping{\Gamma}{\Delta}{\Own{\GenericLoc}{e}}{\OwnedType{\GenericLoc}{\ExprTypeTau}}} \and
  \infer[Var]{\ChorTypingPair{X}{\ChorTypeTau} \in \Delta}{\ChorTyping{\Gamma}{\Delta}{X}{\ChorTypeTau}} \\
  \infer[Send]{\ExprTyping{\ProjectCtxt{\Gamma}{\GenericLoc}}{e}{\ExprTypeTauI}\\
    \ChorTyping{\Gamma, \ChorTypingTriple{\GenericLocII}{x}{\ExprTypeTauI}}{\Delta}{C}{\ChorTypeTauII}\\
    \GenericLoc \neq \GenericLocII
  }
  {\ChorTyping{\Gamma}{\Delta}{\Send{\GenericLoc}{e}{\GenericLocII}{x}{C}}{\ChorTypeTauII}} \and
  \infer[Sync]{\ChorTyping{\Gamma}{\Delta}{C}{\ChorTypeTau}\\ \GenericLoc \neq \GenericLocII}{\ChorTyping{\Gamma}{\Delta}{\Sync{\GenericLoc}{\GenChoice}{\GenericLocII}{C}}{\ChorTypeTau}}\and
  \infer[If]{\ExprTyping{\ProjectCtxt{\Gamma}{\GenericLoc}}{e}{\Bool}\\
    \ChorTyping{\Gamma}{\Delta}{C_1}{\ChorTypeTau}\\
    \ChorTyping{\Gamma}{\Delta}{C_2}{\ChorTypeTau}}
  {\ChorTyping{\Gamma}{\Delta}{\ChorIf{\GenericLoc}{e}{C_1}{C_2}}{\ChorTypeTau}}\and
  \infer[DefLocal]{
    \ChorTyping{\Gamma}{\Delta}{C_1}{\OwnedType{\GenericLoc}{\ExprTypeTauI}}\\
    \ChorTyping{\Gamma, \ChorTypingTriple{\GenericLoc}{x}{\ExprTypeTauI}}{\Delta}{C_2}{\ChorTypeTauII}}
  {\ChorTyping{\Gamma}{\Delta}{\DefLocal{\GenericLoc}{x}{C_1}{C_2}}{\ChorTypeTauII}}\\
  \infer[FunLocal]{
    \ChorTyping{\Gamma, \ChorTypingTriple{\GenericLoc}{x}{\ExprTypeTauI}}{\Delta, \ChorTypingPair{F}{\LocalFunType{\GenericLoc}{\ExprTypeTauI}{\ChorTypeTauII}}}{C}{\ChorTypeTauII}
  }{
    \ChorTyping{\Gamma}{\Delta}{\FunLocal{\GenericLoc}{F}{x}{C}}{\LocalFunType{\GenericLoc}{\ExprTypeTauI}{\ChorTypeTauII}}
  }\and
  \infer[FunGlobal]{
    \ChorTyping{\Gamma}{\Delta, \ChorTypingPair{F}{\GlobalFunType{\ChorTypeTauI}{\ChorTypeTauII}}, \ChorTypingPair{X}{\ChorTypeTauI}}{C}{\ChorTypeTauII}
  }{
    \ChorTyping{\Gamma}{\Delta}{\FunGlobal{F}{X}{C}}{\GlobalFunType{\ChorTypeTauI}{\ChorTypeTauII}}
  }\and
  \infer[AppLocal]{
    \ExprTyping{\ProjectCtxt{\Gamma}{\GenericLoc}}{e}{\ExprTypeTauI}\\
    \ChorTyping{\Gamma}{\Delta}{C}{\LocalFunType{\GenericLoc}{\ExprTypeTauI}{\ChorTypeTauII}}
  }{
    \ChorTyping{\Gamma}{\Delta}{\AppLocal{\GenericLoc}{C}{e}}
  }\and
  \infer[AppGlobal]{
    \ChorTyping{\Gamma}{\Delta}{C_1}{\GlobalFunType{\ChorTypeTauI}{\ChorTypeTauII}}\\
    \ChorTyping{\Gamma}{\Delta}{C_2}{\ChorTypeTauI}
  }{
    \ChorTyping{\Gamma}{\Delta}{\AppGlobal{C_1}{C_2}}{\ChorTypeTauII}
  }    
\end{mathparpagebreakable}


\section{Full Definition of Choreography Equivalence}
\label{sec:full-chor-equiv}

\begin{mathparpagebreakable}
  \infer[Trans]{\ChorEquiv{C_1}{C_2}\\\ChorEquiv{C_2}{C_3}}{\ChorEquiv{C_1}{C_3}}\\
  \infer[VarRefl]{ }{\ChorEquiv{X}{X}} \and
  \infer[DoneRefl]{ }{\ChorEquiv{\Own{\GenericLoc}{e}}{\Own{\GenericLoc}{e}}}\and
  \infer[SendCong]{\ChorEquiv{C_1}{C_2}}{\ChorEquiv{\Send{\GenericLocI}{e}{\GenericLocII}{x}{C_1}}{\Send{\GenericLocI}{e}{\GenericLocII}{x}{C_2}}} \and
  \infer[SyncCong]{\ChorEquiv{C_1}{C_2}}{\ChorEquiv{\Sync{\GenericLocI}{\GenChoice}{\GenericLocII}{C_1}}{\Sync{\GenericLocI}{\GenChoice}{\GenericLocII}{C_2}}} \and
  \infer[IfCong]{\ChorEquiv{C_{1,1}}{C_{1,2}}\\ \ChorEquiv{C_{2,1}}{C_{2,2}}}{\ChorEquiv{\ChorIf{\GenericLoc}{e}{C_{1,1}}{C_{2,1}}}{\ChorIf{\GenericLoc}{e}{C_{1,2}}{C_{2,2}}}} \and
  \infer[DefLocalCong]{\ChorEquiv{C_{1,1}}{C_{1,2}}\\ \ChorEquiv{C_{2,1}}{C_{2,2}}}{\ChorEquiv{\DefLocal{\GenericLoc}{x}{C_{1,1}}{C_{2,1}}}{\DefLocal{\GenericLoc}{x}{C_{1,2}}{C_{2,2}}}} \and
  \infer[FunLocalCong]{\ChorEquiv{C_1}{C_2}}{\ChorEquiv{\FunLocal{\GenericLoc}{F}{x}{C_1}}{\FunLocal{\GenericLoc}{F}{x}{C_2}}} \and
  \infer[FunGlobalCong]{\ChorEquiv{C_1}{C_2}}{\ChorEquiv{\FunGlobal{F}{X}{C_1}}{\FunGlobal{F}{X}{C_2}}} \and
  \infer[AppLocalCong]{\ChorEquiv{C_1}{C_2}}{\ChorEquiv{\AppLocal{\GenericLoc}{C_1}{e}}{\AppLocal{\GenericLoc}{C_2}{e}}} \and
  \infer[AppGlobalCong]{\ChorEquiv{C_{1,1}}{C_{1,2}}\\ \ChorEquiv{C_{2,1}}{C_{2,2}}}{\ChorEquiv{\AppGlobal{C_{1,1}}{C_{2,1}}}{\AppGlobal{C_{1,2}}{C_{2,2}}}}\\
  \infer[SwapSendSend]{\GenericLocI \neq \GenericLocIII\\ \GenericLocII \neq \GenericLocIII \\ \GenericLocI \neq \GenericLocIV\\ \GenericLocII \neq \GenericLocIV}{
    \ChorEquiv{\Send*{\GenericLocI}{e_1}{\GenericLocII}{x}{\Send*{\GenericLocIII}{e_2}{\GenericLocIV}{y}{C}}}
    {\Send*{\GenericLocIII}{e_2}{\GenericLocIV}{y}{\Send*{\GenericLocI}{e_1}{\GenericLocII}{x}{C}}}
  } \\
  \infer[SwapSendSync]{\GenericLocI \neq \GenericLocIII\\ \GenericLocII \neq \GenericLocIII \\ \GenericLocI \neq \GenericLocIV\\ \GenericLocII \neq \GenericLocIV}{
        \ChorEquiv{\Send*{\GenericLocI}{e}{\GenericLocII}{x}{\Sync*{\GenericLocIII}{\GenChoice}{\GenericLocIV}{C}}}
    {\Sync*{\GenericLocIII}{\GenChoice}{\GenericLocIV}{\Send*{\GenericLocI}{e}{\GenericLocII}{x}{C}}}
  }\and
  \infer[SwapSyncSend]{\GenericLocI \neq \GenericLocIII\\ \GenericLocII \neq \GenericLocIII \\ \GenericLocI \neq \GenericLocIV\\ \GenericLocII \neq \GenericLocIV}{
        \ChorEquiv{\Sync*{\GenericLocI}{\GenChoice}{\GenericLocII}{\Send*{\GenericLocIII}{e}{\GenericLocIV}{x}{C}}}
    {\Send*{\GenericLocIII}{e}{\GenericLocIV}{x}{\Sync*{\GenericLocI}{\GenChoice}{\GenericLocII}{C}}}
  }\\
  \infer[SwapSendIf]{\GenericLocI \neq \GenericLocIII\\ \GenericLocII \neq \GenericLocIII}{
    \ChorEquiv{\Send*{\GenericLocI}{e_1}{\GenericLocII}{x}{\ChorIf*{\GenericLocIII}{e_2}{C_1}{C_2}}}
    {\ChorIf*{\GenericLocIII}{e_2}{\Send*{\GenericLocI}{e_1}{\GenericLocII}{x}{C_1}}{\Send*{\GenericLocI}{e_1}{\GenericLocII}{x}{C_2}}}
  }\and
  \infer[SwapIfSend]{\GenericLocI \neq \GenericLocII\\ \GenericLocI \neq \GenericLocIII}{
    \ChorEquiv{\ChorIf*{\GenericLocI}{e_1}{\Send*{\GenericLocII}{e_2}{\GenericLocIII}{x}{C_1}}{\Send*{\GenericLocII}{e_2}{\GenericLocIII}{x}{C_2}}}
    {\Send*{\GenericLocII}{e_2}{\GenericLocIII}{x}{\ChorIf*{\GenericLocI}{e}{C_1}{C_2}}}
  }\\
  \infer[SwapSyncSync]{\GenericLocI \neq \GenericLocIII\\ \GenericLocII \neq \GenericLocIII \\ \GenericLocI \neq \GenericLocIV\\ \GenericLocII \neq \GenericLocIV}{
    \ChorEquiv{\Sync*{\GenericLocI}{\GenChoice}{\GenericLocII}{\Sync*{\GenericLocIII}{\GenChoiceII}{\GenericLocIV}{C}}}
    {\Sync*{\GenericLocIII}{\GenChoiceII}{\GenericLocIV}{\Sync*{\GenericLocI}{\GenChoice}{\GenericLocII}{C}}}
  }\\
  \infer[SwapSyncIf]{\GenericLocI \neq \GenericLocIII\\ \GenericLocII \neq \GenericLocIII}{
    \ChorEquiv{\Sync*{\GenericLocI}{\GenChoice}{\GenericLocII}{\ChorIf*{\GenericLocIII}{e}{C_1}{C_2}}}
    {\ChorIf*{\GenericLocIII}{e}{\Sync*{\GenericLocI}{\GenChoice}{\GenericLocII}{C_1}}{\Sync*{\GenericLocI}{\GenChoice}{\GenericLocII}{C_2}}}
  }\and
  \infer[SwapIfSync]{\GenericLocI \neq \GenericLocII\\ \GenericLocI \neq \GenericLocIII}{
    \ChorEquiv{\ChorIf*{\GenericLocI}{e}{\Sync*{\GenericLocII}{\GenChoice}{\GenericLocIII}{C_1}}{\Sync*{\GenericLocII}{\GenChoice}{\GenericLocIII}{C_2}}}
    {\Sync*{\GenericLocII}{\GenChoice}{\GenericLocIII}{\ChorIf*{\GenericLocI}{e}{C_1}{C_2}}}
  }\\
  \infer[SwapIfIf]{\GenericLocI \neq \GenericLocII}{
    \ChorEquiv{\ChorIf*{\GenericLocI}{e_1}{\ChorIf*{\GenericLocII}{e_2}{C_1}{C_2}}{\ChorIf*{\GenericLocII}{e_2}{C_3}{C_4}}}
    {\ChorIf*{\GenericLocII}{e_2}{\ChorIf*{\GenericLocI}{e_1}{C_1}{C_3}}{\ChorIf*{\GenericLocI}{e_1}{C_2}{C_4}}}
  }  
\end{mathparpagebreakable}


\section{Control Language Operational Semantics}
\label{sec:contr-lang-oper}

\begin{syntax}
  \category[Label]{l} \alternative{\TauLabel}
  \alternative{\SendLabel{v}{\GenericLoc}}
  \alternative{\RecvLabel{\GenericLoc}{v}{x}}
  \alternative{\ChooseLabel{\GenChoice}{\GenericLoc}}
  \alternative{\AllowChoiceLabel{\GenericLoc}{\GenChoice}}
  \alternative{\SyncTauLabel}
  \alternative{\FunLabel(l)}
  \alternative{\ArgLabel(l)}
\end{syntax}

\begin{mathparpagebreakable}
  \infer[RetE]{
    \ExprStepsTo{e_1}{e_2}
  }{
    \ContStepsTo{\TauLabel}{\Ret(e_1)}{\Ret(e_2)}
  } \and
  \infer[IfE]{
    \ExprStepsTo{e_1}{e_2}
  }{
    \ContStepsTo{\TauLabel}{\ContIf{e_1}{E_1}{E_2}}{\ContIf{e_2}{E_1}{E_2}}
  } \and
  \infer[IfTrue]{ }{
    \ContStepsTo{\TauLabel}{\ContIf{\True}{E_1}{E_2}}{E_1}
  } \and
  \infer[IfFalse]{ }{
    \ContStepsTo{\TauLabel}{\ContIf{\False}{E_1}{E_2}}{E_2}
  } \and
  \infer[SendE]{
    \ExprStepsTo{e_1}{e_2}
  }{
    \ContStepsTo{\TauLabel}{\ContSend{e_1}{\GenericLoc}{E}}{\ContSend{e_2}{\GenericLoc}{E}}
  }\and
  \infer[SendV]{
    \ExprValue{v}
  }{
    \ContStepsTo{\SendLabel{v}{\GenericLoc}}{\ContSend{v}{\GenericLoc}{E}}{E}
  } \and
  \infer[RecvV]{
    \ExprValue{v}
  }{
    \ContStepsTo{\RecvLabel{\GenericLoc}{v}{x}}{\ContReceive{x}{\GenericLoc}{E}}{\Subst{E}{\SingleSubst{x}{v}}}
  } \and
  \infer[Choose]{ }{
    \ContStepsTo{\ChooseLabel{\GenChoice}{\GenericLoc}}{\ContChoose{\GenChoice}{\GenericLoc}{E}}{E}
  } \and
  \infer[AllowChoiceL]{ }{
    \ContStepsTo{\AllowChoiceLabel{\GenericLoc}{\LChoice}}
    {\AllowChoiceLR{\GenericLoc}{E_1}{E_2}}{E_1}
  } \and
  \infer[AllowChoiceR]{ }{
    \ContStepsTo{\AllowChoiceLabel{\GenericLoc}{\RChoice}}
    {\AllowChoiceLR{\GenericLoc}{E_1}{E_2}}{E_2}
  } \and
  \infer[LetRetArg]{
    \ContStepsTo{l}{E_1}{E_1'}
  }{
    \ContStepsTo{\ArgLabel(l)}{\LetRet{x}{E_1}{E_2}}{\LetRet{x}{E_1'}{E_2}}
  } \and
  \infer[LetRet]{
    \ExprValue{v}
  }{
    \ContStepsTo{\SyncTauLabel}{\LetRet{x}{\Ret(v)}{E_2}}{\Subst{E_2}{\SingleSubst{x}{v}}}
  } \and
  \infer[AppLocalFun]{
    \ContStepsTo{l}{F_1}{F_2}
  }{
    \ContStepsTo{\FunLabel(l)}{\ContAppLocal{F_1}{e}}{\ContAppLocal{F_2}{e}}
  } \and
  \infer[AppLocalArg]{
    \ExprStepsTo{e_1}{e_2}
  }{
    \ContStepsTo{\TauLabel}{\ContAppLocal{F}{e_1}}{\ContAppLocal{F}{e_2}}
  } \and
  \infer[AppLocal]{
    \ExprValue{v}
  }{
    \ContStepsTo{\SyncTauLabel}{\ContAppLocal{\left(\ContFunLocal{F}{x}{E}\right)}{v}}{\Subst{E}{\SingleSubst{x}{v}}}
  }
  \infer[AppGlobalFun]{
    \ContStepsTo{l}{F_1}{F_2}
  }{
    \ContStepsTo{\FunLabel(l)}{\ContAppGlobal{F_1}{A}}{\ContAppGlobal{F_2}{A}}
  } \and
  \infer[AppGlobalArg]{
    \ContStepsTo{l}{A_1}{A_2}
  }{
    \ContStepsTo{\ArgLabel(l)}{\ContAppGlobal{F}{A_1}}{\ContAppGlobal{F}{A_2}}
  } \and
  \infer[AppGlobal]{
    \ContValue(V)
  }{
    \ContStepsTo{\SyncTauLabel}{\ContAppGlobal{\left(\ContFunGlobal{F}{X}{E}\right)}{V}}{\Subst{E}{\SingleSubst{X}{V}}}
  }
\end{mathparpagebreakable}


\section{Control Program Merge Definition}
\label{sec:contr-progr-merge}

If there is no pattern below such that $\ContMerge{E_1}{E_2} = E_3$, then $\Undefed{(\ContMerge{E_1}{E_2})}$.

\begin{mathparpagebreakable}
  \ContMerge{X}{X} \triangleq X \and
  \ContMerge{\Unit}{\Unit} \triangleq \Unit \and
  \ContMerge{\Ret(e)}{\Ret(e)} \triangleq \Ret(e) \and
  \ContMerge{(\ContIf{e}{E_{1,1}}{E_{1,2}})}{(\ContIf{e}{E_{2,1}}{E_{2,2}})} \triangleq \ContIf{e}{\ContMerge{E_{1,1}}{E_{2,1}}}{\ContMerge{E_{1,2}}{E_{2,2}}} \and
  \ContMerge{(\ContSend{e}{\GenericLoc}{E_1})}{(\ContSend{e}{\GenericLoc}{E_2})} \triangleq \ContSend{e}{\GenericLoc}{\ContMerge{E_1}{E_2}} \and
  \ContMerge{(\ContReceive{x}{\GenericLoc}{E_1})}{(\ContReceive{x}{\GenericLoc}{E_2})} \triangleq \ContReceive{x}{\GenericLoc}{\ContMerge{E_1}{E_2}} \and
  \ContMerge{(\ContChoose{\GenChoice}{\GenericLoc}{E_1})}{(\ContChoose{\GenChoice}{\GenericLoc}{E_2})} \triangleq \ContChoose{\GenChoice}{\GenericLoc}{\ContMerge{E_1}{E_2}} \and
  \ContMerge{\left(\AllowChoiceL*{\GenericLoc}{E_1}\right)}{\left(\AllowChoiceL*{\GenericLoc}{E_2}\right)} \triangleq \AllowChoiceL*{\GenericLoc}{\ContMerge{E_1}{E_2}} \and
  \ContMerge{\left(\AllowChoiceL*{\GenericLoc}{E_1}\right)}{\left(\AllowChoiceR*{\GenericLoc}{E_2}\right)} \triangleq \AllowChoiceLR*{\GenericLoc}{E_1}{E_2} \and
  \ContMerge{\left(\AllowChoiceL*{\GenericLoc}{E_1}\right)}{\left(\AllowChoiceLR*{\GenericLoc}{E_{2,1}}{E_{2,2}}\right)} \triangleq \AllowChoiceLR*{\GenericLoc}{\ContMerge{E_1}{E_{2,1}}}{E_{2,2}} \and
  \ContMerge{\left(\AllowChoiceR*{\GenericLoc}{E_1}\right)}{\left(\AllowChoiceL*{\GenericLoc}{E_2}\right)} \triangleq \AllowChoiceLR*{\GenericLoc}{E_2}{E_1} \and
  \ContMerge{\left(\AllowChoiceR*{\GenericLoc}{E_1}\right)}{\left(\AllowChoiceR*{\GenericLoc}{E_2}\right)} \triangleq \AllowChoiceR*{\GenericLoc}{\ContMerge{E_1}{E_2}} \and
  \ContMerge{\left(\AllowChoiceR*{\GenericLoc}{E_1}\right)}{\left(\AllowChoiceLR*{\GenericLoc}{E_{2,1}}{E_{2,2}}\right)} \triangleq \AllowChoiceLR*{\GenericLoc}{E_{2,1}}{\ContMerge{E_1}{E_{2,2}}} \and
  \ContMerge{\left(\AllowChoiceLR*{\GenericLoc}{E_{1,1}}{E_{1,2}}\right)}{\left(\AllowChoiceL*{\GenericLoc}{E_2}\right)} \triangleq
    \AllowChoiceLR*{\GenericLoc}{\ContMerge{E_{1,1}}{E_2}}{E_{1,2}} \and
  \ContMerge{\left(\AllowChoiceLR*{\GenericLoc}{E_{1,1}}{E_{1,2}}\right)}{\left(\AllowChoiceR*{\GenericLoc}{E_2}\right)} \triangleq
    \AllowChoiceLR*{\GenericLoc}{E_{1,1}}{\ContMerge{E_{1,2}}{E_2}} \and
  \ContMerge{\left(\AllowChoiceLR*{\GenericLoc}{E_{1,1}}{E_{1,2}}\right)}{\left(\AllowChoiceLR*{\GenericLoc}{E_{2,1}}{E_{2,2}}\right)} \triangleq
    \AllowChoiceLR*{\GenericLoc}{\ContMerge{E_{1,1}}{E_{2,1}}}{\ContMerge{E_{1,2}}{E_{2,2}}} \and
  \ContMerge{(\LetRet{x}{E_{1,1}}{E_{1,2}})}{(\LetRet{x}{E_{2,1}}{E_{2,2}})} \triangleq \LetRet{x}{\ContMerge{E_{1,1}}{E_{2,1}}}{\ContMerge{E_{1,2}}{E_{2,2}}} \and
  \ContMerge{(\FunLocal{\GenericLoc}{F}{x}{E})}{(\FunLocal{\GenericLoc}{F}{x}{E})} \triangleq \FunLocal{\GenericLoc}{F}{x}{E} \and
  \ContMerge{(\FunGlobal{F}{X}{E})}{(\FunGlobal{F}{X}{E})} \triangleq \FunGlobal{F}{X}{E} \\
  \ContMerge{(\AppLocal{\GenericLoc}{E_1}{e})}{(\AppLocal{\GenericLoc}{E_2}{e})} \triangleq \AppLocal{\GenericLoc}{\ContMerge{E_1}{E_2}}{e} \and
  \ContMerge{(\AppGlobal{E_{1,1}}{E_{1,2}})}{(\AppGlobal{E_{2,1}}{E_{2,2}})} \triangleq \AppGlobal{(\ContMerge{E_{1,1}}{E_{2,1}})}{(\ContMerge{E_{1,2}}{E_{2,2}})}
\end{mathparpagebreakable}


\section{The Less-Nondeterminism Relation}
\label{sec:less-nond-relat}

\begin{mathparpagebreakable}
  \infer{ }{\LND{X}{X}} \and
  \infer{ }{\LND{\Unit}{\Unit}} \and
  \infer{ }{\LND{\Ret(e)}{\Ret(e)}} \and
  \infer{\LND{E_{1,1}}{E_{2,1}}\\ \LND{E_{1,2}}{E_{2,2}}}
  {\LND{\ContIf{e}{E_{1,1}}{E_{1,2}}}{\ContIf{e}{E_{2,1}}{E_{2,2}}}} \and
  \infer{\LND{E_1}{E_2}}
  {\LND{\ContSend{e}{\GenericLoc}{E_1}}{\ContSend{e}{\GenericLoc}{E_2}}} \and
  \infer{\LND{E_1}{E_2}}
  {\LND{\ContReceive{x}{\GenericLoc}{E_1}}{\ContReceive{x}{\GenericLoc}{E_2}}} \and
  \infer{\LND{E_1}{E_2}}
  {\LND{\ContChoose{\GenChoice}{\GenericLoc}{E_1}}{\ContChoose{\GenChoice}{\GenericLoc}{E_2}}} \and
  \infer{\LND{E_1}{E_2}}
  {\LND{\AllowChoiceL{\GenericLoc}{E_1}}{\AllowChoiceL{\GenericLoc}{E_{2,1}}}} \and
  \infer{\LND{E_1}{E_{2,1}}}
  {\LND{\AllowChoiceL{\GenericLoc}{E_1}}{\AllowChoiceLR{\GenericLoc}{E_{2,1}}{E_{2,2}}}} \and
  \infer{\LND{E_1}{E_2}}
  {\LND{\AllowChoiceR{\GenericLoc}{E_1}}{\AllowChoiceR{\GenericLoc}{E_{2,1}}}} \and
  \infer{\LND{E_1}{E_{2,2}}}
  {\LND{\AllowChoiceR{\GenericLoc}{E_1}}{\AllowChoiceLR{\GenericLoc}{E_{2,1}}{E_{2,2}}}} \and
  \infer{\LND{E_{1,1}}{E_{2,1}}\\\LND{E_{1,2}}{E_{2,2}}}
  {\LND{\AllowChoiceLR{\GenericLoc}{E_{1,1}}{E_{1,2}}}
    {\AllowChoiceLR{\GenericLoc}{E_{2,1}}{E_{2,2}}}} \and
  \infer{\LND{E_{1,1}}{E_{2,1}}\\\LND{E_{1,2}}{E_{2,2}}}
  {\LND{\LetRet{x}{E_{1,1}}{E_{1,2}}}{\LetRet{x}{E_{2,1}}{E_{2,2}}}}\and
  \infer{ }{\LND{\ContFunLocal{F}{x}{E}}{\ContFunLocal{F}{x}{E}}} \and
  \infer{\LND{E_1}{E_2}}{\LND{\ContAppLocal{E_1}{e}}{\ContAppLocal{E_2}{e}}} \and
  \infer{ }{\LND{\ContFunGlobal{F}{X}{E}}{\ContFunGlobal{F}{X}{E}}} \and
  \infer{\LND{E_{1,1}}{E_{2,1}}\\ \LND{E_{1,2}}{E_{2,2}}}
  {\LND{\ContAppGlobal{E_{1,1}}{E_{1,2}}}{\ContAppGlobal{E_{2,1}}{E_{2,2}}}}
\end{mathparpagebreakable}


\section{Additional Theorems}
\label{sec:additional-theorems}

\subsection{Additional Properties of the Pirouette Type System}
\label{sec:addit-prop-piro}

While we do not explicitly supply the structural rules, our type system admits them:
\begin{thm}[Pirouette Types Structural Rules]
  The choreographic type system admits weakening, exchange, and strengthening in both contexts.
  In other words, the following rules are admissible:
  \begin{mathpar}
    \infer[Local Exchange]{\ChorTyping{\Gamma, \ChorTypingTriple{\GenericLocI}{x}{\ExprTypeTauI}, \ChorTypingTriple{\GenericLocII}{y}{\ExprTypeTauII}, \Gamma'}{\Delta}{C}{\ChorTypeTauIII}}
    {\ChorTyping{\Gamma, \ChorTypingTriple{\GenericLocII}{y}{\ExprTypeTauII}, \ChorTypingTriple{\GenericLocI}{x}{\ExprTypeTauI}, \Gamma'}{\Delta}{C}{\ChorTypeTauIII}} \and
    \infer[Global Exchange]{\ChorTyping{\Gamma}{\Delta, \ChorTypingPair{X}{\ChorTypeTauI}, \ChorTypingPair{Y}{\ChorTypeTauII}, \Delta'}{C}{\ChorTypeTauIII}}
    {\ChorTyping{\Gamma}{\Delta, \ChorTypingPair{Y}{\ChorTypeTauII}, \ChorTypingPair{X}{\ChorTypeTauI}, \Delta'}{C}{\ChorTypeTauIII}} \\
    \infer[Local Weakening]{\ChorTyping{\Gamma}{\Delta}{C}{\ChorTypeTauI}}{\ChorTyping{\Gamma, \ChorTypingTriple{\GenericLoc}{x}{\ExprTypeTauII}}{\Delta}{C}{\ChorTypeTauI}} \and
    \infer[Global Weakening]{\ChorTyping{\Gamma}{\Delta}{C}{\ChorTypeTauI}}{\ChorTyping{\Gamma}{\Delta, \ChorTypingPair{X}{\ChorTypeTauII}}{C}{\ChorTypeTauI}} \\
    \infer[Local Strengthening]{\ChorTyping{\Gamma, \ChorTypingTriple{\GenericLoc}{x}{\ExprTypeTauI}}{\Delta}{C}{\ChorTypeTauII}\\ x \notin \FEV[\GenericLoc](C)}{\ChorTyping{\Gamma}{\Delta}{C}{\ChorTypeTauII}} \and
    \infer[Global Strengthening]{\ChorTyping{\Gamma}{\Delta, \ChorTypingPair{X}{\ChorTypeTauI}}{C}{\ChorTypeTauII}\\ X \notin \FCV(C)}{\ChorTyping{\Gamma}{\Delta}{C}{\ChorTypeTauII}}
  \end{mathpar}
\end{thm}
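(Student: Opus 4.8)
The plan is to prove all six admissibility statements by structural induction on the choreographic typing derivation $\ChorTyping{\Gamma}{\Delta}{C}{\ChorTypeTau}$, carrying out one (structurally identical) induction per rule. The guiding observation is that the choreographic typing rules touch the local context $\Gamma$ \emph{only} through the projection operation $\ProjectCtxt{\Gamma}{\GenericLoc}$, and touch the global context $\Delta$ only through membership lookups and extensions. Hence each choreographic structural rule should reduce, at its leaves, to the corresponding \emph{local} structural rule, which we may assume as one of the required-admissible rules of Figure~\ref{fig:local-type-rules}. The three global rules are the easy cases, since projection ignores $\Delta$ entirely: \textsc{Global Exchange} and \textsc{Global Weakening} follow from the induction hypothesis in every case, with the only nontrivial leaf being \textsc{Var}, where membership of $X$ in $\Delta$ is preserved under reordering and extension; \textsc{Global Strengthening} additionally uses that if $X \notin \FCV(C)$ then no \textsc{Var} leaf of $C$ ever looks up $X$.

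For the local rules the real content is a family of commutation lemmas describing how $\ProjectCtxt{-}{\GenericLoc}$ interacts with the context operation in question. For \textsc{Local Exchange} I would show that swapping two entries $\ChorTypingTriple{\GenericLocI}{x}{\ExprTypeTauI}$ and $\ChorTypingTriple{\GenericLocII}{y}{\ExprTypeTauII}$ leaves $\ProjectCtxt{\Gamma}{\GenericLoc}$ unchanged unless both entries are located at $\GenericLoc$, in which case it yields exactly one adjacent swap inside $\ProjectCtxt{\Gamma}{\GenericLoc}$, which is absorbed by the local \textsc{Exchange} rule. For \textsc{Local Weakening} the lemma is that $\ProjectCtxt{(\Gamma, \ChorTypingTriple{\GenericLoc}{x}{\ExprTypeTau})}{\GenericLocI}$ equals $\ProjectCtxt{\Gamma}{\GenericLocI}$ when $\GenericLoc \neq \GenericLocI$ and equals $\ProjectCtxt{\Gamma}{\GenericLocI}, \ExprTypingPair{x}{\ExprTypeTau}$ when $\GenericLoc = \GenericLocI$; in the latter case the expression-typing premises occurring in \textsc{Done}, \textsc{Send}, \textsc{If}, and \textsc{AppLocal} are repaired by the local \textsc{Weakening} rule. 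With these lemmas in hand, every inductive case is mechanical: rebuild the same choreographic rule, apply the induction hypothesis to each subchoreography premise, and patch each expression-typing premise with the matching local structural rule.

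The delicate points all concern the binding constructs \textsc{Send}, \textsc{DefLocal}, \textsc{FunLocal}, and \textsc{FunGlobal}, whose premises extend $\Gamma$ or $\Delta$. Here I would rely on the standard freshness/capture-avoidance convention (in the Coq development this is the usual de~Bruijn index shifting) so that the newly bound variable is distinct from the entry being added, removed, or swapped; the added entry then commutes past the binder's extension and the induction hypothesis applies to a context of the expected shape. \textsc{Local Strengthening} carries the heaviest bookkeeping: its side condition $x \notin \FEV[\GenericLoc](C)$ must be pushed through each binder so that the analogous condition holds for every subchoreography, and at each expression leaf one must argue that $x$ is not free in the projected expression (under $\GenericLoc$) in order to invoke the local \textsc{Strengthening} rule. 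This requires auxiliary equations relating $\FEV[\GenericLoc]$ of a choreography to the free expression variables of its constituent local expressions and of its subchoreographies (minus the bound variables).

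The main obstacle I anticipate is \textsc{Local Strengthening}, and specifically the free-variable bookkeeping across the multi-namespace structure: because each location carries its own namespace, $\FEV[\GenericLoc]$ must be tracked per location and correctly related to projection, and the side condition must survive passage through every binder --- including the binder in \textsc{Send}, which introduces a variable at the \emph{receiver} $\GenericLocII$ rather than at the sender. Stating these free-variable lemmas at exactly the right granularity, so that each inductive case's side condition follows from the whole-term side condition, is the part that demands care; once they are in place, the appeals to the local structural rules at the leaves are routine.
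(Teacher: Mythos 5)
Your proposal is correct and matches the approach the paper takes: the paper states this theorem without an in-text proof, deferring to its Coq mechanization, which proceeds exactly as you describe --- induction on the typing derivation, lemmas relating $\ProjectCtxt{-}{\GenericLoc}$ to the context operation at hand, and appeals to the required-admissible local rules of Figure~\ref{fig:local-type-rules} at the expression-typing leaves. The only cosmetic difference is that the mechanization works namelessly (contexts as functions, with exchange and weakening subsumed by the renaming-based \textsc{Weakening} and \textsc{Extensionality} rules of Section~\ref{sec:notes-coq-code}), which dissolves the shadowing and freshness bookkeeping you correctly flag as the delicate part of the named presentation.
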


\subsection{Additional Properties of Choreography Equivalence}
\label{sec:addit-prop-chor}

Our syntactic operations respect choreography equivalence:
\begin{thm}[Equivalence of Syntactic Operations]
  \label{thm:equiv-syntax}
  If $\ChorEquiv{C_1}{C_1'}$ and $\ChorEquiv{C_2}{C_2'}$, then all of the following statements are true:
  \begin{itemize}
  \item $\FEV(C_1) = \FEV(C_1')$
  \item $FCV(C_1) = \FCV(C_1')$
  \item $\LocationNames(C_1) = \LocationNames(C_1')$
  \item if $C_1$ is a value, then so is $C_1'$
  \item $\ChorEquiv{\Subst{C_1}{\SingleChorExprSubst{\GenericLoc}{x}{e}}}{\Subst{C_1'}{\SingleChorExprSubst{\GenericLoc}{x}{e}}}$
  \item $\ChorEquiv{\Subst{C_1}{\SingleSubst{X}{C_2}}}{\Subst{C_1'}{\SingleSubst{X}{C_2'}}}$
  \end{itemize}
\end{thm}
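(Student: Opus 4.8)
The plan is to prove all six statements by induction on the derivation of $\ChorEquiv{C_1}{C_1'}$, exploiting the inductive characterization of equivalence as the least relation closed under reflexivity, symmetry, transitivity, the congruence rules, and the swap rules of Figure~\ref{fig:chor-equiv}. The reflexivity and symmetry cases are immediate because every property in the list is itself reflexive and symmetric — the value clause being read as the biconditional ``$C_1$ is a value iff $C_1'$ is a value'', which is the form I would actually carry through the induction so that the symmetry case goes through. The transitivity case chains two instances of the inductive hypothesis: the set equalities compose by transitivity of $=$, and the substitution clauses compose by transitivity of $\Equiv$. The congruence cases reduce each property to its behaviour on immediate subterms, since $\FEV$, $\FCV$, and $\LocationNames$ are all defined compositionally; each congruence rule lets me apply the inductive hypothesis to the subchoreographies and reassemble.

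First I would dispatch the three ``syntactic measure'' clauses ($\FEV$, $\FCV$, $\LocationNames$) together, since they share the same skeleton. The only nontrivial cases are the swap rules, where both sides are built from the same subterms and subexpressions but in a permuted binding order; there the claim becomes a finite computation in the algebra of location-tagged variable sets. The side conditions attached to each swap rule are exactly what make this computation go through: in \textsc{SwapSendIf}, for instance, the hypothesis $\GenericLocII \neq \GenericLocIII$ guarantees that deleting the bound pair tagged by $\GenericLocII$ commutes with the union introduced by the conditional, so that both sides yield the same free-variable set. I would check each swap rule in this way; they are all routine once the relevant disequalities are in hand.

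With the free-variable clauses established, the value clause follows easily, and here the ordering of the clauses matters. In every swap rule both sides are headed by a send, synchronization, or conditional construct, none of which are values, so the biconditional holds vacuously. In the congruence cases only $\Own{\GenericLoc}{e}$ (related to itself with the same $e$), $\FunLocal{\GenericLoc}{F}{x}{C}$, and $\FunGlobal{F}{X}{C}$ can be values. Since value-hood of a function is precisely a condition on its free choreography and expression variables, I transport it across the equivalence by invoking the already-established equalities $\FEV(C) = \FEV(C')$ and $\FCV(C) = \FCV(C')$ on the bodies.

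The two substitution clauses are the crux. For local substitution I again induct on $\ChorEquiv{C_1}{C_1'}$: congruence cases close under the corresponding congruence rule of $\Equiv$ after applying the inductive hypothesis, and for each swap rule I must verify that pushing $\SingleChorExprSubst{\GenericLoc}{x}{e}$ through both sides lands me at two instances of the same swap rule. This requires a case analysis matching the substitution's location and variable against the binders appearing in the swap; the disequalities in the side conditions ensure the substitution commutes past the rearranged binders without capture, so the swap rule still applies. The global-substitution clause is genuinely two-sided, since it replaces $C_2$ on the left by the merely-equivalent $C_2'$ on the right, so I would factor it through two one-sided lemmas and transitivity: (i) by induction on $\ChorEquiv{C_1}{C_1'}$, that $\ChorEquiv{\Subst{C_1}{\SingleSubst{X}{C_2}}}{\Subst{C_1'}{\SingleSubst{X}{C_2}}}$ for a fixed $C_2$ (the same argument as the local case), and (ii) by structural induction on a single choreography $C$, that $\ChorEquiv{C_2}{C_2'}$ implies $\ChorEquiv{\Subst{C}{\SingleSubst{X}{C_2}}}{\Subst{C}{\SingleSubst{X}{C_2'}}}$, built purely from the congruence rules. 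Composing these by transitivity yields the stated clause. I expect the main obstacle to be the swap-rule cases of the substitution clauses: carefully tracking how local and global substitution interact with the nested binders of each swap, and confirming that the side conditions are preserved, is where the bulk of the bookkeeping lives, even though each individual case is conceptually straightforward.
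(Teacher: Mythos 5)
Your proposal is correct and follows essentially the same route as the paper's (Coq-mechanized) proof: induction on the derivation of $\ChorEquiv{C_1}{C_1'}$, with the syntactic-measure clauses discharged first so that value-hood of function and $\Own{\GenericLoc}{e}$ forms can be transported via the free-variable equalities, and with the two-sided global-substitution clause factored through the two one-sided lemmas composed by transitivity. Your observations that the swap-rule side conditions are exactly what make the set computations and the substitution/binder case analyses go through, and that the value clause should be carried as a biconditional, match the structure of the mechanized argument.
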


\subsection{Additional Properties of Merging}
\label{sec:addit-prop-merg}
\begin{thm}
  The following are true for all $E$, $E_1$, $E_2$, and $E_3$:
  \begin{itemize}
  \item $\ContMerge{E}{E} = E$
  \item $\ContMerge{E_1}{E_2} = \ContMerge{E_2}{E_1}$
  \item $\ContMerge{(\ContMerge{E_1}{E_2})}{E_3} = \ContMerge{E_1}{(\ContMerge{E_2}{E_3})}$
  \end{itemize}
\end{thm}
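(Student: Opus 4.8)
The plan is to prove all three equations as statements of \emph{Kleene equality}: since $\ContMergeOperator$ is partial, each claimed equation should be read as ``the left-hand side is defined if and only if the right-hand side is, and when both are defined they are equal.'' With this reading each property follows by structural induction, using the fact that on matching constructors $\ContMergeOperator$ simply recurses on corresponding subterms, so the induction is well-founded on term size; note that the function clauses do not recurse at all, being defined only when the two functions are syntactically identical, which keeps those cases trivial.

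For idempotence, $\ContMerge{E}{E} = E$, I would induct on $E$. For every constructor the matching-constructor clause of the definition applies, reducing the goal to $\ContMerge{E'}{E'} = E'$ on each immediate subterm $E'$, which is exactly the induction hypothesis; the one-branch choice forms $\AllowChoiceL{\GenericLoc}{E'}$ and $\AllowChoiceR{\GenericLoc}{E'}$ merge with themselves via their ``$L$-with-$L$'' and ``$R$-with-$R$'' clauses, again recursing, and the function forms merge to themselves directly. In particular this merge is always defined, so no definedness side-condition arises here.

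For commutativity, $\ContMerge{E_1}{E_2} = \ContMerge{E_2}{E_1}$, I would induct and case on the top constructors of $E_1$ and $E_2$. All non-choice clauses are manifestly symmetric, and for them the result follows by applying the induction hypothesis to the recursive calls on corresponding subterms. The only clauses that are not literally symmetric are those mixing the choice constructs, and here the definition has been written so that the two orders agree: for example $\ContMerge{\AllowChoiceL{\GenericLoc}{E_1}}{\AllowChoiceR{\GenericLoc}{E_2}} = \AllowChoiceLR{\GenericLoc}{E_1}{E_2}$ matches $\ContMerge{\AllowChoiceR{\GenericLoc}{E_2}}{\AllowChoiceL{\GenericLoc}{E_1}} = \AllowChoiceLR{\GenericLoc}{E_1}{E_2}$, and likewise for the $\AllowChoiceL$/$\AllowChoiceLR$ and $\AllowChoiceR$/$\AllowChoiceLR$ pairs, where the branches of the resulting $\AllowChoiceLR$ are placed in exactly the order that makes both directions coincide. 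Since the definition is symmetric on a per-clause basis, commutativity reduces to a finite inspection of the choice clauses together with the induction hypothesis.

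Associativity, $\ContMerge{(\ContMerge{E_1}{E_2})}{E_3} = \ContMerge{E_1}{(\ContMerge{E_2}{E_3})}$, is the main obstacle, and I would prove it by induction on the combined size of $E_1$, $E_2$, $E_3$ with case analysis on their top constructors. For triples whose heads are all the same non-choice constructor, both associations recurse on corresponding subterms and the induction hypothesis closes the goal; if the heads are incompatible, both sides are undefined and Kleene equality holds vacuously. The difficulty is concentrated entirely in the choice constructs, where the three head-forms $\AllowChoiceL{\GenericLoc}{\cdot}$, $\AllowChoiceR{\GenericLoc}{\cdot}$, and $\AllowChoiceLR{\GenericLoc}{\cdot}{\cdot}$ may each appear in any of the three positions---on the order of $3^3$ combinations---and where merging two one-branch choices can change the shape of the intermediate result (for instance $\ContMerge{\AllowChoiceL{\GenericLoc}{A}}{\AllowChoiceR{\GenericLoc}{B}} = \AllowChoiceLR{\GenericLoc}{A}{B}$), so that the two associations feed differently-shaped arguments into the outer merge. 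The key observation making the proof go through is that the definition is engineered so these reshapings reconcile: e.g.\ with $E_1 = \AllowChoiceL{\GenericLoc}{A}$, $E_2 = \AllowChoiceR{\GenericLoc}{B}$, $E_3 = \AllowChoiceL{\GenericLoc}{C}$ both associations compute to $\AllowChoiceLR{\GenericLoc}{\ContMerge{A}{C}}{B}$. I would verify each such combination, checking both that definedness agrees and that the two results coincide, invoking the induction hypothesis only on sub-triples drawn from the original immediate subterms (so well-foundedness is preserved) and using the already-established commutativity and idempotence to collapse symmetric cases. The expected pain is bookkeeping rather than depth: the choice-construct cases are numerous and the shape changes must be tracked carefully, but each individual case is a short unfolding of the definition.
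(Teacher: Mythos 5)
Your proposal is correct and matches the approach the paper takes: the paper states this theorem without prose proof because it is verified in Coq, where the argument is exactly what you describe---reading each equation as Kleene equality of the partial merge (definedness on both sides coincides, with equal results when defined) and proceeding by structural induction with case analysis on head constructors, the only nontrivial work being the finitely many $\AllowChoiceL{\GenericLoc}{\cdot}$/$\AllowChoiceR{\GenericLoc}{\cdot}$/$\AllowChoiceLR{\GenericLoc}{\cdot}{\cdot}$ combinations, whose clauses are engineered to reconcile (your worked example is computed correctly). The only caution is your suggestion to use commutativity to collapse symmetric associativity cases: a permuted triple such as $(\AllowChoiceR{\GenericLoc}{B}, \AllowChoiceL{\GenericLoc}{A}, E_3)$ does not reduce to the $(\mathsf{L},\mathsf{R},\cdot)$ case by commutativity alone but to a differently permuted case, so it is safer (and what the mechanization effectively does) to simply check all combinations directly, as you also offer to do.
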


\begin{thm}
  The following are true for all control-language expressions~$E$, $E_1$, and $E_2$, along with all control-language \emph{values} $V$.
  \begin{itemize}
  \item $\FV(\ContMerge{E_1}{E_2}) = \FV(E_1) \cup \FV(E_2)$
  \item $\FEV(\ContMerge{E_1}{E_2}) = \FEV(E_1) \cup \FEV(E_2)$
  \item $\ContMerge{(\Subst{E_1}{\SingleSubst{x}{e}})}{(\Subst{E_2}{\SingleSubst{x}{e}})} = \Subst{(\ContMerge{E_1}{E_2})}{\SingleSubst{x}{e}}$
  \item $\ContMerge{(\Subst{E_1}{\SingleSubst{X}{E}})}{(\Subst{E_2}{\SingleSubst{X}{E}})} = \Subst{(\ContMerge{E_1}{E_2})}{\SingleSubst{X}{E}}$
  \item Either $\Undefed{(\ContMerge{V}{E})}$ or $E = V = \ContMerge{V}{E} = \ContMerge{E}{V}$
  \item If $\ContMerge{E_1}{E_2} = V$, then $E_1 = E_2 = V$ 
  \end{itemize}
\end{thm}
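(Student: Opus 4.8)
The plan is to treat the six bullets as two families. Bullets~(1)--(4) are compatibility (homomorphism) lemmas, which I would prove by a single structural induction following the clauses of the definition of the merge operator in~\S\ref{sec:contr-progr-merge}; bullets~(5)--(6) are rigidity statements about values, which I would prove by direct case analysis and inversion on those same clauses. Throughout I read the partial-function equations in the sense appropriate to the paper's $\Defed{\cdot}$/$\Undefed{\cdot}$ convention: for (1)--(4) the natural and provable direction is that whenever the inner merge $\ContMerge{E_1}{E_2}$ is defined, so is the displayed outer merge, and the two sides then coincide.

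For (1)--(4) I would induct on the recursive structure of $\ContMerge{E_1}{E_2}$, whose every defined clause either (i) shares a head constructor between its two arguments and reassembles that constructor with the sub-bodies merged recursively (the if, send, receive, choose, let-ret, local-application, global-application, and matching allow-choice clauses), or (ii) builds the result directly from the two bodies with no recursive merge (a left allow-choice meeting a right allow-choice). For (1) and (2), since $\FV$ (resp.\ $\FEV$) commutes with every constructor and distributes over the bound-variable removals of the binding forms, the induction hypothesis on the merged sub-bodies immediately yields the union law; the base cases $X$, $\Unit$, $\Ret(e)$, and the two function clauses require syntactically identical arguments, so the union collapses via $S \cup S = S$. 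For (3) and (4) the additional ingredient is that substitution is constructor-wise and leaves intact exactly the data the merge matches on --- the guard or message of the if, send, and ret clauses, the location and label of the send, receive, and choose clauses, and the $\LChoice$/$\RChoice$ tag distinguishing the allow-choice clauses --- so $\ContMerge{E_1}{E_2}$ and $\ContMerge{\Subst{E_1}{\SingleSubst{x}{e}}}{\Subst{E_2}{\SingleSubst{x}{e}}}$ fire the \emph{same} clause; pushing the substitution through the reassembled constructor and applying the induction hypothesis gives $\Subst{(\ContMerge{E_1}{E_2})}{\SingleSubst{x}{e}}$, with (4) the verbatim analogue for $\SingleSubst{X}{E}$. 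The binders (receive, let-ret, functions) are dispatched by the paper's standard capture-avoidance (de~Bruijn, in Coq), so the bound name is preserved on both sides and the substitution commutes past it.

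For (5) I would case-split on the value $V$. Inspecting the definition, each value form --- $\Unit$, $\Ret(v)$, $\ContFunLocal{F}{x}{E}$, and $\ContFunGlobal{F}{X}{E}$ --- heads exactly one clause, and every such clause demands that the other argument be syntactically identical to $V$; the mismatched-but-compatible clauses are confined to the allow-choice family, which contains no values. Thus $\Defed{\ContMerge{V}{E}}$ forces $E = V$ and $\ContMerge{V}{E} = V$, and commutativity of merge (from the preceding theorem) gives $\ContMerge{E}{V} = V$ as well. Bullet (6) is the inversion counterpart: the head constructor of $\ContMerge{E_1}{E_2}$ is fixed by the firing clause, so a value-headed result can only arise from one of the value clauses, each of which forces $E_1 = E_2 = V$, while the allow-choice and remaining clauses all produce non-value heads.

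The main obstacle I anticipate is bookkeeping the partiality in (3) and (4) rather than any conceptual difficulty. Because the function clauses require \emph{exact syntactic equality} of their two arguments and single-variable substitution is not injective, a substitution can in principle identify two distinct (hence unmergeable) terms, so the converse of (3)/(4) fails and the equations must be read strictly as ``inner merge defined implies outer merge defined and equal.'' Keeping this directionality consistent across the many clauses, and correctly threading the bound-variable side-conditions through the receive, let-ret, and function binders, is where the care is needed; the remaining clauses are routine constructor-by-constructor verification.
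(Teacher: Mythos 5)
Your proposal is correct and matches the intended argument: the paper gives no prose proof for this theorem (it is stated in the appendix and verified only in the Coq development), and the natural mechanized proof is exactly your structural induction over the clauses of the merge definition for the four homomorphism bullets, plus clause inversion for the two value bullets. Your explicit handling of partiality --- reading (3)/(4) directionally because substitution can identify syntactically distinct, unmergeable terms --- is the right reading and the one point where a naive ``Kleene equality'' formulation would actually be false.
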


\begin{thm}
  If $\ContStepsTo{L}{E_1}{E_1'}$, $\ContStepsTo{L}{E_2}{E_2'}$, and $\Defed{(\ContMerge{E_1}{E_2})}$, then $\Defed{(\ContMerge{E_1'}{E_2'})}$ and $\ContStepsTo{L}{\ContMerge{E_1}{E_2}}{\ContMerge{E_1'}{E_2'}}$.
\end{thm}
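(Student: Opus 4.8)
The plan is to prove the statement by induction on the derivation of $\ContStepsTo{L}{E_1}{E_1'}$ (equivalently, by structural induction on $E_1$), inverting in each case both the hypothesis $\Defed{(\ContMerge{E_1}{E_2})}$ and the second step $\ContStepsTo{L}{E_2}{E_2'}$. Two observations drive every case. First, because $\ContMerge{E_1}{E_2}$ is defined, $E_2$ must have the same top-level shape as $E_1$ in all cases except the choice constructs, where the merge definition enumerates exactly the admissible pairings. Second, since both steps carry the \emph{same} label $L$, the label rules out most combinations of step rules: a $\SendLabel{v}{\GenericLoc}$ step can only be paired with another $\SendLabel{v}{\GenericLoc}$ step, never with an $\TauLabel$ step, and a $\FunLabel(l)$ step only with a $\FunLabel(l)$ step carrying the same inner $l$. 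Residual overlaps are eliminated by the convention that values do not reduce, so that $\ContIf{\True}{E_1}{E_2}$ can only fire \textsc{IfTrue} (never \textsc{IfE}) and $\LetRet{x}{\Ret(v)}{E_2}$ can only fire \textsc{LetRet} (never \textsc{LetRetArg}).

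First I would dispatch the non-branching cases. Variables, $\Unit$, and the two function forms never step, so these are vacuous (and the merge requires the two functions to be syntactically identical anyway). For the congruence constructs whose merge recurses into subterms --- \textsc{LetRetArg}, \textsc{AppLocalFun}, \textsc{AppGlobalFun}, and \textsc{AppGlobalArg} --- the shared label forces $E_2$ to step by the same rule with the same inner label, and I apply the induction hypothesis to the corresponding subterms (whose merge is a component of the defined $\ContMerge{E_1}{E_2}$) and then rebuild the step on $\ContMerge{E_1}{E_2}$. For the purely local $\TauLabel$-steps (\textsc{RetE}, \textsc{SendE}, \textsc{IfE}, \textsc{AppLocalArg}), definedness of $\ContMerge{E_1}{E_2}$ forces the two redices to carry the \emph{same} local expression $e$; here I rely on determinism of the local reduction relation to conclude that the two copies of $e$ reduce to the same $e'$, so $\ContMerge{E_1'}{E_2'}$ is again defined and equals the reduct of $\ContMerge{E_1}{E_2}$. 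Finally, the substituting steps --- \textsc{RecvV}, \textsc{LetRet}, \textsc{AppLocal}, \textsc{AppGlobal} --- follow from the substitution-commutation and value properties of merge stated earlier in this section, in particular $\ContMerge{(\Subst{E_1}{\SingleSubst{x}{v}})}{(\Subst{E_2}{\SingleSubst{x}{v}})} = \Subst{(\ContMerge{E_1}{E_2})}{\SingleSubst{x}{v}}$, its global analogue, and the fact that $\ContMerge{V_1}{V_2}$ is defined only when $V_1 = V_2$; these give both definedness and the equality of reducts immediately.

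The heart of the argument is the choice constructs, and here the shared-label hypothesis does the real work. When both operands are $\AllowChoiceLR{\GenericLoc}{E_{1,1}}{E_{1,2}}$ and $\AllowChoiceLR{\GenericLoc}{E_{2,1}}{E_{2,2}}$, their merge is the componentwise $\AllowChoiceLR$; a step is \textsc{AllowChoiceL} or \textsc{AllowChoiceR}, and since both operands carry the same label $\AllowChoiceLabel{\GenericLoc}{\GenChoice}$ they select the \emph{same} branch, so the reduct of the merge is exactly the merge of the two selected branches. When one operand is a one-sided $\AllowChoiceL{\GenericLoc}{E_1}$ and the other is $\AllowChoiceLR{\GenericLoc}{E_{2,1}}{E_{2,2}}$, the one-sided form can only fire \textsc{AllowChoiceL}, pinning $L$ to $\AllowChoiceLabel{\GenericLoc}{\LChoice}$ and forcing the $\AllowChoiceLR$ side to select its left branch; reading the reduct off $\ContMerge{(\AllowChoiceL{\GenericLoc}{E_1})}{(\AllowChoiceLR{\GenericLoc}{E_{2,1}}{E_{2,2}})} = \AllowChoiceLR{\GenericLoc}{\ContMerge{E_1}{E_{2,1}}}{E_{2,2}}$ shows it is again the re-merge of the selected branches (the symmetric and $\AllowChoiceR$ variants are identical). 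The crucial pairing $\ContMerge{(\AllowChoiceL{\GenericLoc}{E_1})}{(\AllowChoiceR{\GenericLoc}{E_2})} = \AllowChoiceLR{\GenericLoc}{E_1}{E_2}$ is \emph{vacuous}: a left-only form steps only with label $\AllowChoiceLabel{\GenericLoc}{\LChoice}$ and a right-only form only with $\AllowChoiceLabel{\GenericLoc}{\RChoice}$, so they cannot both step with a common $L$, and there is nothing to prove.

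I expect the main obstacle to be bookkeeping rather than deep insight: making the case analysis genuinely exhaustive and confirming, for each pairing allowed by the merge definition, that the common label $L$ together with the ``values do not step'' convention pins down a unique pair of step rules. The choice cases are where this reasoning is load-bearing, since there the merge is not the identity and a careless analysis could wrongly relate programs that commit to different branches. The one place the statement leans on an external assumption is the local $\TauLabel$ group, which needs determinism of the local reduction (the control label $\TauLabel$ does not record the target expression); this is automatic when the theorem is applied through endpoint projection, since both steps then descend from a single choreography redex that already fixes the target.
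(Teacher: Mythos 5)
Your overall strategy---induction on the derivation of the first step, with the case analysis driven jointly by definedness of $\ContMerge{E_1}{E_2}$ (which pins down the shape of $E_2$) and by equality of the two labels (which pins down which rule the second step used)---is the natural one, and your handling of the individual cases is correct. The paper gives no prose proof of this lemma (it is verified only in Coq), but any proof must have essentially this structure. In particular, your treatment of the heart of the matter, the choice constructs, is right: componentwise stepping for $\AllowChoiceLR{\GenericLoc}{E_{1}}{E_{2}}$ against $\AllowChoiceLR{\GenericLoc}{E_{3}}{E_{4}}$ with the same branch forced by the shared label; the one-sided/two-sided pairings reading the reduct directly off the merge equations; and the observation that the pairing $\ContMerge{(\AllowChoiceL*{\GenericLoc}{E_1})}{(\AllowChoiceR*{\GenericLoc}{E_2})}$ is vacuous because no single label serves both sides. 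Likewise your use of the merge/substitution commutation property and of the fact that a defined merge involving a value forces syntactic equality is exactly what the substituting cases (\textsc{RecvV}, \textsc{LetRet}, \textsc{AppLocal}, \textsc{AppGlobal}) need. (One small caveat: the step rules for the one-sided constructs are not actually listed in the paper's appendix, so your claim that $\AllowChoiceL*{\GenericLoc}{E}$ steps only with label $\AllowChoiceLabel{\GenericLoc}{\LChoice}$ is an assumption about the omitted rules, albeit the only sensible one.)

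The point that deserves scrutiny is the one you flag yourself: the $\TauLabel$ cases (\textsc{RetE}, \textsc{SendE}, \textsc{IfE}, \textsc{AppLocalArg}) in your argument invoke determinism of local reduction, and determinism is \emph{not} among the paper's assumptions---Section~\ref{sec:local-programs} requires only that values do not step (if $\ExprValue{v}$ then $\NotExprStepsTo{v}{e}$). This reliance is not an artifact of your proof; the theorem as literally stated is false without it. Take any admissible local language in which $\ExprStepsTo{e}{e_1}$ and $\ExprStepsTo{e}{e_2}$ with $e_1 \neq e_2$, and set $E_1 = E_2 = \Ret(e)$: then $\Defed{(\ContMerge{E_1}{E_2})}$ and both hypotheses hold with $L = \TauLabel$, yet $\Undefed{(\ContMerge{\Ret(e_1)}{\Ret(e_2)})}$, since merge on returns demands syntactically equal bodies and $\TauLabel$ records nothing about the target. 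So either the mechanization carries a determinism assumption the paper's text omits, or its statement of this lemma differs (e.g., it is phrased for steps arising from a common choreography redex, where the redex---such as $\RDone{\GenericLoc}{e_1}{e_2}$---fixes the target expression, which is exactly the situation in the \textsc{IfI} case of Theorem~\ref{thm:loc-complete} where this lemma is consumed). Your diagnosis of this, including the observation that the needed coincidence is recoverable in the endpoint-projection application but not from the label alone, is accurate; just be aware that as written your proof establishes the theorem only under an assumption the paper's stated system model does not grant.
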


\subsection{Additional Properties of Endpoint Projection}
\label{sec:addit-prop-endp}

Our second property tells us that projection plays nicely with all of the syntactic operations that we have defined on choreographies:
\begin{thm}
  \label{thm:projection-syntax-ops}
  All of the following are true:
  \begin{itemize}
  \item $\FEV[\GenericLoc](C) = \FEV(\ProjectC{C}{\GenericLoc})$
  \item $\FCV(C) = \FV(\ProjectC{C}{\GenericLoc})$
  \item $\ProjectC{\Subst{C_1}{\SingleSubst{X}{C_2}}}{\GenericLoc} = \Subst{\ProjectC{C_1}{\GenericLoc}}{\SingleSubst{X}{\ProjectC{C_2}{\GenericLoc}}}$
  \item $\ProjectC{\Subst{C}{\SingleChorExprSubst{\GenericLocI}{x}{e}}}{\GenericLocII} =
    \left\{
      \begin{array}{ll}
        \Subst{\ProjectC{C}{\GenericLocI}}{\SingleSubst{x}{e}} & \text{if}~\GenericLocI = \GenericLocII\\
        \ProjectC{C}{\GenericLocI} & \text{otherwise}
      \end{array}
      \right.$
    \item If $V$ is a value, then so is $\ProjectC{V}{\GenericLoc}$.
    \item If $\ProjectC{C}{\GenericLoc}$ is a value for every $\GenericLoc \in \mathcal{L}$, then $C$ is a value.
  \end{itemize}
\end{thm}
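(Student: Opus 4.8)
The plan is to prove all six statements by structural induction on the choreography $C$ (respectively $C_1$), exploiting the fact that endpoint projection is itself defined by recursion on the structure of the choreography. In every case the induction hypotheses apply directly to the immediate subchoreographies, so the real work is to check that the asserted equation is preserved by the particular control-language constructor that projection introduces---and, in the two branching cases, by the merge operator~\ContMergeOperator. I would prove the free-variable statements first, then the substitution statements, and finally the two value statements, since the later parts reuse the bookkeeping established earlier. Throughout, the genuinely load-bearing cases are $\ChorIf{\GenericLoc}{e}{C_1}{C_2}$ and $\Sync{\GenericLocI}{\GenChoice}{\GenericLocII}{C}$ projected to a location \emph{other} than the decision-maker, where projection emits a merge $\ContMerge{\ProjectC{C_1}{\GenericLoc}}{\ProjectC{C_2}{\GenericLoc}}$ rather than a single constructor.

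For $\FEV[\GenericLoc](C) = \FEV(\ProjectC{C}{\GenericLoc})$ and $\FCV(C) = \FV(\ProjectC{C}{\GenericLoc})$, the constructor cases are routine: each non-branching choreographic form projects to a control form whose free-variable set is assembled from those of its subterms in exactly the manner that $\FEV[\GenericLoc]$ and $\FCV$ assemble theirs, so the induction hypotheses close them. The design principle that a local expression owned by some $\GenericLocI \neq \GenericLoc$ projects to $\Unit$ is what makes the $\Own{\GenericLocI}{e}$ case and the off-location send case balance: both sides then contribute no free expression variables at $\GenericLoc$. In the two merge cases I would invoke the merge lemmas from Section~\ref{sec:addit-prop-merg}, namely $\FV(\ContMerge{E_1}{E_2}) = \FV(E_1) \cup \FV(E_2)$ and $\FEV(\ContMerge{E_1}{E_2}) = \FEV(E_1) \cup \FEV(E_2)$, which collapse the free-variable set of a merge to the union of those of the two branches, matching the choreographic side via the induction hypotheses.

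The substitution statements follow the same skeleton, now commuting $\ProjectC{\cdot}{\GenericLoc}$ past substitution of a choreography variable and of a location-tagged local variable, respectively. The merge cases are again where the content lies: I would discharge them with the merge/substitution commutation lemmas $\ContMerge{(\Subst{E_1}{\SingleSubst{X}{E}})}{(\Subst{E_2}{\SingleSubst{X}{E}})} = \Subst{(\ContMerge{E_1}{E_2})}{\SingleSubst{X}{E}}$ and its local-variable analogue from Section~\ref{sec:addit-prop-merg}, pushing the substitution through the merge. For the location-tagged statement it is crucial that, when projecting to $\GenericLocII \neq \GenericLocI$, every local expression owned by $\GenericLocI$ has already been erased to $\Unit$; hence $\SingleChorExprSubst{\GenericLocI}{x}{e}$ has no effect on $\ProjectC{C}{\GenericLocII}$, and the case split in the definition of projection lines up precisely with the case split in the statement, leaving $\ProjectC{C}{\GenericLocII}$ untouched and collapsing to the ordinary local substitution $\Subst{\ProjectC{C}{\GenericLocI}}{\SingleSubst{x}{e}}$ when $\GenericLocI = \GenericLocII$.

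Finally I would treat the two value statements by case analysis rather than full induction. For ``$V$ a value implies $\ProjectC{V}{\GenericLoc}$ a value,'' I inspect the three choreographic value forms: $\Own{\GenericLoc}{v}$ projects to $\Ret(v)$ or to $\Unit$, $\FunLocal{\GenericLoc}{F}{x}{C}$ to a control local or global abstraction, and $\FunGlobal{F}{X}{C}$ to a control global abstraction, each of which is a control-language value. For the converse I argue by contraposition: a short case analysis shows that each non-value head constructor projects, at a suitable location, to a manifest non-value. Concretely, the send, synchronization, conditional, local-definition, local-application, and global-application forms each yield, at the relevant location, a control send, choose, conditional, let-binding, or application---none of which is a value---while the choreography-variable case uses that control variables are not values, so some projection fails to be a value. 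The main obstacle throughout is the partiality and shape of~\ContMergeOperator{} in the branching cases; once the merge lemmas of Section~\ref{sec:addit-prop-merg} are in hand, every remaining case is mechanical bookkeeping, but it is those lemmas that carry the real weight.
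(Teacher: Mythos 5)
Your overall strategy --- structural induction on the choreography, with the merge lemmas of Section~\ref{sec:addit-prop-merg} carrying the two branching cases, and the free-variable bullets proved before the substitution and value bullets --- is the natural decomposition and matches how the paper's development (which is mechanized in Coq and given no prose proof) is organized; the first five bullets are handled essentially as you describe, modulo routine care about the partiality of projection and the freshness of the binders introduced in the $\DefLocal{\GenericLoc}{x}{C_1}{C_2}$ and $\FunLocal{\GenericLoc}{F}{x}{C}$ off-location cases.

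The gap is in the last bullet. Your contrapositive case analysis assumes that every non-value choreography has a head constructor that can never head a value (send, synchronization, conditional, local definition, the applications, or a variable). That is not true: by the definition of choreography values in Section~\ref{sec:funct-chor}, $\Own{\GenericLoc}{e}$ is a non-value whenever $e$ is not a local value, and $\FunLocal{\GenericLoc}{F}{x}{C}$ and $\FunGlobal{F}{X}{C}$ are non-values whenever their bodies have free variables beyond the permitted ones ($F$ and $\Own{\GenericLoc}{x}$, respectively $F$ and $X$). The first omission is easy to patch, since $\Ret(e)$ is not a control value when $e$ is not a local value. But the function cases are exactly where your argument breaks: such a choreography projects at \emph{every} location to a control abstraction $\ContFunLocal{F}{x}{E}$ or $\ContFunGlobal{F}{X}{E}$, which is a ``manifest value'' as far as its head constructor is concerned. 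To rule it out you must show the projected body has a disallowed free variable, i.e., you must invoke the first two bullets, $\FEV[\GenericLoc](C) = \FEV(\ProjectC{C}{\GenericLoc})$ and $\FCV(C) = \FV(\ProjectC{C}{\GenericLoc})$, to transfer the failure of the closedness side condition from $C$ to its projections. This is precisely the place where the value bullets must ``reuse the bookkeeping established earlier,'' yet your case analysis never touches it; indeed, if control values carried no closedness side conditions the statement itself would be false (consider $\FunGlobal{F}{X}{Y}$ with $Y \neq X$ free), so this dependency cannot be avoided. As written, the contraposition is incomplete.
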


\subsection{Additional Properties of \LNDRel}
\label{sec:addit-prop-lndr}

\begin{thm}[\LNDRel~is a Partial Order]
  The following are all true for all $E$, $E_1$, $E_2$, and $E_3$.
  \begin{itemize}
  \item $\LND{E}{E}$
  \item if $\LND{E_1}{E_2}$ and $\LND{E_2}{E_1}$, then $E_1 = E_2$
  \item if $\LND{E_1}{E_2}$ and $\LND{E_2}{E_3}$, then $\LND{E_1}{E_3}$
  \end{itemize}
\end{thm}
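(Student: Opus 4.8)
The plan is to establish the three properties—reflexivity, antisymmetry, and transitivity—each by induction, exploiting the fact that every rule defining $\LNDRel$ other than the four external-choice rules is either a leaf axiom or a pure congruence that descends into immediate subterms. The only non-congruence, non-reflexive rules are those relating the one-branch choice constructs $\AllowChoiceL{\GenericLoc}{E}$ and $\AllowChoiceR{\GenericLoc}{E}$ to a two-branch $\AllowChoiceLR{\GenericLoc}{E_1}{E_2}$, and these are where all the genuine content lives.

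For reflexivity I would proceed by structural induction on $E$. For the leaf forms ($X$, $\Unit$, $\Ret(e)$, and the two function forms) the corresponding axiom gives $\LND{E}{E}$ directly; note that the function rules already relate a function only to itself, so no induction into function bodies is needed or available. For every compound construct we apply the matching congruence rule to the inductive hypotheses on the immediate subterms; in particular, a two-branch choice is handled by the $\AllowChoiceLR$ congruence rule, and the asymmetric rules play no role.

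For transitivity I would induct on the derivation of $\LND{E_1}{E_2}$ and invert the derivation of $\LND{E_2}{E_3}$. When the last rule for $\LND{E_1}{E_2}$ is a congruence, the head of $E_2$ is fixed, so $\LND{E_2}{E_3}$ can only arise from the matching congruence (or, when $E_2$ is a one-branch choice, from an asymmetric rule), and the result follows by applying the appropriate rule to the inductive hypotheses on the subterms. The interesting cases are the choice rules: if, say, $\LND{E_1}{E_2}$ comes from the rule sending $E_1 = \AllowChoiceL{\GenericLoc}{E_1'}$ to $E_2 = \AllowChoiceLR{\GenericLoc}{E_{2,1}}{E_{2,2}}$ with premise $\LND{E_1'}{E_{2,1}}$, then inverting $\LND{E_2}{E_3}$ forces $E_3 = \AllowChoiceLR{\GenericLoc}{E_{3,1}}{E_{3,2}}$ with $\LND{E_{2,1}}{E_{3,1}}$; the inductive hypothesis composes $\LND{E_1'}{E_{2,1}}$ with $\LND{E_{2,1}}{E_{3,1}}$ to give $\LND{E_1'}{E_{3,1}}$, and the same asymmetric rule then yields $\LND{E_1}{E_3}$. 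The remaining choice combinations (one-branch to one-branch, and $\AllowChoiceR$ variants) are handled analogously.

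For antisymmetry I would again induct on $\LND{E_1}{E_2}$ and invert $\LND{E_2}{E_1}$. The crucial observation is that the asymmetric rules strictly increase branching—they always place the two-branch $\AllowChoiceLR$ on the larger side—and are \emph{one-directional}: there is no rule with an $\AllowChoiceLR$ on the left and a one-branch choice on the right. Hence if $\LND{E_1}{E_2}$ were derived by an asymmetric rule (so $E_1 = \AllowChoiceL{\GenericLoc}{E_1'}$ and $E_2 = \AllowChoiceLR{\GenericLoc}{E_{2,1}}{E_{2,2}}$, or the $\AllowChoiceR$ analogue), then $\LND{E_2}{E_1}$ is underivable and the case is vacuous. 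In every surviving case the heads of $E_1$ and $E_2$ agree, and the inductive hypotheses on the corresponding subterms deliver componentwise equality, whence $E_1 = E_2$. The main obstacle throughout is precisely the bookkeeping around these external-choice rules: verifying their one-directionality for antisymmetry, and checking that all the ways one-branch and two-branch choices can chain together compose correctly for transitivity. Because functions relate only to themselves, there are no complications from recursion or substitution, so the whole argument reduces to a finite, rule-driven case analysis well-suited to the mechanized development.
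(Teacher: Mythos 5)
Your proof is correct and follows essentially the approach the paper takes: the paper gives no prose proof (the theorem is verified in its Coq development), and the argument there is exactly the rule-driven induction you describe --- structural induction for reflexivity, induction on one derivation with inversion of the other for transitivity and antisymmetry, with the only substantive cases being the asymmetric one-branch/two-branch choice rules and antisymmetry resting on the fact that no rule places $\AllowChoiceLR*{\GenericLoc}{E_1}{E_2}$ on the smaller side. Your handling of those choice cases matches the mechanized proof's case analysis, so there is nothing to add.
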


\begin{thm}
  The following are all true for all choices of $E_i$.
  \begin{itemize}
  \item If $\LND{E_1}{E_2}$, then $\LND{(\Subst{E_1}{\SingleSubst{x}{e}})}{(\Subst{E_2}{\SingleSubst{x}{e}})}$
  \item If $\LND{E_1}{E_2}$, then $\LND{(\Subst{E_1}{\SingleSubst{X}{E_3}})}{(\Subst{E_2}{\SingleSubst{X}{E_3}})}$
  \item $\LND{E_1}{\ContMerge{E_1}{E_2}}$ and $\LND{E_2}{\ContMerge{E_1}{E_2}}$
  \item If $\LND{E_1}{E_2}$ and $\LND{E_3}{E_4}$ and $\Defed{(\ContMerge{E_1}{E_3})}$ then $\Defed{(\ContMerge{E_2}{E_4})}$ and $\LND{\ContMerge{E_1}{E_3}}{\ContMerge{E_2}{E_4}}$
  \item If $\LND{E_1}{E_2}$ and either of $E_1$ or $E_2$ is a value, then $E_1 = E_2$.
  \end{itemize}
\end{thm}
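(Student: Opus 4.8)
The plan is to prove the five statements one at a time, each by a structural induction; four are routine and one is genuinely the crux. For the two substitution statements I would induct on the derivation of $\LND{E_1}{E_2}$. Every congruence rule of \LNDRel commutes with substitution: pushing $\Subst{\cdot}{\SingleSubst{x}{e}}$ (resp. $\Subst{\cdot}{\SingleSubst{X}{E_3}}$) through $\ContSend{\cdot}{\GenericLoc}{\cdot}$, $\ContReceive{\cdot}{\GenericLoc}{\cdot}$, $\ContIf{\cdot}{\cdot}{\cdot}$, $\LetRet{x}{\cdot}{\cdot}$, the choice constructs, and the applications leaves the outer shape intact, so the induction hypothesis plus the matching \LNDRel rule closes each case. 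The only delicate point is that functions are related only to themselves: the rules for $\ContFunLocal{F}{x}{E}$ and $\ContFunGlobal{F}{X}{E}$ have syntactically identical left and right sides, so after applying the (capture-avoiding) substitution both sides remain \emph{equal}, and I simply reapply the function-reflexivity rule. Here I lean on the substitution equations assumed for the local and control languages to guarantee that substituting into two identical terms yields identical results.

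For the third statement, $\LND{E_1}{\ContMerge{E_1}{E_2}}$ (and symmetrically for $E_2$), I would induct on the clauses defining \ContMergeOperator, under the standing assumption that $\Defed{\ContMerge{E_1}{E_2}}$. The homomorphic clauses (e.g. merging two $\ContSend{\cdot}{\GenericLoc}{\cdot}$ or two $\LetRet{x}{\cdot}{\cdot}$) reduce to the induction hypothesis followed by the corresponding \LNDRel congruence rule. The choice-combining clauses are exactly where the nontrivial \LNDRel rules fire: for instance $\ContMerge{\AllowChoiceL{\GenericLoc}{E_1}}{\AllowChoiceLR{\GenericLoc}{E_{2,1}}{E_{2,2}}} = \AllowChoiceLR{\GenericLoc}{\ContMerge{E_1}{E_{2,1}}}{E_{2,2}}$, and the goal $\LND{\AllowChoiceL{\GenericLoc}{E_1}}{\AllowChoiceLR{\GenericLoc}{\ContMerge{E_1}{E_{2,1}}}{E_{2,2}}}$ follows from the induction hypothesis $\LND{E_1}{\ContMerge{E_1}{E_{2,1}}}$ by the promotion rule for $\AllowChoiceL{\cdot}{\cdot}$. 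Wherever a ``new'' branch appears on the right, I discharge its obligation with reflexivity of \LNDRel, which is available from the prior partial-order theorem.

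The fourth statement---monotonicity of merge, that $\LND{E_1}{E_2}$, $\LND{E_3}{E_4}$ and $\Defed{\ContMerge{E_1}{E_3}}$ force $\Defed{\ContMerge{E_2}{E_4}}$ together with $\LND{\ContMerge{E_1}{E_3}}{\ContMerge{E_2}{E_4}}$---is where the real work lies, and I expect it to be the main obstacle. Here I would induct on $\LND{E_1}{E_2}$ and case-analyse $\LND{E_3}{E_4}$ together with the merge clause that makes $\ContMerge{E_1}{E_3}$ defined. For the homomorphic heads the argument is routine, but for the choice constructs the cases multiply: \LNDRel may promote an $\AllowChoiceL{\cdot}{\cdot}$ or $\AllowChoiceR{\cdot}{\cdot}$ to an $\AllowChoiceLR{\cdot}{\cdot}{\cdot}$ while merge independently fuses single-branch choices into two-branch ones, so I must check each combination of $\{\textsf{L},\textsf{R},\textsf{LR}\}$ against $\{\textsf{L},\textsf{R},\textsf{LR}\}$ on both arguments, verifying in each that the merge on the $E_2$/$E_4$ side stays defined and that the results stay related. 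The guiding invariant is that both \LNDRel and merge only ever \emph{add} branches and never reorder or delete them, so the two operations commute; making this precise across all choice combinations while threading definedness through is the tedious heart of the proof. To cut the number of joint cases, I would first isolate the helper observation that merge is monotone in each argument separately, and then assemble the two-argument statement from it.

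Finally, the fifth statement is immediate by inspecting \LNDRel: the only rules whose left and right sides can carry a value are the reflexive-shaped rules for $X$, $\Unit$, $\Ret(e)$, $\ContFunLocal{F}{x}{E}$, and $\ContFunGlobal{F}{X}{E}$, each of which has identical left and right components. Every remaining rule---the congruences and the two promotion rules taking $\AllowChoiceL{\cdot}{\cdot}$ or $\AllowChoiceR{\cdot}{\cdot}$ to $\AllowChoiceLR{\cdot}{\cdot}{\cdot}$---has a non-value head (a choice, send, receive, conditional, let-binding, or application) on both sides. Hence if either $E_1$ or $E_2$ is a value, the final rule of the derivation must be one of the reflexive ones, forcing $E_1 = E_2$.
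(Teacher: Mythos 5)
Your handling of four of the five bullets is sound: the two substitution claims go through by induction on the \LNDRel{} derivation (with reflexivity of \LNDRel{} covering the variable case where the substitution actually fires, and ``identical stays identical'' covering the function rules), the third bullet is a correct induction over the merge clauses with reflexivity discharging the freshly introduced branches, and the fifth is exactly the head-constructor inspection you describe.

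The fourth bullet is where your proposal breaks, and the failure is not the tedium you anticipate but the \emph{direction} of the definedness transfer. Consider $E_1 = \AllowChoiceL{\GenericLoc}{\Ret(\True)}$, $E_2 = \AllowChoiceLR{\GenericLoc}{\Ret(\True)}{\Ret(\True)}$, and $E_3 = E_4 = \AllowChoiceR{\GenericLoc}{\Ret(\False)}$. The promotion rule gives $\LND{E_1}{E_2}$ (its second branch is completely unconstrained), and $\LND{E_3}{E_4}$ is reflexivity. Now $\ContMerge{E_1}{E_3} = \AllowChoiceLR{\GenericLoc}{\Ret(\True)}{\Ret(\False)}$ is defined, because merging a single-L with a single-R choice imposes no compatibility between the two branches; but $\ContMerge{E_2}{E_4} = \AllowChoiceLR{\GenericLoc}{\Ret(\True)}{\ContMerge{\Ret(\True)}{\Ret(\False)}}$ is undefined, since $\Ret(\True)$ merges only with itself. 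So the implication you set out to prove is false as printed (the statement has its two definedness assertions transposed), your guiding invariant that ``both operations only add branches, hence commute'' fails, and your proposed helper lemma (monotonicity of merge in each argument separately) fails on the same example. The provable statement, and the one the endpoint-projection development actually needs (e.g., to know that the projection of an if-choreography remains defined after a step by a location not involved in it), is the reverse: if $\LND{E_1}{E_2}$, $\LND{E_3}{E_4}$, and $\Defed{(\ContMerge{E_2}{E_4})}$, then $\Defed{(\ContMerge{E_1}{E_3})}$ and $\LND{\ContMerge{E_1}{E_3}}{\ContMerge{E_2}{E_4}}$. In that direction your decomposition does work: single-argument monotonicity holds, and together with commutativity of merge, the third bullet, and transitivity of \LNDRel{} it assembles into the two-argument claim. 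Working the single-branch versus double-branch choice cases concretely would have exposed all of this: when $E_1$ is an L-only choice promoted to an LR choice $E_2$, and $E_3$ is an R-only choice, the new branch of $E_2$ carries no induction hypothesis, which is precisely the sign that definedness cannot flow from $\ContMerge{E_1}{E_3}$ to $\ContMerge{E_2}{E_4}$.
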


\begin{thm}[Steps Lift Across~\LNDRel]
  \label{thm:lift-local}
  If $\LND{E_1}{E_2}$ and $\ContStepsTo{l}{E_1}{E_1'}$, then there is a $E_2'$ such that $\LND{E_1'}{E_2'}$ and $\ContStepsTo{l}{E_2}{E_2'}$.
\end{thm}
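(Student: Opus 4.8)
The plan is to argue by induction on the derivation of $\ContStepsTo{l}{E_1}{E_1'}$, doing case analysis on the final rule and, in each case, inverting the hypothesis $\LND{E_1}{E_2}$ to pin down the shape of $E_2$. The step rule fixes the top-level constructor of $E_1$, and since $\LNDRel$ is generated by congruence rules indexed by exactly that constructor—plus the two extra rules relating the single-branch allow-choice forms to the two-branch one—inversion leaves only a small, predictable set of possibilities for $E_2$. Throughout, the label $l$ must be reproduced verbatim on the $E_2$ side; I will check that this is automatic in each case.

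First I would dispatch the structural cases. For the genuinely contextual rules \textsc{LetRetArg}, \textsc{AppLocalFun}, \textsc{AppGlobalFun}, and \textsc{AppGlobalArg}, the step comes from a sub-step $\ContStepsTo{l'}{E_a}{E_a'}$; inversion of $\LNDRel$ supplies a matching subprogram $E_b$ with $\LND{E_a}{E_b}$, the inductive hypothesis gives $E_b'$ with $\ContStepsTo{l'}{E_b}{E_b'}$ and $\LND{E_a'}{E_b'}$, and reassembly closes the goal; the composite labels $\FunLabel(l')$ and $\ArgLabel(l')$ agree because they are functorial in $l'$. For the rules that merely reduce a local message in a fixed context—\textsc{SendE}, \textsc{IfE}, \textsc{AppLocalArg}—inversion forces the same constructor with $\LNDRel$-related pieces, $E_2$ takes the identical $\TauLabel$-step, and re-applying the congruence rule yields $\LND{E_1'}{E_2'}$ with no appeal to the inductive hypothesis. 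The prefix-consuming rules \textsc{SendV}, \textsc{RecvV}, and \textsc{Choose} simply expose the continuation, whose $\LNDRel$-relation is handed back directly (or, for \textsc{RecvV}, through the compatibility of $\LNDRel$ with local substitution).

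The interesting work is in the allow-choice and $\beta$-reduction cases. When $E_1 = \AllowChoiceL{\GenericLoc}{E_a}$ steps to $E_a$ under label $\AllowChoiceLabel{\GenericLoc}{\LChoice}$, inverting $\LND{E_1}{E_2}$ admits exactly two shapes: $E_2 = \AllowChoiceL{\GenericLoc}{E_b}$ with $\LND{E_a}{E_b}$, or $E_2 = \AllowChoiceLR{\GenericLoc}{E_{b,1}}{E_{b,2}}$ with $\LND{E_a}{E_{b,1}}$. In both cases $E_2$ fires its left branch under the very same label $\AllowChoiceLabel{\GenericLoc}{\LChoice}$—via the single-branch rule in the first case and the two-branch \textsc{AllowChoiceL} rule in the second—landing in a target that sits $\LNDRel$-above $E_a$; the symmetric analysis handles $\AllowChoiceR$ and the fully two-branch $\AllowChoiceLR$. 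For the $\beta$-rules \textsc{AppLocal}, \textsc{AppGlobal}, and \textsc{LetRet}, the reducible head of $E_1$ is a function (respectively a $\Ret(v)$), and here I lean on the fact that in $\LNDRel$ functions and returned values are related only to themselves: inversion therefore forces the corresponding position in $E_2$ to be syntactically identical, so $E_2$ performs the same $\SyncTauLabel$-step. The residual obligation $\LND{E_1'}{E_2'}$ then reduces either to reflexivity of $\LNDRel$ or to its compatibility with substitution, both available from the preceding properties of the relation.

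The main obstacle I anticipate is precisely these $\beta$-reduction cases. If functions were permitted to stand $\LNDRel$-below strictly more-nondeterministic functions, then reducing on the two sides could substitute genuinely different bodies and destroy the relationship between the reducts, and no amount of bookkeeping would repair it. The clause ``functions are only related to themselves'' in the definition of $\LNDRel$ is exactly what rescues the argument, so the correctness of the proof hinges on that design decision rather than on any delicate computation; once it is in place, every remaining case is routine inversion, one appeal to the inductive hypothesis or a substitution lemma, and reassembly.
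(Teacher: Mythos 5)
Your proposal is correct and follows essentially the same route as the paper's proof (which is mechanized in Coq rather than written out in prose): induction on the derivation of the step, inversion of $\LNDRel$ in each case, with the single-branch/two-branch allow-choice rules driving the interesting cases and the fact that functions and returns are $\LNDRel$-related only to themselves, plus compatibility of $\LNDRel$ with substitution, closing the $\beta$- and communication cases. The only detail worth making explicit is that in the \textsc{AppGlobal} case you additionally need the paper's stated property that a control-language \emph{value} is $\LNDRel$-related only to itself, so that the argument position on the $E_2$ side is literally $V$ and the $\beta$-rule can fire there---a fact you invoke only implicitly via the ``preceding properties of the relation.''
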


When a more-nondeterministic expression takes a step, it may be taking advantage of the extra nondeterminism to take a step not available to the original.
We can take advantage of the fact that our operational semantics is defined via a labeled-transition system to lift steps which do not resolve nondeterminism.
In order to turn this into a theorem, we need to determine when a label may reduce an external choice.
We do this with the predicate $\IsChoiceLabel{\GenChoice}(l)$, which we define as follows:
\begin{mathpar}
  \infer{ }{\IsChoiceLabel{\GenChoice}(\AllowChoiceLabel{\GenericLoc}{\GenChoice})} \and
  \infer{\IsChoiceLabel{\GenChoice}(l)}{\IsChoiceLabel{\GenChoice}(\FunLabel(l))} \and
  \infer{\IsChoiceLabel{\GenChoice}(l)}{\IsChoiceLabel{\GenChoice}(\ArgLabel(l))} \and
\end{mathpar}

With the definition of $\IsChoiceLabel{\GenChoice}(-)$ in hand, we can prove the following theorem:
\begin{thm}[Lowering Steps Across~\LNDRel]
  \label{thm:lower-local}
  If $\LND{E_1}{E_2}$ and $\ContStepsTo{l}{E_2}{E_2'}$ where $\lnot\IsChoiceLabel{d}(l)$, then there is a $E_1'$ such that $\LND{E_1'}{E_2'}$ and $\ContStepsTo{l}{E_1}{E_1'}$.
\end{thm}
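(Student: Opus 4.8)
The plan is to proceed by induction on the derivation of $\LND{E_1}{E_2}$, inverting the step $\ContStepsTo{l}{E_2}{E_2'}$ in each case. Since $\LNDRel$ is defined structurally, induction on it exposes the top-level shape of both $E_1$ and $E_2$ simultaneously, which is exactly what is needed to produce a matching reduction on the less-nondeterministic side. The side condition enters through the observation that $\IsChoiceLabel{\GenChoice}(\FunLabel(l)) \iff \IsChoiceLabel{\GenChoice}(l) \iff \IsChoiceLabel{\GenChoice}(\ArgLabel(l))$, immediate from the definition, so whenever a step is derived by a \textsc{Fun}- or \textsc{Arg}-wrapping rule the side condition descends unchanged to the premise and the induction hypothesis applies. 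It is worth recording why the side condition is needed at all: unlike the dual Theorem~\ref{thm:lift-local}, here $E_2$ has \emph{more} nondeterminism than $E_1$, and it may exercise a branch that $E_1$ simply does not offer; the hypothesis $\lnot\IsChoiceLabel{d}(l)$ is precisely what rules out those steps.

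For the congruence rules of $\LNDRel$ the argument is routine. When the two sides share the same head constructor and the step acts at that constructor---\textsc{RetE}, \textsc{IfE}, \textsc{IfTrue}, \textsc{IfFalse}, \textsc{SendV}, \textsc{RecvV}, \textsc{Choose}, and \textsc{AppLocalArg}---the same rule fires on $E_1$, and the required relation between the results follows either by reflexivity of $\LNDRel$ or by reassembling the congruence rule around the related sub-results. When instead the step acts inside a subterm (\textsc{LetRetArg}, \textsc{AppLocalFun}, \textsc{AppGlobalFun}, \textsc{AppGlobalArg}), the label sheds its \textsc{Fun}/\textsc{Arg} tag, the side condition passes to the inner label, and the induction hypothesis yields a matching inner step that we rewrap. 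The cases where $E_2$ is an external-choice construct---one-sided or two-sided---are all vacuous: such a term can only fire \textsc{AllowChoiceL} or \textsc{AllowChoiceR}, whose labels $\AllowChoiceLabel{\GenericLoc}{\LChoice}$ and $\AllowChoiceLabel{\GenericLoc}{\RChoice}$ satisfy $\IsChoiceLabel{\LChoice}$ and $\IsChoiceLabel{\RChoice}$ respectively, contradicting the hypothesis. Likewise the leaf cases ($X$, $\Unit$, and function values) admit no step.

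The real work is in the three $\beta$-reduction rules \textsc{LetRet}, \textsc{AppLocal}, and \textsc{AppGlobal}, all labelled $\SyncTauLabel$ and hence admissible. Here the step on $E_2$ requires its redex to be fully formed---the relevant subterm of $E_2$ must be $\Ret(v)$, a local function value, or a global function value applied to a value---and we must show that $E_1$'s corresponding subterm is \emph{identical}, so that the very same reduction fires. This is exactly where the design choice that functions and $\Ret$ are related only to themselves pays off: inverting $\LND{A_1}{A_2}$ with $A_2$ in one of these value forms forces $A_1 = A_2$. For \textsc{AppGlobal} we additionally need the argument to match, which follows from the property (Section~\ref{sec:addit-prop-lndr}) that if $\LND{E_1}{E_2}$ and either side is a value then $E_1 = E_2$. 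Once the reduction fires on $E_1$, relating the contracta uses compatibility of $\LNDRel$ under local and global substitution together with reflexivity: for \textsc{LetRet} we push $\LND{A_{1,2}}{A_{2,2}}$ through $\SingleSubst{x}{v}$, while for \textsc{AppLocal} and \textsc{AppGlobal} the function bodies are literally equal, so reflexivity suffices.

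The main obstacle, then, is not any delicate calculation but arranging these $\beta$-cases so that inversion of $\LNDRel$ delivers syntactic equality of the redices. I expect this to be the crux: the theorem would fail if $\LNDRel$ related a function to anything other than itself, since then $E_1$ might be unable to reproduce the $\beta$-step. The restriction recorded in Section~\ref{sec:less-nond-relat}---functions related only to themselves---is precisely the invariant that makes lowering provable, and the care lies in verifying that each $\beta$-case exploits it correctly and that the substitution lemmas of Section~\ref{sec:addit-prop-lndr} apply as stated.
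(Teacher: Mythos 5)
Your proof is correct and takes essentially the approach underlying the paper's own (Coq-mechanized, prose-free) treatment of this theorem: induction on the derivation of $\LND{E_1}{E_2}$ with inversion of the step, the external-choice cases dismissed via the label hypothesis, and the $\beta$-cases (\textsc{LetRet}, \textsc{AppLocal}, \textsc{AppGlobal}) closed by exactly the invariants the paper records in Appendix~\ref{sec:addit-prop-lndr} and~\ref{sec:less-nond-relat}---functions and $\Ret$ related only to themselves, values related only to equal values, reflexivity, and substitution compatibility. One minor misattribution: the \textsc{RecvV} case (and likewise \textsc{SendE}, which your enumeration omits) produces substitution instances $\Subst{E_i}{\SingleSubst{x}{v}}$ of the related continuations, so it needs the local-substitution lemma rather than mere congruence reassembly or reflexivity, but since you invoke that lemma anyway for \textsc{LetRet}, the argument goes through unchanged.
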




\end{document}  
